\documentclass[final,11pt]{article}
\usepackage[colorlinks=true,linkcolor=blue,citecolor=magenta]{hyperref}
\usepackage{amssymb,amsmath,amsthm}
\usepackage{cite}
\usepackage[utf8]{inputenc}
\usepackage{enumerate}
\usepackage{multicol}
\usepackage{lmodern}
\usepackage{microtype}
\usepackage[conditional,light,first,bottomafter]{draftcopy}
\draftcopyName{DRAFT\space\today}{130}
\draftcopySetScale{65}
\usepackage[letterpaper,hmargin=2.5cm,vmargin=2.8cm]{geometry}
\geometry{foot=0.9cm}

\geometry{pdftex}
\usepackage{setspace}
\singlespacing

\makeatletter
\renewcommand{\section}{\@startsection%
{section}%
{1}%
{0em}%
{1.7em}%
{1.2em}%
{\normalfont\large\centering\bfseries}}
\renewcommand{\@seccntformat}[1]%
{\csname the#1\endcsname.\hspace{0.5em}}
\makeatother


\numberwithin{equation}{section}
\newtheorem{theorem}{Theorem}[section]
\newtheorem{proposition}[theorem]{Proposition}
\newtheorem{lemma}[theorem]{Lemma}
\newtheorem{corollary}[theorem]{Corollary}
\theoremstyle{definition}
\newtheorem{definition}[theorem]{Definition}
\newtheorem{remark}[theorem]{Remark}

\newcommand{\abs}[1]{\left|#1\right|}

\newcommand{\inner}[2]{\left\langle#1,#2\right\rangle}
\newcommand{\cc}[1]{\overline{#1}}

\newcommand{\nats}{\mathbb{N}}

\newcommand{\bea}{\begin{eqnarray}}
\newcommand{\eea}{\end{eqnarray}}
\newcommand{\beao}{\begin{eqnarray*}}
\newcommand{\eeao}{\end{eqnarray*}}

\newcommand{\llb}{\left\lbrace}
\newcommand{\rrb}{\right\rbrace}
\newcommand\R{{\mathbb R}}

\newcommand\N{{\mathbb N}}

\newcommand\C{{\mathbb C}}
\newcommand\tE{{\mathbb T}}
\renewcommand{\H}{\mathcal{H}}
\newcommand{\T}{{T}}
\newcommand{\HH}{{\mathcal{H}\oplus\mathcal{H}}}
\newcommand\ip[2]{\langle {#1},{#2} \rangle}

\newcommand\ipa[2]{\left\langle {#1},{#2} \right\rangle}
\newcommand\no[1]{\| {#1} \|}

\newcommand\noa[1]{\left\| {#1} \right\|}

\newcommand\mm[1]{{(#1)^{\perp}\oplus(#1)^{\perp}}}

\newcommand\CA[2]{\pmb{\mathsf{ C}}_{#1}({#2})}
\newcommand\CZ[2]{\pmb{\mathsf{ Z}}_{#1}({#2})}

\newcommand\pE[1]{{#1}_{\tiny{\odot}}}
\newcommand\Nk[2]{\pmb{\mathsf{N}}_{#1}({#2})}

\newcommand\oP[2]{{#1}\oplus {#2} }
\newcommand\oM[2]{{#1}\ominus {#2} }
\newcommand\cA[1]{\mathcal {#1}}
\newcommand\mF[1]{\pmb{\mathsf{#1}}}

\newcommand\rE[1]{_{\upharpoonright_{#1}}}
\newcommand\gA[2]{#1^\#(#2)}
\newcommand\vE[2]{{\begin{pmatrix}{#1}\\{#2}\end{pmatrix}}}
\DeclareMathOperator{\im}{Im}
\DeclareMathOperator{\re}{Re}
\DeclareMathOperator{\dom}{dom}
\DeclareMathOperator{\ran}{ran}
\DeclareMathOperator{\mul}{mul}

\DeclareMathOperator{\Span}{span}

\DeclareMathOperator{\Assoc}{Assoc}

\makeatletter
\newcommand{\mathleft}{\@fleqntrue\@mathmargin0pt}
\newcommand{\mathcenter}{\@fleqnfalse}
\makeatother

\begin{document}
\begin{titlepage}
\title{Dissipative extension theory for linear relations
\footnotetext{%
Mathematics Subject Classification(2010):
47A06  
47A45  
47B44  
}
\footnotetext{%
Keywords:
Closed linear relations;
Dissipative extensions;
Nondensely defined operators.
}
\hspace{-8mm}
\thanks{%
Research partially supported by SEP-CONACYT CB-2015 254062 and
UNAM-DGAPA-PAPIIT IN110818
}%
}
\author{
\textbf{Josu\'e I. Rios-Cangas}
\\
Departamento de F\'{i}sica Matem\'{a}tica\\
Instituto de Investigaciones en Matem\'aticas Aplicadas y en Sistemas\\
Universidad Nacional Aut\'onoma de M\'exico\\
C.P. 04510, Ciudad de M\'exico\\
\texttt{jottsmok@gmail.com}
\\[4mm]
\textbf{Luis O. Silva}
\\
Departamento de F\'{i}sica Matem\'{a}tica\\
Instituto de Investigaciones en Matem\'aticas Aplicadas y en Sistemas\\
Universidad Nacional Aut\'onoma de M\'exico\\
C.P. 04510, Ciudad de M\'exico\\
\texttt{silva@iimas.unam.mx}
}
\date{}
\maketitle
\vspace{-4mm}
\begin{center}
\begin{minipage}{5in}
  \centerline{{\bf Abstract}} \bigskip
\large
 This work is devoted to
  dissipative extension theory for dissipative linear relations. We
  give a self-consistent theory of extensions by generalizing the
  theory on symmetric extensions of symmetric operators. Several
  results on the properties of dissipative relations are
  proven. Finally, we deal with the spectral properties of dissipative
  extensions of dissipative relations and provide results concerning
  particular realizations of this general setting.
\normalsize
\end{minipage}
\end{center}
\thispagestyle{empty}
\end{titlepage}

\section{Introduction}
\label{sec:intro}

This paper deals with the theory of dissipative extensions of
dissipative relations and it can be seen as a generalization of the
classical von Neumann theory of symmetric extensions of symmetric
operators \cite{MR1512569}.  The theory is presented thoroughly and the exposition
goes along the lines of the classical texts on the von Neumann theory
(see for instance \cite[Chap.\, 7]{MR1255973},
\cite[Chap.\,4]{MR1192782} \cite[Chap.\,8]{MR566954}), but in a more
general setting.

In this work we obtain several results in the theory of dissipative
relations. Some of them can surely be considered mathematical folklore
for which, to the best of our knowledge, there were no proofs in the
literature prior to this work. It is also worth remarking that the
proofs of some classical results on dissipative operators, as well as
the ones on the particular instances of symmetric and selfadjoint
operators, are simplified and streamlined when these proofs are
considered in the more general framework of dissipative relations.

Our motivation for studying relations and its extensions comes from
their use in the boundary triplet theory
\cite{MR1087947,MR1318517,MR2486805,MR1154792} and quasi boundary
triplet theory \cite{MR2289696,MR3050306} for extensions of
symmetric operators; a panoramic account on boundary triplets is in
\cite[Chap.\,14]{MR2953553}. The theory of relations is also used in
studying extensions of nondensely defined symmetric operators (see for
instance \cite{MR0322568} and cf. \cite{MR0024061}). We remark that
the examples given in Section~\ref{sec:applications} are related to
this kind of applications. Relations are also relevant in other
contexts; for instance in the theory of canonical systems (see
\cite{MR1759823,MR1761504}).

It is not a coincidence that von Neumann was not only the pioneer in
extension theory of operators, but also in the theory of linear
relations. Indeed, the modern notion of linear relation goes back to
\cite{MR0034514}. The theory was later developed in
\cite{MR0123188,MR0477855,MR0361889}. More recent accounts on
the matter can be found in \cite{MR1631548,MR2327982}.
Symmetric extension theory of symmetric
relations was first developed in \cite{MR0361889} (cf. \cite{MR3178945,MR2093073}).
Various aspects of the theory of symmetric relations were studied in
\cite{MR0463964,MR1631548}. The perturbation theory of linear
relations is dealt with in \cite{MR3052575,MR1430397,MR3255523,MR2481074}.

The theory of dissipative operators has its roots in the theory of
contractions for which a seminal work is Sz. Nagy's
\cite{MR0058128}. Contractive and dissipative operators are related
via the Cayley transform (see \cite[Chap.\,4, Sec.\,4]{MR2760647}).
One of the first works on dissipative operators is due to Philips
\cite{MR0104919}. The development of Sz. Nagy and Foia\c{s}'s theory
for dissipative operators was done in \cite{MR0385642,MR0365199} and
later generalized in \cite{MR573902,MR513172,MR0500225}. Dissipative
extension theory was formulated in \cite{MR0104919}.

The theory presented here generalizes previous results in two
directions. We consider relations which are dissipative extensions of
dissipative relations. This general setting not only covers all
earlier results, but also shed light on the peculiarities of
dissipative relations that may be important in further developments
and applications (as for instance in the context of boundary triplets
for partial differential equations where the deficiency indices are
infinite). Dissipative relations appear in applications in
\cite{MR1318517,MR2486805} and are studied in
\cite{MR0361889,MR3057107}.

The paper is organized as follows. In
Section~\ref{sec:linear-relations}, we give a general account on the
theory of closed linear relations. Here we lay out the notation and
introduce preparatory facts. Section~\ref{sec:dissipative-relations}
is concerned with the theory of dissipative relations. In this
section, we extend some results on the  characterization of dissipative
operators to the case of relations (Theorems~\ref{codrih} and
\ref{inadmco}, and
Proposition~\ref{inadco}). Theorems~\ref{thm:sum-of-dissipative-dissipative}
and \ref{psewpos} give criteria for maximal dissipativeness for sums
of dissipative relations. Proposition
\ref{lsimresin} allows us to study the spectrum and the deficiency
index of a dissipative relation in terms of the spectrum and the
deficiency index of the operator part of it. In
Section~\ref{sec:dissipative-extensions}, we deal with dissipative
extensions of dissipative relations. Here, instead of using the Cayley
transform for relations (see \cite[Sec.\,2]{MR0361889}), we recur to
its modern counterpart, the $Z$-transform, introduced in
\cite{MR2093073}. Theorem~\ref{Fovon01d} provides the generalization
of the von Neumann formula for which Corollary~\ref{extresind0} and
Propositions~\ref{mdrofsrn} and \ref{mdrofsrn0} are related
results. The spectral properties of dissipative relations are dealt
with in Proposition~\ref{lainesp} and Corollary~\ref{lainesp0}.
Finally, Section~\ref{sec:applications} presents examples of
dissipative extensions for a Jacobi operator and the operator of
multiplication in a de Branges space in a general setting including
the case when they are not densely defined.

\section{Spectral theory of closed linear relations}
\label{sec:linear-relations}
Let $\H$
be a separable Hilbert space with inner product $\inner{\cdot}{\cdot}$
being antilinear in the first argument. Consider the orthogonal sum
of $\H$
with itself, i.\,e. $\HH$ (see \cite[Chap.\,2
Sec.\,3.3]{MR1192782}),
and denote an arbitrary element of it as a pair $\vE{f}{g}$
with $f,g\in\H$. Thus,
\begin{equation}\label{ipoHH}
\ipa{\vE{f_1}{g_1}}{\vE{f_2}{g_2}}=\ip{f_1}{f_2}+\ip{g_1}{g_2}.
\end{equation}
We shall use the norm
$$ \noa{\vE fg}=\no f+\no g,$$
which is equivalent to the norm
\bea\label{noaHH}
\noa{\vE fg}^{2}=\no f^{2}+\no g^{2}
\eea
generated by the inner product \eqref{ipoHH}.

Define the operators $\mF U,\ \mF W$ acting on $\HH$ by the rules
\begin{equation}
\label{eq:UW-properties}
\mF U\vE fg =\vE gf,\qquad
\mF W\vE fg =\vE{-g}f.
\end{equation}
One verifies that
$$\mF U^2=\mF I=-\mF W^2,\hskip3mm \mF U\mF W=-\mF W\mF U\,,$$
where $\mF I$ is the identity operator in $\HH$.
Moreover, for any linear subset $\cA G$ of $\HH$ the following holds
\begin{equation}
  \label{eq:UW-properties2}
  \begin{split}
    (\mF W\cA G)^{\perp}=\mF W(\cA G^{\perp})&\qquad\overline{\mF W\cA
      G}=\mF W\overline{\cA G}\\
(\mF U\cA G)^{\perp}=\mF U(\cA G^{\perp})&\qquad\overline{\mF U\cA G}=\mF U\overline{\cA G}
  \end{split}
\end{equation}

Throughout this paper, any linear set $\T$ in $\HH$ is called a linear
relation or simply a relation. The graph of a linear operator is a
relation, and thus any operator can be seen as a particular instance
of a relation. Not all relations are graphs of operators since for a
linear relation $\cA G$ to be the graph of an operator, it is
necessary and sufficient that
\begin{equation}
  \label{eq:cond-operator}
  \llb\vE f{g}\in \cA G\ : \ f=0\rrb=\llb\vE 00 \rrb.
\end{equation}

A closed relation is a subspace (closed linear set) in $\HH$. If a closed
relation is an operator, then the operator is closed \cite[Chap.\,3,
Sec.\,2]{MR1192782}.

For a given relation $\T$, define the sets
\begin{equation}
  \label{eq:dom-rank-ker-mul}
  \begin{split}
  \dom \T &:=\llb f\in \H\,:\ \vE fg\in T\rrb\quad
  \ran\T:=\llb g\in \H\,:\ \vE fg\in T\rrb\\
  \ker\T&:=\llb f\in \H\,:\ \vE f0\in T\rrb\quad
  \mul\T:=\llb g\in \H\,:\ \vE 0g\in T\rrb
  \end{split}
\end{equation}
which are linear sets in $\H$. Moreover, if $\T$
is closed, then $\ker\T$ and $\mul\T$ are subspaces of $\H$. According to
\eqref{eq:cond-operator} a relation is an operator if and only
if $\mul\T=\{0\}$.

Let $T$ and $S$ be relations, and $\zeta\in \C$. Consider the relations:
\begin{equation}
  \label{eq:sum-scalar-multiplication}
  \begin{split}
T+S&:=\llb\vE f{g+h}\ :\ \vE fg\in T,\ \ \vE fh\in S\rrb\quad
\zeta T:=\llb \vE f{\zeta g}\ :\ \vE fg\in T\rrb\\
ST&:=\llb \vE fk\ :\ \vE fg\in T,\ \ \vE gk\in S\rrb\qquad\T^{-1}:=\mF
U\T\,.
\end{split}
\end{equation}
Note that $\T^{-1}$ is the inverse of the relation $\T$. Clearly,
\begin{equation}
  \label{eq:inverse-sets-relations}
  \begin{split}
\dom \T^{-1}=\ran \T&\qquad\ran \T^{-1}=\dom \T\\ \ker \T^{-1}=\mul
\T &\qquad\mul\T^{-1}=\ker\T\\
(TS)^{-1}&=S^{-1}T^{-1}\,.
 \end{split}
\end{equation}
We also deal with the relations:
\begin{equation}
  \label{eq:sets-relations-operations}
  \begin{split}
T\dotplus S&:=\llb \vE {f+h}{g+k}\ :\ \vE fg\in T,\ \vE hk\in S,
\mbox{ and  }
T\cap S=\llb\vE00\rrb\rrb\,.\\
 \oP TS&:=T\dotplus S\,, \mbox{ with }\, T\subset S^\perp. \\
 \oM TS &:= T\cap S^\perp\,.
\end{split}
\end{equation}
Clearly, $T^\perp$ is a closed relation.
Note that, in the last two definitions, we consider the
orthogonal sum and difference in $\HH$. It will cause no confusion to
use the same symbol $\oplus$ when referring to subspaces of a Hilbert
space and when forming the orthogonal sum of Hilbert spaces.

Define
\begin{align*}
 \T^*:=\llb\vE hk\in \HH\ :\ \ip kf=\ip hg,\ \ \forall \vE fg\in \T\rrb.
\end{align*}
$\T^*$ is the adjoint of $T$ and has the following properties:
\begin{align}  \label{epoar}
\T^*&=(\mF W\T)^{\perp},&S\subset  T&\Rightarrow T^*\subset S^*,\nonumber\\
\T^{**}&=\overline T,& (\alpha\T)^*&=\overline{\alpha}\T^*,\,\mbox{ with } \alpha\neq0,\\
(T^*)^{-1}&=(T^{-1})^*,&\ker \T^*&=(\ran \T)^{\perp}.\nonumber
\end{align}
The last item above implies
\begin{align}\label{poHs}
\H=\oP{\overline{\ran \T}}{\ker \T^*}.
\end{align}
\begin{proposition}\label{arens}
If $T$ is a closed linear relation, then $\mul T=(\dom T^*)^\perp.$
\end{proposition}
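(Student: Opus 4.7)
The plan is to prove the equality by two inclusions, unpacking the definition of the adjoint and exploiting the fact that $T$, being closed, coincides with its double adjoint $T^{**}$ (as recorded in \eqref{epoar}).

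For the inclusion $\mul T \subseteq (\dom T^*)^\perp$, I would take $g \in \mul T$, so that $\vE{0}{g} \in T$. Applying the defining relation of $T^*$, for any $\vE{h}{k} \in T^*$ one has $\ip{h}{g} = \ip{k}{0} = 0$, which shows that $g$ is orthogonal to every element of $\dom T^*$.

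For the reverse inclusion $(\dom T^*)^\perp \subseteq \mul T$, let $g \perp \dom T^*$. Then for every $\vE{h}{k} \in T^*$, $\ip{g}{h} = 0 = \ip{0}{k}$, and by the very definition of the adjoint (now applied to $T^*$) this places $\vE{0}{g}$ in $T^{**}$. Since $T$ is closed, $T^{**} = T$ by \eqref{epoar}, and therefore $g \in \mul T$.

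There is no real obstacle: the argument is essentially a two-line symmetry in the adjoint pairing. The only nontrivial ingredient is the identity $T = T^{**}$, which is exactly what closedness of $T$ provides. An alternative, equally short route would be to feed the relation $(T^{-1})^*$ into \eqref{poHs}: using $\dom T^* = \ran (T^*)^{-1} = \ran (T^{-1})^*$ from \eqref{eq:inverse-sets-relations} and \eqref{epoar}, together with the closedness of $T^{-1} = \mF U T$ (which follows from \eqref{eq:UW-properties2}), one obtains the orthogonal decomposition $\H = \overline{\dom T^*} \oplus \mul T$, and the claim follows by taking orthogonal complements.
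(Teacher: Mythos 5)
Your proof is correct. Your main argument --- checking both inclusions directly against the defining pairing of the adjoint and invoking $T^{**}=\overline{T}=T$ for the reverse inclusion --- is a more elementary, element-wise route than the paper's, which instead runs the one-line chain $\mul T=\ker T^{-1}=\ker\,[(T^{*})^{-1}]^{*}=[\ran (T^{*})^{-1}]^{\perp}=(\dom T^*)^\perp$ using \eqref{eq:inverse-sets-relations} and \eqref{epoar}. The two are close in substance: the identity $\ker S^{*}=(\ran S)^{\perp}$ that powers the paper's computation is itself proved by exactly the kind of pairing argument you write out explicitly, and closedness of $T$ enters both proofs through $T^{**}=T$ (in the paper it is hidden in the step $[(T^{*})^{-1}]^{*}=T^{-1}$). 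Your sketched alternative --- applying \eqref{poHs} to $(T^{-1})^{*}$ to get $\H=\overline{\dom T^{*}}\oplus\mul T$ and then taking orthogonal complements --- is essentially the paper's proof recast as an orthogonal decomposition; note only that extracting $\mul T=(\dom T^*)^\perp$ from that decomposition uses that $\mul T$ is closed, which is again supplied by the closedness of $T$. What your direct argument buys is transparency and independence from the bookkeeping identities for inverses and adjoints; what the paper's version buys is brevity and reuse of formulas already recorded in \eqref{eq:inverse-sets-relations} and \eqref{epoar}.
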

\begin{proof}
It follows from \eqref{eq:inverse-sets-relations} and \eqref{epoar}
that
\begin{equation*}
  \mul T=\ker T^{-1}=\ker [({T^{*}})^{-1}]^{*}=[\ran ({T^{*}})^{-1} ]^{\perp}=(\dom T^*)^\perp\,.
\end{equation*}
\end{proof}

For the linear relations $S$ and $T,$ one directly verifies
\begin{equation}
  \label{eq:inclusion-sum-adjoints-adjoint-sum}
S^{*}+T^{*}\subset(S+T)^{*}\,.
\end{equation}
The conditions for the equality in the above inclusion are given by
the next assertion, which follows from the proof of \cite[Thm.\,3.41]{MR0123188}.
\begin{proposition}\label{linadj}
  If the domain of $\T$ is in the domain of $S$ and the domain of
  $(T+S)^*$ is in the domain of $S^*$, then $(S+\T)^ *=S^*+\T^*$.
\end{proposition}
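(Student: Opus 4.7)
The plan is to show the nontrivial inclusion $(S+T)^*\subset S^*+T^*$, since (2.10) already provides the reverse inclusion $S^*+T^*\subset(S+T)^*$ under no hypotheses at all.

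First I would take an arbitrary element $\vE{h}{k}\in(S+T)^*$. By definition, $h\in\dom(S+T)^*$, so the second hypothesis puts $h\in\dom S^*$; thus there exists at least one $k'\in\H$ with $\vE{h}{k'}\in S^*$. The idea is then to show that $\vE{h}{k-k'}\in T^*$: if this is established, the decomposition
\begin{equation*}
\vE{h}{k}=\vE{h}{k'+(k-k')}
\end{equation*}
together with the definition of $S^*+T^*$ in \eqref{eq:sum-scalar-multiplication} gives $\vE{h}{k}\in S^*+T^*$.

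To check that $\vE{h}{k-k'}\in T^*$ I would proceed directly from the definition of the adjoint. Fix $\vE{f}{g}\in T$. By the first hypothesis, $f\in\dom T\subset\dom S$, so there exists $g'\in\H$ with $\vE{f}{g'}\in S$. The definition of the sum of relations in \eqref{eq:sum-scalar-multiplication} yields $\vE{f}{g+g'}\in S+T$, and since $\vE{h}{k}\in(S+T)^*$ we obtain $\ip{k}{f}=\ip{h}{g+g'}$. Because $\vE{h}{k'}\in S^*$ one also has $\ip{k'}{f}=\ip{h}{g'}$; subtracting yields $\ip{k-k'}{f}=\ip{h}{g}$. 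As $\vE{f}{g}\in T$ was arbitrary, this is exactly the condition $\vE{h}{k-k'}\in T^*$.

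The only mildly delicate point is making sure that the hypothesis $\dom T\subset\dom S$ is really what is needed at the step where an element $\vE{f}{g}\in T$ is paired with some $\vE{f}{g'}\in S$ to produce an element of $S+T$; without this inclusion one cannot guarantee that enough elements of $S+T$ are available to test $\vE{h}{k}$ against. The hypothesis $\dom(S+T)^*\subset\dom S^*$ is used only once, to produce the initial $k'\in\H$ with $\vE{h}{k'}\in S^*$. Neither step requires closedness of $S$ or $T$, and the argument mirrors that of \cite[Thm.\,3.41]{MR0123188}.
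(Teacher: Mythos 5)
Your proof is correct and complete: the decomposition $k=k'+(k-k')$ with $k'$ supplied by the hypothesis $\dom(T+S)^*\subset\dom S^*$, followed by the verification that $\vE{h}{k-k'}\in T^*$ by testing against an arbitrary $\vE{f}{g}\in T$ (using $\dom T\subset\dom S$ to build the needed element of $S+T$), is exactly the standard argument behind Arens' Theorem~3.41, which is all the paper itself invokes in lieu of a written proof. No gaps; in particular you correctly use only additivity of the (antilinear-in-the-first-argument) inner product and never need closedness of $S$ or $T$.
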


We shall say that a relation $T$ is bounded if there exists $C>0$ such
that for all $\vE fg\in T$ it holds $\no g\leq C\,\no f$. It follows
from this definition that a bounded relation is a bounded operator.

\begin{remark}
  \label{rem:birman-stability-closedness}
  Repeating the proof of \cite[Thm.\,3.2.3]{MR1192782}, one verifies
  that if $T$ and $S$ are two closed relations such that $S$ is
  bounded, then $T+S$ is a closed relation.
\end{remark}

Define the quasi-regular set of $T$ by
\begin{equation*}
\hat\rho(T):=\{\zeta \in \C\ :\ (T-\zeta I)^{-1}\ \mbox{ is
  bounded}\}
\end{equation*}
As in the case of operators, the quasi-regular set is open.
It is a well-known fact, and a useful one, that a bounded
operator $T$ is closed if and only if its domain is closed.
\begin{proposition}\label{acot-cerr1}
  For every $\zeta\in \hat\rho(\T)$ it holds that $\ran (T-\zeta I)$
  is closed if and only if $T$ is closed.
\end{proposition}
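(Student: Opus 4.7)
The plan is to reduce the statement to the well-known fact (quoted in the paragraph preceding the proposition) that a bounded operator is closed if and only if its domain is closed, and then to chain this through the closedness-preserving operations of translation by $\zeta I$ and inversion.

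First I would observe that since $\zeta\in\hat\rho(T)$, the relation $(T-\zeta I)^{-1}$ is bounded and hence, by the remark right after the bounded-relation definition, is in fact a bounded operator. Its domain is $\ran(T-\zeta I)$ by \eqref{eq:inverse-sets-relations}. Applying the quoted fact, $(T-\zeta I)^{-1}$ is closed if and only if $\ran(T-\zeta I)$ is closed.

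Next I would argue that closedness is preserved under inversion and translation by a scalar. For inversion, note that $S^{-1}=\mF U S$ by \eqref{eq:sum-scalar-multiplication}, and $\mF U$ is an isometric bijection on $\HH$ (indeed $\mF U^{2}=\mF I$), so $S$ is closed iff $S^{-1}$ is closed; applying this to $S=T-\zeta I$ gives that $T-\zeta I$ is closed iff $(T-\zeta I)^{-1}$ is closed. For translation, the scalar relation $-\zeta I$ is bounded, so Remark~\ref{rem:birman-stability-closedness} shows that if $T$ is closed then $T-\zeta I$ is closed; for the converse, apply the same remark to $(T-\zeta I)+\zeta I$, noting that this sum equals $T$ (this can be checked directly from \eqref{eq:sum-scalar-multiplication}).

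Concatenating these equivalences yields
\begin{equation*}
T \text{ closed}\ \Longleftrightarrow\ T-\zeta I \text{ closed}\ \Longleftrightarrow\ (T-\zeta I)^{-1} \text{ closed}\ \Longleftrightarrow\ \ran(T-\zeta I) \text{ closed},
\end{equation*}
which is the required statement. I do not expect any serious obstacle: everything rests on the preparatory material already established. The only point that deserves a careful line is the verification that $(T-\zeta I)+\zeta I = T$ in the sense of \eqref{eq:sum-scalar-multiplication}, so that the backward implication in the translation step is genuine rather than merely an inclusion.
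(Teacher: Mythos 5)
Your proposal is correct and follows essentially the same route as the paper: both reduce to the quoted fact that a bounded operator is closed iff its domain is closed, combined with the preservation of closedness under inversion and under adding the bounded relation $\pm\zeta I$ (Remark~\ref{rem:birman-stability-closedness}). You merely spell out two steps the paper leaves implicit, namely that $\mF U$ being an isometric bijection makes inversion closedness-preserving and that $(T-\zeta I)+\zeta I=T$ so the translation step is a genuine equivalence.
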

\begin{proof}
  We suppose that $\ran (T-\zeta I)=\dom (T-\zeta I)^{-1}$ is closed,
  then $(T-\zeta I)^{-1}$ is closed, whence $T-\zeta I$ and $T$ are
  simultaneously closed (see
  Remark~\ref{rem:birman-stability-closedness}).  Conversely,
  closedness of $T$ implies both closedness of $(T-\zeta I)$ and
  $(T-\zeta I)^{-1}.$ Therefore $\dom (T-\zeta I)^{-1}$ is closed and
  by \eqref{eq:inverse-sets-relations} the assertion follows.
\end{proof}
Similar to what happens to operators, the deficiency index
\bea\label{ranTp} \eta_{\zeta}(T):=\dim \ran(T-\zeta I)^\perp \eea
is constant on each connected component of $ \hat\rho(T)$ when $T$ is closed
(cf. \cite[Chp.\,3 Sec.\,7 Lem.\,3]{MR1192782}).

For a linear relation $T$ and $\zeta\in \C$, we introduce the
deficiency space \bea\label{defspace} \Nk\zeta T:=\llb \vE f{\zeta
  f}\in T\rrb\,, \eea which is a closed bounded relation and $\dom {\Nk
  \zeta T}=\ker (T-\zeta I)$. Hence, it follows from \eqref{epoar} and
Proposition \ref{linadj} that, for $\zeta\in\hat\rho(T)$,
\beao \eta_{\zeta}(T)=\dim \Nk
{\overline\zeta }{T^{*}}.  \eeao
Both the deficiency index and the
deficiency space play a crucial role in the theory of dissipative
extensions of symmetric relations developed in Section
\ref{sec:dissipative-relations}.

If $\zeta\in\hat{\rho}(T)$ is such that $(T-\zeta I)^{-1}$ is in $\cA
B(\H)$ (the class of bounded operators defined on the whole
space $\H$), then $\zeta$ is in the regular set of $T$, which is denoted by
$\rho(T)$. When $T$ is closed, $\rho(T)$ is the union of the
connected components of $\hat{\rho}(T)$ in which
$\eta_{\zeta}(T)=0$. Note that whenever the relation is not closed, 
its regular set is empty.
\begin{proposition}
  \label{prop:resolvent-adjoint}
  Let $T$ be a closed linear relation. If $\zeta\in\rho(T)$, then
  $\cc{\zeta}\in\rho(T^*)$.
\end{proposition}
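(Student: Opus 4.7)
The plan is to recognize that $(T^*-\overline{\zeta}I)^{-1}$ is simply the adjoint of $(T-\zeta I)^{-1}$, and then appeal to the fact that the adjoint of an everywhere defined bounded operator is again an everywhere defined bounded operator. So the whole argument reduces to justifying the commutation of adjoint with inverse and with the shift by $\zeta I$.

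First I would establish the identity
\begin{equation*}
(T-\zeta I)^{*}=T^{*}-\overline{\zeta}I .
\end{equation*}
This is an application of Proposition~\ref{linadj} with $S=-\zeta I$. Since $-\zeta I$ is a bounded operator defined on all of $\H$, both hypotheses are automatic: $\dom T\subset\H=\dom(-\zeta I)$, and $\dom(T-\zeta I)^{*}\subset\H=\dom(-\zeta I)^{*}$. Thus $(T-\zeta I)^{*}=T^{*}+(-\zeta I)^{*}=T^{*}-\overline{\zeta}I$, the last equality using the scaling rule $(\alpha T)^{*}=\overline{\alpha}\T^{*}$ from \eqref{epoar}.

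Second, using $(T^{*})^{-1}=(T^{-1})^{*}$ from \eqref{epoar} applied to the relation $T-\zeta I$, I would write
\begin{equation*}
(T^{*}-\overline{\zeta}I)^{-1}=\bigl[(T-\zeta I)^{*}\bigr]^{-1}=\bigl[(T-\zeta I)^{-1}\bigr]^{*}.
\end{equation*}
Since $\zeta\in\rho(T)$, the right-hand factor $(T-\zeta I)^{-1}$ lies in $\cA B(\H)$. For an everywhere defined bounded operator the relation-adjoint coincides with the usual Hilbert-space adjoint (this follows directly from the definition of $T^{*}$, because the condition $\ip{k}{f}=\ip{h}{g}$ for all $\vE{f}{g}$ in the graph of a bounded operator $A$ on $\H$ pins down $k=A^{*}h$ uniquely). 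Hence $\bigl[(T-\zeta I)^{-1}\bigr]^{*}\in\cA B(\H)$, which is precisely the assertion $\overline{\zeta}\in\rho(T^{*})$.

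The only real obstacle is verifying the identity $(T-\zeta I)^{*}=T^{*}-\overline{\zeta}I$, since in general adjoints do not distribute over sums of relations (cf.~\eqref{eq:inclusion-sum-adjoints-adjoint-sum}); however the triviality of the hypotheses of Proposition~\ref{linadj} for a bounded everywhere defined summand handles this cleanly. Everything else is bookkeeping with the algebra of inverses and adjoints already collected in \eqref{eq:inverse-sets-relations} and \eqref{epoar}.
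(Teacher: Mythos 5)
Your proof is correct and follows essentially the same route as the paper: take the adjoint of the bounded, everywhere defined operator $(T-\zeta I)^{-1}$ and identify it with $(T^{*}-\overline{\zeta}I)^{-1}$ via the identities in \eqref{epoar}. Your explicit justification of $(T-\zeta I)^{*}=T^{*}-\overline{\zeta}I$ through Proposition~\ref{linadj} is a detail the paper leaves implicit, but it is the same argument.
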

\begin{proof}
  The fact that $\zeta\in\rho(T)$ means that
  $(T-\zeta I)^{-1}\in \cA B(\H)$. This implies that
  $\left[(T-\zeta I)^{-1}\right]^*\in \cA B(\H)$ (see \cite[Chap.\,2
  Sec.\,4]{MR1192782}) which yields
  $(T^*-\cc{\zeta}I)^{-1}\in \cA B(\H)$ by \eqref{epoar}.
\end{proof}
In analogy to the operator case, the spectrum of the linear relation
$T$, denoted $\sigma(T)$, and its spectral core, $\hat\sigma(T)$, are
the complements in $\C$ of $\rho(T)$ and $\hat\rho(T)$, respectively.
As in the case of operators, one has
\begin{equation*}
  \hat\sigma(T)=\sigma_{p}(T)\cup\sigma_{c}(T)\,,
\end{equation*}
where the point spectrum, $\sigma_p(T)$, and the continuous spectrum,
$\sigma_c(T)$, are given by
\begin{equation*}
\begin{split}
\sigma_p(T)&:=\{\zeta \in \C\ :\ \ker (T-\zeta I)\neq \{0\}\}\quad\text{
    and }\\
\sigma_c(T)&:=\{\zeta \in \C\ :\ \ran (T-\zeta I)\neq \overline{\ran
  (T-\zeta I)}\}\,.
\end{split}
\end{equation*}

For a closed relation $T,$ define
\begin{equation*}
T_\infty:=\llb\vE 0g\in T\rrb,\quad\pE T:=\oM {T}{T_\infty}.
\end{equation*}
Thus,
\begin{equation}
  \label{eq:decomposition-relation}
  T=\oP {\pE T}{T_\infty}\,.
\end{equation}
Note that $\ran T_{\infty}=\mul T$ and $\pE T$ is a closed
operator. Moreover, $\dom \pE T$ coincides with $\dom T$ and $\pE T\subset T$. We say
that $\pE T$ is the operator part of $T$ and $T_{\infty}$ is the
multivalued part of $T$.

Apart from (\ref{eq:decomposition-relation}), there are alternative
decompositions of linear relations, not necessarily closed, into its
\emph{regular} and \emph{singular} parts \cite{MR2327982}.

The decomposition (\ref{eq:decomposition-relation}) allows us to study
some spectral properties of the relation $T$ by means of its operator
part. The next results deal with this matter.
\begin{proposition}\label{espoper00}
 If $T$ is a closed relation, then $\hat\rho(T)\subset\hat\rho(\pE T).$
\end{proposition}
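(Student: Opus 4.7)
The plan is to exploit the inclusion $\pE T\subset T$ coming from the decomposition \eqref{eq:decomposition-relation} together with the elementary observation that both the operation $T\mapsto T-\zeta I$ and the operation $T\mapsto T^{-1}$ are inclusion-preserving on relations, and that the boundedness condition $\|g\|\le C\|f\|$ for all $\vE fg$ in a relation is automatically inherited by every subrelation (with the same constant $C$).

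First, I would fix $\zeta\in\hat\rho(T)$ and recall that by definition $(T-\zeta I)^{-1}$ is a bounded relation; let $C>0$ be such that $\|f\|\le C\|g\|$ for every $\vE gf\in(T-\zeta I)^{-1}$. Equivalently, $\|f\|\le C\|g\|$ whenever $\vE fg\in T-\zeta I$, i.e.\ whenever $f\in\dom T$ and $g=h-\zeta f$ for some $\vE fh\in T$.

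Second, I would note that because $\pE T\subset T$, the pointwise construction in \eqref{eq:sum-scalar-multiplication} yields $\pE T-\zeta I\subset T-\zeta I$, and then, by \eqref{eq:inverse-sets-relations}, $(\pE T-\zeta I)^{-1}\subset(T-\zeta I)^{-1}$. Consequently the bound $\|f\|\le C\|g\|$ established in the previous step restricts to every element $\vE gf\in(\pE T-\zeta I)^{-1}$, proving that $(\pE T-\zeta I)^{-1}$ is a bounded relation and hence $\zeta\in\hat\rho(\pE T)$.

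There is no serious obstacle in this argument; the only point that warrants a sentence of justification is the initial inclusion $\pE T-\zeta I\subset T-\zeta I$, which is immediate from $\pE T\subset T$ together with $\dom\pE T=\dom T$ (so that every element of $\pE T-\zeta I$ uses an $f$ for which $\vE f{-\zeta f}$ is a legitimate summand). Once this is in place, the rest is essentially a direct transcription of the definitions of $\hat\rho(\cdot)$ and of boundedness for relations.
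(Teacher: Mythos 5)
Your argument is correct and is essentially identical to the paper's own proof: both rest on the inclusion $(\pE T-\zeta I)^{-1}\subset(T-\zeta I)^{-1}$, which follows from $\pE T\subset T$, together with the fact that boundedness of a relation passes to subrelations. Your write-up merely spells out the inclusion-preservation steps in more detail.
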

\begin{proof}
  Observe that $(\pE T-\zeta I)^{-1}\subset( T-\zeta I)^{-1},$ for any
  $\zeta \in \C.$ If $\zeta\in \hat\rho(T)$, then $(T-\zeta
  I)^{-1}$ is bounded. Thus $(\pE T-\zeta I)^{-1}$ is bounded and
  $\zeta\in \hat\rho(\pE T).$
\end{proof}
The condition for the equality in the above result is given by the next assertion.
\begin{proposition}\label{espoper01}
  If $T$ is a closed relation such that $\dom T\subset(\mul
  T)^{\perp}$, then $\hat\rho(T)=\hat\rho(\pE T)$.
\end{proposition}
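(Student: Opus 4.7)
The plan is to establish the reverse inclusion $\hat\rho(\pE T)\subset\hat\rho(T)$, since Proposition~\ref{espoper00} already supplies the other direction without any hypothesis on $\dom T$. So fix $\zeta\in\hat\rho(\pE T)$, meaning there is a constant $C>0$ with $\no f\le C\,\no{(\pE T-\zeta I)f}$ for every $f\in\dom\pE T=\dom T$. I want to upgrade this to the analogous inequality for $T-\zeta I$.

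By the orthogonal decomposition \eqref{eq:decomposition-relation}, any element of $T-\zeta I$ has the form $\vE{f}{(\pE T-\zeta I)f+h}$ with $f\in\dom T$ and $h\in\mul T$. The key geometric observation is that $(\pE T-\zeta I)f$ lies in $(\mul T)^{\perp}$: indeed $\ran\pE T\subset(\mul T)^{\perp}$ because $\pE T\perp T_{\infty}$ in $\HH$ and $\ran T_{\infty}=\mul T$, while $\zeta f\in(\mul T)^{\perp}$ by the standing hypothesis $\dom T\subset(\mul T)^{\perp}$. Therefore $(\pE T-\zeta I)f\perp h$ and Pythagoras (with the norm \eqref{noaHH}) yields
\begin{equation*}
\no{(\pE T-\zeta I)f+h}^{2}=\no{(\pE T-\zeta I)f}^{2}+\no h^{2}\ge\no{(\pE T-\zeta I)f}^{2}.
\end{equation*}
Combining this with the bound coming from $\zeta\in\hat\rho(\pE T)$ gives $\no f\le C\,\no{(\pE T-\zeta I)f+h}$, which is exactly the boundedness of $(T-\zeta I)^{-1}$, so $\zeta\in\hat\rho(T)$.

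The only subtle point — and where the hypothesis $\dom T\subset(\mul T)^{\perp}$ is spent — is verifying the orthogonality needed for Pythagoras; everything else is a direct unpacking of the definitions in \eqref{eq:dom-rank-ker-mul} and \eqref{eq:sum-scalar-multiplication}. No deeper machinery is required.
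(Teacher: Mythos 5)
Your proof is correct and follows essentially the same route as the paper's: decompose an element of $T-\zeta I$ through the operator part and the multivalued part, use the hypothesis $\dom T\subset(\mul T)^{\perp}$ together with $\ran\pE T\perp\mul T$ to obtain the orthogonality needed for Pythagoras, and thereby transfer the bound from $(\pE T-\zeta I)^{-1}$ to $(T-\zeta I)^{-1}$. (Only a cosmetic quibble: the Pythagorean identity you invoke takes place in $\H$ itself, not with the graph norm \eqref{noaHH}.)
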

\begin{proof}
  It suffices to show that $(T-\zeta I)^{-1}$ is bounded when $\zeta
  \in \hat\rho(\pE T)$.

  If $\vE hk\in (T-\zeta I)^{-1}$, that is $\vE k{h+\zeta k}\in T$,
  then there exist $\vE{k}f\in \pE T$ and $\vE 0g\in T_\infty$ such
  that $\vE k{h+\zeta k}=\vE k{f+g}$. Thus \bea\label{eqaop0}
  h=f-\zeta k+g.  \eea Note that $\vE {f-\zeta k}k\in (\pE T-\zeta
  I)^{-1}$ and there exists $C>0$ such that \bea\label{eqaop1} \no
  k\leq C\,\no{f-\zeta k}.  \eea Since $f$ and $k$ are
  orthogonal to $g$, one has \bea\label{eqaop2} \no {f-\zeta
    k+g}^2=\no {f-\zeta k}^2+\no g^2.  \eea Combining
  \eqref{eqaop0}, \eqref{eqaop1}, and \eqref{eqaop2}, one obtains
\begin{align*}
 \no k^2&\leq C^2( \no{f-\zeta k}^2+\no g^2)\\
 &=C^2\no{f-\zeta k+g}^2\\&=C^2\no h^2.
\end{align*}
Therefore $\no k\leq C\,\no h$, which means that $(T-\zeta I)^{-1}$ is bounded.
\end{proof}

For the relations $T$ and $S$, define the relation $T_{S}$ in the
Hilbert space $\mm {\mul S}$ (with inner product inherited from $\HH$)
by \bea\label{redrel} T_{S}:=T\cap\mm{\mul S}.\eea The relation
$T_{S}$ is a linear relation.  Note that if $T$ is closed, then
$T_{S}$ is also closed and if $T$ is an operator, then $T_{S}$ is also
an operator in $(\mul S)^{\perp}$. Furthermore, one can verify that
$(T_{S})^{-1}=(T^{-1})_{S}$. In some cases, it is useful to consider
$T_{S}$ as a linear relation in $\HH$ and then $T_{S}\subset T$.
\begin{proposition}\label{opvsrc0} If $T$ is a closed relation, then
  $T_{T}=(\pE T)_{T}$ and, therefore, $T_{T}$ is a closed
  operator.\end{proposition}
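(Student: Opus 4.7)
The plan is to prove the two claimed statements in order: first the set equality $T_T = (\pE T)_T$, and then, as an easy consequence, closedness of the resulting operator. The inclusion $(\pE T)_T \subset T_T$ is immediate from $\pE T \subset T$ together with the definition \eqref{redrel}, so the substance is in the reverse inclusion.

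For the reverse inclusion, I would fix an arbitrary element $\vE fg \in T_T$, so that $\vE fg \in T$ with $f, g \in (\mul T)^\perp$. Using the orthogonal decomposition \eqref{eq:decomposition-relation}, write
\begin{equation*}
\vE fg = \vE f{g_1} + \vE 0{g_2}, \qquad \vE f{g_1} \in \pE T, \quad \vE 0{g_2} \in T_\infty,
\end{equation*}
with orthogonality in $\HH$. Computing the inner product \eqref{ipoHH} between these two summands gives $\ip{g_1}{g_2}=0$. Since $\vE 0{g_2} \in T_\infty$, we have $g_2 \in \ran T_\infty = \mul T$, while $g \in (\mul T)^\perp$ forces $\ip{g}{g_2}=0$. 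Combining these two facts, I would compute
\begin{equation*}
0 = \ip{g}{g_2} = \ip{g_1+g_2}{g_2} = \ip{g_1}{g_2} + \no{g_2}^2 = \no{g_2}^2,
\end{equation*}
so $g_2 = 0$ and thus $\vE fg = \vE f{g_1} \in \pE T$. Since $f, g \in (\mul T)^\perp$, this shows $\vE fg \in (\pE T)_T$, giving the desired reverse inclusion.

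For the final statement, I would note that $\pE T$ is a closed operator (as remarked right after the decomposition \eqref{eq:decomposition-relation}), and that $\mm{\mul T}$ is a closed subspace of $\HH$; hence the intersection $(\pE T)_T = \pE T \cap \mm{\mul T}$ is a closed subspace of $\HH$. Moreover, any subset of an operator satisfies \eqref{eq:cond-operator}, so $(\pE T)_T$ is itself an operator, and by the equality just proved the same holds for $T_T$.

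The only delicate point is the third step, namely justifying that in the decomposition of $\vE fg$ the multivalued part actually vanishes; this is where the hypothesis $g \in (\mul T)^\perp$ coming from $\vE fg \in \mm{\mul T}$ is essential and combines with the $\HH$-orthogonality provided by \eqref{eq:decomposition-relation}. Everything else is a direct unfolding of definitions.
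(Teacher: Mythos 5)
Your proof is correct and follows essentially the same route as the paper: both decompose an element of $T_T$ via \eqref{eq:decomposition-relation} and use the orthogonality between the operator part and the multivalued part, together with $g\in(\mul T)^{\perp}$, to kill the $T_\infty$-component. The only cosmetic difference is that you compute $\no{g_2}^2=0$ explicitly, whereas the paper observes that the component lies in $\mul T\cap(\mul T)^{\perp}=\{0\}$.
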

\begin{proof}
  If $\vE {f}{h}\in T_{T}$, then, in view of \eqref
  {eq:decomposition-relation}, there are $\vE fg\in\pE T$, $\vE 0k \in
  T_{\infty}$ such that $h=g+k$. Since $h,g\in (\mul T)^ {\perp}$ and
  $k\in\mul T$, one has $h=g$. Hence $\vE fh\in(\pE T)_{T}$. The other
  inclusion follows directly by noting that $\pE T\subset T$.
\end{proof}
\begin{remark}\label{domrantop}
For a closed relation $T$ with $\dom T\subset(\mul T)^{\perp}$, it follows that $\dom \pE T$ and  $\ran \pE T$  are in  $(\mul T)^{\perp}.$
Thus by Proposition \ref{opvsrc0}
\bea\label{opars}
T_{T}=(\pE T)_{T}=\pE T\cap \mm{\mul T}=\pE T\,.
\eea
This means that $\pE T$  and $T_{T}$ have the same elements and, when
$T_T$ is regarded as a relation in $\HH$, one can write $$T=\oP{T_{T}}{T_{\infty}}\,.$$
Besides, for any $\zeta \in \C$,  
\begin{align}\label{depolr}
\begin{split}
T-\zeta I&=\oP{(\pE T-\zeta I)}{T_{\infty}}\\
&=\oP{( T_{T}-\zeta I)}{T_{\infty}}.
\end{split}
\end{align}
\end{remark}
\begin{theorem}\label{PopvsRel}
If $T$ is a closed linear relation such that $\dom T\subset(\mul T)^{\perp}$, then
 \begin{enumerate}[{(a)}]
\begin{multicols}{2}
   \item $\hat\sigma(T)=\hat\sigma(T_{T})$
  \item$\sigma(T)=\sigma(T_{T})$
  \item $\sigma_c(T)=\sigma_c(T_{T})$
    \item $\sigma_p(T)=\sigma_p(T_{T})$.
 \end{multicols}
 \end{enumerate}
 \end{theorem}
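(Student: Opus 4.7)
The strategy is to reduce everything to the orthogonal decomposition
$T-\zeta I = \oP{(T_T-\zeta I)}{T_\infty}$ recorded in
Remark~\ref{domrantop}, together with the identification $T_T=\pE T$
that holds under the standing hypothesis $\dom T\subset(\mul T)^\perp$.
I will handle the four statements in the order (a), (d), (c), (b), because
each builds naturally on the previous.

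First, (a) is essentially for free: Proposition~\ref{espoper01} gives
$\hat\rho(T)=\hat\rho(\pE T)$, and since $T_T=\pE T$ by
Remark~\ref{domrantop}, taking complements in $\C$ yields
$\hat\sigma(T)=\hat\sigma(T_T)$. For (d), let $\zeta\in\C$ and suppose
$f\in\ker(T-\zeta I)$, so $\vE{f}{\zeta f}\in T$. Using
$T=\oP{\pE T}{T_\infty}$, write $\vE{f}{\zeta f}=\vE{f}{g}+\vE{0}{h}$
with $\vE{f}{g}\in\pE T$ and $h\in\mul T$. By Remark~\ref{domrantop},
$\ran\pE T\subset(\mul T)^\perp$ and $f\in\dom T\subset(\mul T)^\perp$
forces $\zeta f\in(\mul T)^\perp$, so the projection of $\zeta f=g+h$
onto $\mul T$ gives $h=0$. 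Hence $\vE{f}{\zeta f}\in\pE T=T_T$ and
$f\in\ker(T_T-\zeta I)$. The reverse inclusion is trivial since
$T_T\subset T$, yielding $\sigma_p(T)=\sigma_p(T_T)$.

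For (c), the decomposition \eqref{depolr} gives
$\ran(T-\zeta I)=\ran(T_T-\zeta I)+\mul T$, and this sum is orthogonal
because $\ran(T_T-\zeta I)\subset(\mul T)^\perp$
(again Remark~\ref{domrantop}). Since $\mul T$ is closed (as $T$ is
closed), the closure commutes with the orthogonal sum:
$\overline{\ran(T-\zeta I)}=\overline{\ran(T_T-\zeta I)}\oplus\mul T$.
Therefore $\ran(T-\zeta I)$ is closed in $\H$ if and only if
$\ran(T_T-\zeta I)$ is closed in $(\mul T)^\perp$ (equivalently, in
$\H$), which gives $\sigma_c(T)=\sigma_c(T_T)$.

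Finally, (b) combines what we have. By definition
$\zeta\in\rho(T)$ iff $\zeta\in\hat\rho(T)$ and $\eta_\zeta(T)=0$, i.e.
$\ran(T-\zeta I)=\H$; analogously $\zeta\in\rho(T_T)$ iff
$\zeta\in\hat\rho(T_T)$ and $\ran(T_T-\zeta I)=(\mul T)^\perp$. Since
$\hat\rho(T)=\hat\rho(T_T)$ by (a), and since the orthogonal
decomposition of the range shows $\ran(T-\zeta I)=\H$ iff
$\ran(T_T-\zeta I)=(\mul T)^\perp$, we conclude
$\rho(T)=\rho(T_T)$ and hence $\sigma(T)=\sigma(T_T)$. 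The only subtle
point in the whole argument is the projection step used in (d) and the
orthogonality in (c), both of which rely essentially on the hypothesis
$\dom T\subset(\mul T)^\perp$ (without it, $\zeta f$ need not lie in
$(\mul T)^\perp$ and the clean decomposition breaks down); I expect
this to be the only place where one must be careful.
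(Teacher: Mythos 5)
Your argument is correct and follows essentially the same route as the paper: it rests on the identification $T_T=\pE T$ from Remark~\ref{domrantop}, the orthogonal decomposition \eqref{depolr} of $T-\zeta I$, and Proposition~\ref{espoper01} for the quasi-regular sets. The only difference is cosmetic — you spell out the projection argument for $\ker(T-\zeta I)=\ker(T_T-\zeta I)$ in (d), which the paper states without proof.
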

 \begin{proof}
 \begin{enumerate}[{(a)}]
\item  The subspaces $(T_{T}-\zeta I)^{-1}$ and $(\pE T-\zeta
  I)^{-1}$ coincide due to \eqref{opars}. Therefore
$(T_{T}-\zeta I)^{-1}$  is bounded  if and only if  $(\pE T-\zeta I)^{-1}$ is bounded. Thus
$\hat\rho(\pE T)=\hat\rho(T_{T})$ and hence the assertion holds by Proposition \ref{espoper01}.
\item
 For $\zeta \in \rho(T_{T})$, the operator $(T_{T}-\zeta I)^{-1}$
 is bounded in the space $(\mul T)^{\perp}$. By the previous item
$(T-\zeta I)^{-1}$ is also bounded. Thus, taking into account
\eqref{depolr}, one has
\begin{align*}
\dom (T-\zeta I)^{-1}&=\ran (T-\zeta I)\\
&=\oP{\ran (T_{T}-\zeta I)}{\mul T}\\
&=\oP{\dom (T_{T}-\zeta I)^{-1}}{\mul T}\\
&=\oP{(\mul T)^{\perp}}{\mul T}=\H.
\end{align*}
Therefore $\zeta\in \rho(T)$. The other inclusion follows from a similar reasoning.
\item It follows from \eqref{depolr} that
$$\ran (T-\zeta I)=\oP{\ran (T_{T}-\zeta I)}{\mul T}\,.$$
Thus, since $\mul T$ is closed, $\ran (T-\zeta I)$ is closed if and
only if $\ran (T_{T}-\zeta I)$ is closed.  Therefore
$\sigma_c(T)=\sigma_c(T_{T}).$
\item[(d)] Since $\dom T\subset(\mul T)^{\perp}$, one
  has
  \begin{equation*}
    \ker (T-\zeta I)=\ker (T_{T}-\zeta I)\,.
  \end{equation*}
From this equation (d) follows
  at once.
\end{enumerate}
\end{proof}

\section{Dissipative relations}
\label{sec:dissipative-relations}
This section presents the theory of dissipative relations in a fashion
similar to the theory of dissipative operators given in
\cite[Chap.\,4, Sec.\,4]{MR2760647}. The theory of these operators was
introduced in \cite{MR0104919} (see further developments in
\cite{MR0282238}).  This section is related to
\cite[Sec.\,3]{MR0361889} and extends some of its results
(cf. \cite{MR3057107}).

A linear relation $L$ is said to be dissipative if
\begin{equation}
  \label{eq:dissipative-definition}
  \im\ip fg\geq0
\end{equation}
holds for all $\vE fg$ in $L$. If the equality in
\eqref{eq:dissipative-definition} takes place for all $\vE fg$ in $L$,
then $L$ is said to be
symmetric. Note that $L$ is symmetric if and only if $L\subset L^*$.

\begin{theorem}\label{codrih}
  The linear relation $L$ is dissipative if and only if
  the lower half plane $\C_{-}$ is contained in $\hat\rho(L)$ and for all $\zeta\in\C_{-}$,
\begin{equation*}
  \no{(L-\zeta I)^{-1}}\leq -1/{\im \zeta}\,.
\end{equation*}
\end{theorem}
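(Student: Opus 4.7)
The plan is to extract both directions from a single algebraic identity, namely the expansion of $\no{g-\zeta f}^{2}$ for $\vE fg\in L$ and $\zeta=\alpha+i\beta\in\C_{-}$. Writing $g-\zeta f=(g-\alpha f)-i\beta f$ and using that $\ip{g-\alpha f}{f}$ has imaginary part $-\im\ip fg$, one obtains
\begin{equation*}
\no{g-\zeta f}^{2}=\no{g-\alpha f}^{2}+\beta^{2}\no{f}^{2}-2\beta\,\im\ip fg.
\end{equation*}
This identity will do all the work; the rest is just choosing how to read it.

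For the forward direction, I would assume $L$ is dissipative. Since $\beta=\im\zeta<0$ and $\im\ip fg\geq 0$, the last term on the right is nonnegative, so $\no{g-\zeta f}^{2}\geq \beta^{2}\no{f}^{2}$, i.e.\
\begin{equation*}
\no{g-\zeta f}\geq |\im\zeta|\,\no f\qquad\text{for every }\vE fg\in L.
\end{equation*}
Recalling from \eqref{eq:sum-scalar-multiplication} that $L-\zeta I=\{\vE{f}{g-\zeta f}:\vE fg\in L\}$ and $(L-\zeta I)^{-1}=\{\vE{g-\zeta f}{f}:\vE fg\in L\}$, this inequality does two things at once. First, if $\vE{0}{f}$ belongs to $(L-\zeta I)^{-1}$, then $g-\zeta f=0$ forces $f=0$, so by the criterion \eqref{eq:cond-operator} the relation $(L-\zeta I)^{-1}$ is an operator. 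Second, the same inequality, read as $\no f\leq -(\im\zeta)^{-1}\no{g-\zeta f}$, says exactly that this operator is bounded with norm at most $-1/\im\zeta$. Hence $\zeta\in\hat\rho(L)$ with the claimed bound, and this holds for every $\zeta\in\C_{-}$.

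For the converse, assume $\C_{-}\subset\hat\rho(L)$ with $\no{(L-\zeta I)^{-1}}\leq -1/\im\zeta$. Fix $\vE fg\in L$ and, to eliminate $\alpha$, specialize to purely imaginary $\zeta=i\beta$ with $\beta<0$. The hypothesis gives $\beta^{2}\no f^{2}\leq \no{g-i\beta f}^{2}$, which combined with the identity above (now with $\alpha=0$) yields
\begin{equation*}
0\leq \no{g}^{2}-2\beta\,\im\ip fg,\qquad\text{i.e.,}\qquad \im\ip fg\geq \frac{\no g^{2}}{2\beta}.
\end{equation*}
Since $\beta<0$, the right-hand side tends to $0^{-}$ as $\beta\to-\infty$, so $\im\ip fg\geq 0$ and $L$ is dissipative.

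There is really no deep obstacle; the only things to be careful about are bookkeeping issues peculiar to relations rather than operators. One must verify injectivity of $L-\zeta I$ via \eqref{eq:cond-operator} before interpreting the norm estimate, and one must remember that the paper's convention of ``bounded relation'' automatically entails the relation is a (single-valued) operator, so the statement $\no{(L-\zeta I)^{-1}}\leq -1/\im\zeta$ implicitly asserts both. The inner-product antilinearity in the first slot needs to be tracked carefully in the expansion of $\no{g-\zeta f}^{2}$, but this is purely computational.
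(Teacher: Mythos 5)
Your proof is correct and follows essentially the same route as the paper: a direct verification of the resolvent bound from $\im\ip fg\geq 0$, and recovery of dissipativity by specializing to $\zeta=i\beta$ and letting $\beta\to-\infty$. The only cosmetic difference is that you organize the forward direction around the exact identity for $\no{g-\zeta f}^{2}$ (dropping the nonnegative cross term) where the paper chains through the Cauchy--Schwarz estimate $\im\ip kh\leq\no h\,\no k$; both yield the same bound.
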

\begin{proof}
Suppose that $L$ is dissipative and let $\zeta \in \C_{-}$. If $\vE
hk\in (L-\zeta I)^{-1}$, i.\,e. $\vE k{h+\zeta k}\in L$, then $\im \ip
k{h+\zeta k}\geq 0$. Therefore
\begin{align*}
0&\leq\im\ip kh+\im \zeta\no k^{2} \\
&\leq|\ip kh|+\im \zeta\no k^{2} \\
&\leq \no h\,\no k+\im \zeta\no k^{2}.
\end{align*}
For $k\neq0$ the last inequality yields \bea\label{didr} \no
k\leq-\frac1{\im \zeta }\no h\,.  \eea If $k=0$, then \eqref{didr}
holds. Hence $\no{(L-\zeta I)^{-1}}\leq -1/{\im \zeta}$ and
$\zeta\in \hat\rho(L)$.

Conversely, if $\vE fg\in L$ and $\tau>0,$ then  $\vE
{g-(-i\tau)f}f\in[L-(-i\tau) I]^{-1}$ and,
by hypothesis,
\begin{align*}
\no f^{2}&\leq\frac1{\tau^{2}}\no{g+\tau if}^{2}\\&\leq\frac1{\tau^{2}}(\no g^{2}+\tau^{2}\no f^{2}+2\tau\im\ip fg)\,.
\end{align*}
Thus, \beao -\frac{1}{2\tau}\no g^{2} \leq \im\ip fg.
\eeao Letting $\tau$ tends to infinity, one arrives at
$\im\ip fg\geq0$ and hence $L$ is dissipative.
\end{proof}

\begin{remark}
  \label{rem:quasiregular-symmetric}
  Note that if $A$ is symmetric so is $-A$. Therefore, by Theorem
  \ref{codrih} $\C_{-}\subset\hat\rho(-A)$, which implies that the
  upper half plane $\C_{+}$ is contained in $\hat\rho(A)$. Hence $A$
  is symmetric if and only if $\C\backslash\R\subset\hat\rho(A)$ and,
  for all $\zeta\in\C\backslash\R$, the inequality
\begin{equation*}
  \no{(A-\zeta I)^{-1}}\leq 1/{|\im \zeta|}
\end{equation*}
holds.
\end{remark}
Due to Theorem \ref{codrih},
the set $\C_{-}$ is a connected component of the quasi-regular set of
any dissipative relation. Hence,
if a dissipative relation is closed, then its
deficiency index (given by \eqref{ranTp}) is constant on
$\C_{-}$. For a closed dissipative relation $L$, define
$\eta_{-}(L):=\eta_{\zeta}(L)$ for any $\zeta\in \C_{-}$. Thus, in view of
\eqref{defspace}, one has
\begin{equation*}
  \eta_{-}(L)=\dim \Nk{\cc{\zeta}}{L^{*}}\,.
\end{equation*}

Furthermore, if $A$ is a closed symmetric relation, then
$\C_{+}$ is also a connected component of $\hat\rho(A)$, and one can
also set $\eta_{+}(A):=\eta_{\overline\zeta}(A)$ for any $\zeta\in \C_{-}$.
Hence $A$ has indices
\begin{equation}
\label{eq:deficiency-indices-symmetric}
(\eta_{+}(A),\eta_{-}(A))=(\dim
\Nk{\zeta}{A^{*}},\dim \Nk{\cc{\zeta}}{A^{*}}),\,\, \zeta\in \C_{-}\,.
\end{equation}

A dissipative relation $L$ is said to be maximal if it is closed and
$\eta_-{(L)}=0$ (or equivalently $\C_{-}\subset\rho(L)$).

$L$ is maximal in the following sense. If $A$ is another dissipative
relation such that $L\subset A$, one verifies that
$\eta_-(\overline{A})\leq \eta_-(L)$. Then $\overline{A}$ is also
maximal. Thus, for $\zeta \in \C_{-}$, one has $(L-\zeta
I)^{-1}\subset(\overline{A}-\zeta I)^{-1}$ and both are in $\cA
B(\H)$. This implies that $(L-\zeta I)^{-1}=(\overline{A}-\zeta
I)^{-1}$ and then $L=\overline{A}.$ Since $A$ is an extension of $L$,
$A=L$. Hence, a maximal dissipative extension does not admit proper
dissipative extensions.

The following assertion is taken from \cite[Lem.\,2.1]{MR3057107} and
sheds light on the interrelationship between $\dom L$ and $\mul L$ for a
dissipative relation $L$.

\begin{proposition}\label{domdisp}
If $L$ is a closed dissipative relation, then $\dom L\subset (\mul L)^{\perp}$.
Moreover, if $L$ is maximal dissipative, then $\cc{\dom L}=(\mul L)^{\perp}$.
\end{proposition}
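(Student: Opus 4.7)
I would prove the two inclusions in turn.

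For the first assertion, fix $\vE{f}{g} \in L$ and $h \in \mul L$. Since $\vE{0}{h} \in L$ and $L$ is linear, $\vE{f}{g + \lambda h} \in L$ for every $\lambda \in \C$. Dissipativity \eqref{eq:dissipative-definition} applied to these pairs yields
\begin{equation*}
\im \ip{f}{g} + \im(\lambda \ip{f}{h}) \geq 0, \qquad \lambda \in \C.
\end{equation*}
If $\ip{f}{h} \neq 0$, choose $\lambda = -it/\ip{f}{h}$ with $t>0$, so that $\im(\lambda\ip{f}{h})=-t$; letting $t\to\infty$ contradicts the estimate. Hence $\ip{f}{h}=0$, and since $f\in\dom L$ and $h\in\mul L$ were arbitrary, $\dom L\subset(\mul L)^\perp$.

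For the maximal case, the inclusion $\overline{\dom L}\subset(\mul L)^\perp$ is immediate from the first part by closing (the right-hand side being closed). The reverse inclusion is equivalent, by taking orthogonal complements and invoking the first part once more, to showing $(\overline{\dom L})^\perp\subset \mul L$, which is what I would establish directly.

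To this end, fix $\phi\in(\overline{\dom L})^\perp$ and define
\begin{equation*}
L_\phi := \left\{\vE{f}{g+c\phi}\ :\ \vE{f}{g}\in L,\ c\in\C\right\},
\end{equation*}
that is, the subspace sum of $L$ and the one-dimensional $\Span\{\vE{0}{\phi}\}$ inside $\HH$. Then (i) $L_\phi$ is a linear relation (a linear subset of $\HH$); (ii) $L_\phi$ is dissipative because $f\in\dom L$ is orthogonal to $\phi$, so
\begin{equation*}
\im\ip{f}{g+c\phi}=\im\ip{f}{g}+\im(c\ip{f}{\phi})=\im\ip{f}{g}\geq 0;
\end{equation*}
and (iii) $L_\phi \supset L$. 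By the discussion immediately preceding the proposition, a maximal dissipative relation admits no proper dissipative extension, so $L_\phi=L$. This forces $\vE{0}{\phi}\in L$, i.e., $\phi\in\mul L$, as desired.

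The main obstacle is conceptual rather than technical: recognizing that the right object to consider for the non-trivial inclusion is the rank-one enlargement $L+\Span\{\vE{0}{\phi}\}$, and then exploiting the orthogonality $\phi\perp\dom L$ to preserve dissipativity without further work. The rest reduces to the linear-algebra choice of $\lambda$ in the first part and to the no-proper-extension characterization of maximality already derived in the text.
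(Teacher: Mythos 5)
Your proof is correct. Note that the paper itself gives no argument for this proposition: it is imported wholesale from \cite[Lem.\,2.1]{MR3057107} (Azizov, Dijksma and Wanjala), so your write-up supplies a self-contained proof where the text only cites. Both halves of your argument are sound: the rotation trick with $\lambda=-it/\ip{f}{h}$ correctly exploits linearity of the inner product in the second slot to force $\ip{f}{h}=0$, and the rank-one enlargement $L_\phi=L\dotplus\Span\llb\vE{0}{\phi}\rrb$ for $\phi\in(\dom L)^{\perp}$ is dissipative precisely because the added multivalued vector is orthogonal to $\dom L$, so the no-proper-extension characterization of maximality (established in the text just before the proposition) forces $\vE{0}{\phi}\in L$. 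One remark on economy: for the second half you could alternatively combine Proposition~\ref{arens} ($\mul L=(\dom L^{*})^{\perp}$) with Theorem~\ref{inadmco} (maximality of $-L^{*}$), but that theorem appears only after the proposition in the paper's logical order, whereas your extension argument uses nothing beyond what precedes the statement — so your route is the one that fits the exposition.
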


Due to this proposition, the spectrum of any closed
dissipative relation has the properties given in Theorem
\ref{PopvsRel}.

\begin{proposition}\label{comdew}
  If $L$ is a closed dissipative relation whose domain is the whole
  space, then $L$ is a bounded maximal dissipative operator.
\end{proposition}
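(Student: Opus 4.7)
The plan is to first reduce the statement to the operator case via Proposition~\ref{domdisp}, then invoke the closed graph theorem to obtain boundedness, and finally argue maximality using the constancy of the deficiency index on the connected component $\C_{-}$ of $\hat\rho(L)$ that was established in Theorem~\ref{codrih}.

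First I would show that $L$ is (the graph of) an operator. Proposition~\ref{domdisp} applied to the closed dissipative relation $L$ gives $\dom L\subset(\mul L)^{\perp}$. Since by hypothesis $\dom L=\H$, this forces $(\mul L)^{\perp}=\H$, hence $\mul L=\{0\}$. By the characterization \eqref{eq:cond-operator} this means that $L$ is an operator. Being closed and everywhere defined on the Hilbert space $\H$, the classical closed graph theorem then yields that $L$ is a bounded operator; set $M:=\no{L}<\infty$.

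To finish I must prove maximality, i.e. $\eta_{-}(L)=0$. By Theorem~\ref{codrih} the lower half plane $\C_{-}$ is contained in $\hat\rho(L)$, and it is clearly connected, so by the constancy of $\eta_{\zeta}(L)$ on connected components of $\hat\rho(L)$ for a closed relation (the remark following \eqref{ranTp}), it suffices to exhibit a single $\zeta_{0}\in\C_{-}$ with $\zeta_{0}\in\rho(L)$. Since $L$ is now a bounded operator on $\H$ with $\no{L}=M$, standard spectral theory (Neumann series) shows that $L-\zeta I\in\cA B(\H)$ is boundedly invertible whenever $|\zeta|>M$; choosing for instance $\zeta_{0}=-i(M+1)\in\C_{-}$ gives the required point in $\rho(L)$. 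Hence $\eta_{-}(L)=0$ and $L$ is maximal dissipative.

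I do not anticipate a genuine obstacle: the one step that might look delicate is justifying that classical spectral theory of bounded operators can be imported into the framework of linear relations, but this is automatic once Step~1 has removed the multivalued part and reduced $L$ to an honest element of $\cA B(\H)$. Everything else is a direct application of results stated earlier in the paper.
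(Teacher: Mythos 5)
Your proof is correct. The first half (using Proposition~\ref{domdisp} to get $\mul L\subset(\dom L)^{\perp}=\{0\}$, so that $L$ is a closed everywhere-defined operator and hence bounded by the closed graph theorem) is exactly the paper's argument, just placed at the start rather than the end. Where you genuinely diverge is in the proof of maximality. The paper shows $\Nk{i}{L^{*}}$ is trivial by a direct computation: for $\vE f{if}\in L^{*}$, the hypothesis $\dom L=\H$ supplies some $\vE fg\in L$ with $\ip fg=\ip{if}{f}=-i\no f^{2}$, and dissipativity then forces $-\no f^{2}=\im\ip fg\geq 0$, i.e.\ $f=0$; this needs neither boundedness nor any spectral theory and could be run before the reduction to an operator. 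You instead establish boundedness first and then exhibit a single point $\zeta_{0}=-i(\no L+1)\in\C_{-}\cap\rho(L)$ via the Neumann series, concluding $\eta_{-}(L)=0$ from the constancy of $\eta_{\zeta}(L)$ on the connected component of $\hat\rho(L)$ containing $\C_{-}$. Both routes are sound; the paper's is more elementary and self-contained within the relation framework, while yours imports classical bounded-operator spectral theory (legitimately, since the multivalued part has already been removed) at the cost of making the order of the two steps rigid.
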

\begin{proof}
  If $\vE f{if}\in L^{*}$, then there exists $\vE fg\in L$ such that
  $\ip fg=\ip {if}f$. Thus $-i\no{f}^{2}=\ip fg$ and \beao -\no
  f^{2}=\im (-i\no{f}^{2})=\im \ip fg\geq0\,. \eeao This implies that
  $f=0$ and then
  $\eta_{-}(L)=0$. Furthermore, by Proposition \ref{domdisp},
  \begin{equation*}
    \mul L\subset(\dom L)^{\perp}=\{0\}\,.
  \end{equation*}
  Thereupon $L$ is a closed operator defined on the whole space and therefore it is bounded.
\end{proof}


\begin{proposition}\label{qrpomde}
If $L$ is a maximal dissipative relation, then
\begin{equation*}
  \rho(L)\cap(\C_{-}\cup\R)=\hat\rho(L)\cap(\C_{-}\cup\R)\,.
\end{equation*}
\end{proposition}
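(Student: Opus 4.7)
The plan is to show two inclusions, one of which is trivial. By the definition of the regular set, $\rho(L)\subset\hat\rho(L)$ always, so I would only need to focus on the reverse inclusion: given $\zeta\in\hat\rho(L)\cap(\C_{-}\cup\R)$, I must show that $(L-\zeta I)^{-1}$ is defined on all of $\H$, equivalently that $\eta_{\zeta}(L)=0$ together with closedness of $\ran(L-\zeta I)$. I would split this into the case $\zeta\in\C_{-}$ and the case $\zeta\in\R\cap\hat\rho(L)$.

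The case $\zeta\in\C_{-}$ is immediate from the definition of maximal dissipativeness, which asserts $\C_{-}\subset\rho(L)$. So the only real content is the case of a real point $\zeta\in\R\cap\hat\rho(L)$. The key observation is that $\hat\rho(L)$ is open (as stated in the section on closed relations), so some open disk $D$ about $\zeta$ lies in $\hat\rho(L)$. This disk meets $\C_{-}$ (points just below $\zeta$), and is connected, so $\zeta$ and some point of $\C_{-}$ belong to the same connected component $U$ of $\hat\rho(L)$. Since $L$ is closed, the deficiency index $\eta_{\zeta}(L)$ is constant on $U$; at the point of $U\cap\C_{-}$ it equals $\eta_{-}(L)=0$ by maximality, hence $\eta_{\zeta}(L)=0$.

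With $\eta_{\zeta}(L)=0$ in hand, I would conclude as follows. The identity $\ker(L-\zeta I)^{*}=\ran(L-\zeta I)^{\perp}$ from \eqref{epoar} combined with \eqref{poHs} gives $\overline{\ran(L-\zeta I)}=\H$; then Proposition~\ref{acot-cerr1}, applied to the closed relation $L$ at the quasi-regular point $\zeta$, yields that $\ran(L-\zeta I)$ is already closed. Therefore $\ran(L-\zeta I)=\H$, so $(L-\zeta I)^{-1}$ is a bounded relation defined on all of $\H$; since bounded relations are operators, $(L-\zeta I)^{-1}\in\cA B(\H)$ and $\zeta\in\rho(L)$.

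The one subtle point, where I would take care, is confirming that the connected component of $\hat\rho(L)$ containing a real quasi-regular point necessarily meets $\C_{-}$; this is handled by the standard openness-plus-connectedness argument above rather than by any structural fact about dissipative relations. No major obstacle is expected: the proof rests essentially on the openness of $\hat\rho(L)$, the constancy of $\eta_{\zeta}(L)$ on connected components of $\hat\rho(L)$ for closed $L$, and Proposition~\ref{acot-cerr1}, all of which are available.
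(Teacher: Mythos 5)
Your proof is correct and follows essentially the same route as the paper: openness of $\hat\rho(L)$ puts a real quasi-regular point in the same connected component as points of $\C_{-}$, constancy of the deficiency index on components gives $\eta_{\zeta}(L)=0$, and hence $\ran(L-\zeta I)=\H$. You are in fact slightly more explicit than the paper in the last step, where you invoke Proposition~\ref{acot-cerr1} to upgrade density of the range to equality with $\H$; the paper leaves this implicit.
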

\begin{proof}
  Suppose that
  $\zeta \in \hat\rho(L)\cap\C_{-}$. Since $\eta_{-}(L)=0$, the set
  $\ran (L-\zeta I)$ coincides with the whole space. This means that $\zeta \in \rho(L)$.

  Now suppose that $\zeta \in \hat\rho(L)\cap\R$. Since $\hat\rho(L)$
  is open, there exists an open neighborhood $\mathcal{V}(\zeta)$ of $\zeta$ in
  $\hat\rho(L)$. Since $\eta_\zeta(L)$ is constant on each connected
  component of $\hat\rho(L)$, one has, for any
  $\nu\in\mathcal{V}(\zeta)\cap\C_{-}$,
  \begin{equation*}
    \eta_\zeta(L)=\eta_\nu(L)=\eta_{-}(L)=0\,.
  \end{equation*}
  Thus $\ran (L-\zeta I)=\H$, which yields $\zeta\in \rho(L)$.
\end{proof}

From Proposition~\ref{qrpomde}, one concludes that
\bea\label{qrpomdec}
\sigma(L)\cap\R=\hat\sigma(L)\cap\R.
\eea
\begin{proposition}\label{inadco}
A linear relation $L$ is dissipative if and only if  $-L^{-1}$  is dissipative.
\end{proposition}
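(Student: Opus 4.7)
The plan is to unwind the definitions of $L^{-1}$ and of scalar multiplication by $-1$ from \eqref{eq:sum-scalar-multiplication}, and then reduce the claim to a one-line inner product identity. Concretely, the elements of $-L^{-1}$ are precisely the pairs of the form $\vE{g}{-f}$ with $\vE fg\in L$, so $-L^{-1}$ is dissipative if and only if $\im\ip{g}{-f}\geq 0$ for every such pair.

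The crux is therefore to compare $\im\ip{g}{-f}$ with $\im\ip fg$. I would use that the inner product on $\H$ is antilinear in the first argument and conjugate-symmetric, so $\ip{g}{-f}=-\ip{g}{f}=-\cc{\ip{f}{g}}$. Taking imaginary parts gives $\im\ip{g}{-f}=-\im\cc{\ip fg}=\im\ip fg$.

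From this identity the equivalence is immediate: the condition $\im\ip{f}{g}\geq 0$ for all $\vE fg\in L$ (i.e. $L$ is dissipative) is literally the same as $\im\ip{g}{-f}\geq 0$ for all $\vE{g}{-f}\in -L^{-1}$ (i.e. $-L^{-1}$ is dissipative). There is no real obstacle here; one simply has to be careful with the convention fixed after \eqref{ipoHH} that the inner product is antilinear in the first argument, which is what produces the needed sign cancellation.
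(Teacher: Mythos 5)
Your proof is correct and follows essentially the same route as the paper: both unwind the definition of $-L^{-1}$ to identify its elements as $\vE{g}{-f}$ with $\vE fg\in L$ and then reduce to the identity $\im\ip{g}{-f}=\im\ip fg$ (the paper handles the converse via $-(-L^{-1})^{-1}=L$, while you get both directions at once from the bijection, which is a marginal tidiness gain). One cosmetic remark: the sign cancellation in $\im\ip{g}{-f}=-\im\cc{\ip fg}=\im\ip fg$ comes from conjugate symmetry together with pulling out the real scalar $-1$, not from antilinearity in the first argument per se, but this does not affect the argument.
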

\begin{proof}
  Suppose that $L$ is dissipative and let $\vE fg\in -L^{-1}$ then
  $\vE{-g}f\in L$ and \beao 0\leq\im\ip{-g}f=\im-\ip{g}f=\im\ip fg,
  \eeao thence $-L^{-1}$ is dissipative. The converse can be
  established by noting that
\begin{equation}
\label{eq:dissipative-aux}
-(-L^{-1})^{-1}=L\,.
\end{equation}
\end{proof}
Thus, the transform $L\to-L^{-1}$ preserves the class of dissipative
relations. Furthermore this transform also preserves the subclass of
maximal, dissipative relations.
\begin{theorem}\label{inadmco}
  If $L$ is a maximal, dissipative relation, then $-L^{-1},$ $-L^*$
  and $-L^{\perp}$ are maximal dissipative relations.
Conversely, if either $-L^{-1},$ $-L^*$ or $-L^{\perp}$ is a maximal
dissipative relation, then $L$ is a maximal dissipative relation.
\end{theorem}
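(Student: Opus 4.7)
The strategy is to establish the three forward implications and to recover the three converses by re-applying these implications to the hypothesized maximal dissipative object. This works because each of the operations $L\mapsto -L^{-1}$, $L\mapsto -L^*$, $L\mapsto -L^\perp$ is essentially involutive: literally for the first by \eqref{eq:dissipative-aux}, and for the other two up to closure, since $L^{**}=\cc L=L^{\perp\perp}$. Closedness of $-L^{-1}$, $-L^*$, and $-L^\perp$ is automatic whenever $L$ is closed, so in each forward implication the substantive content is dissipativeness together with the vanishing of the deficiency index $\eta_-$.

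For $-L^{-1}$, dissipativeness is Proposition~\ref{inadco}. For maximality, the constancy of $\eta_\zeta$ on $\C_-\subset\hat\rho(-L^{-1})$ (Theorem~\ref{codrih}) reduces the task to verifying $\ran(-L^{-1}+iI)=\H$ at a single point. Unraveling definitions, $\vE fg\in -L^{-1}+iI$ iff $(if-g,f)\in L$; solving for $f$ yields the equivalent condition $-ig\in\ran(L+iI)$, which holds for every $g$ because $-i\in\C_-\subset\rho(L)$. For $-L^*$, Proposition~\ref{prop:resolvent-adjoint} gives $\C_+\subset\rho(L^*)$ from $\C_-\subset\rho(L)$; using $(-L^*-\mu I)^{-1}=-(L^*-(-\mu)I)^{-1}$ and the preservation of operator norm under adjoint, the Theorem~\ref{codrih} bound for $L$ transfers to $\|(-L^*-\mu I)^{-1}\|\le -1/\im\mu$ on $\C_-$, so by the reverse direction of Theorem~\ref{codrih}, $-L^*$ is dissipative, and the full range makes it maximal dissipative. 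Finally, for $-L^\perp$ I would first prove the identity $-L^\perp=(L^{-1})^*$ from \eqref{epoar}, \eqref{eq:UW-properties2}, and the observation that $\mF U\mF W^{-1}$ acts on a linear relation by sending $A$ to $-A$; writing $(L^{-1})^*=-(-L^{-1})^*$ then reduces this case to the $-L^*$ argument applied to the max dissipative $-L^{-1}$.

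For the converses, each hypothesis lets us apply the relevant forward implication to the assumed max dissipative object and recover $L$ as $-(-L^{-1})^{-1}$, $-(-L^*)^*=L^{**}=L$, or $-(-L^\perp)^\perp=L^{\perp\perp}=L$, using closedness of $L$. The main obstacle I anticipate is the algebraic bookkeeping of the identities $(-L^{-1})^{-1}=-L$, $(-T)^*=-T^*$, and $-L^\perp=(L^{-1})^*$, and verifying that no subtlety is introduced by the closure gap in the involutions for ${}^*$ and ${}^\perp$; once these identities are in hand, the spectral content of the proof reduces cleanly to Theorem~\ref{codrih} and Proposition~\ref{prop:resolvent-adjoint}.
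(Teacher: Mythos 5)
Your proposal is correct and follows essentially the same route as the paper: dissipativeness of $-L^{-1}$ via Proposition~\ref{inadco}, the resolvent criterion of Theorem~\ref{codrih} together with Proposition~\ref{prop:resolvent-adjoint} for $-L^*$, the reduction $-L^{\perp}=-(-L^{-1})^{*}$ for the third case, and the involutive identities for the converses. The only differences are cosmetic: you verify surjectivity of $-L^{-1}+iI$ directly where the paper shows triviality of $\Nk i{(-L^{-1})^{*}}$, and you obtain the resolvent bound for $-L^{*}$ from norm-preservation under adjoints rather than by the paper's element-wise computation.
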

 \begin{proof}
    It follows from Proposition
   \ref{inadco} that
   $-L^{-1} $ is dissipative and since $L$ is closed, so is $-L^{-1}$. One should show that $\Nk
   i{(-L^{-1})^{*}}$ is the trivial relation. Let
   \begin{equation}
     \label{eq:element-in-defect-relation}
     \vE g{ig}\in (-L^{-1})^{*}\,.
   \end{equation}
   The maximality of $L$ means that $\ran (L+iI)=\H$, so there
   exists $\vE h{ig}\in (L+iI)$, which implies that
   $\vE {ig-ih}{-h}\in -L^{-1}$. Taking into account
   \eqref{eq:element-in-defect-relation}, one has
 \begin{align*}
 \ip{g}{-h}&=\ip{ig}{ig-ih}\\&=\no g^{2}+\ip{g}{-h}.
 \end{align*}
Thus $g=0$ and  $-L^{-1}$ is maximal dissipative.

Now consider $\zeta\in \C_{-}$ and let
$\vE hk\in (-L^{*}-\zeta I)^{-1}$, that is
$\vE k{-h}\in (L+\overline \zeta I)^{*}$. Since $\eta_{-}(L)=0$, one
has $\ran
(L+\overline \zeta I)=\H$; so one can find an element
$\vE fk$ in $(L+\overline \zeta I)$. Therefore, it should hold that \bea\label{cosmad} \no
k^{2}=\ip f{-h}.  \eea Observe that
$\vE kf\in [L-(-\overline \zeta) ]^{-1}$ and from Theorem
\ref{codrih}, \bea\label{cosmad1} \no f\leq -\frac1{\im \zeta}\no k.
\eea Then, by \eqref{cosmad} and \eqref{cosmad1},
\begin{align*}
\no k^{2}&=\ip f{-h}\\
&\leq \no f\,\no h\\
&\leq -\frac1{\im \zeta}\no k\, \no h.
\end{align*}
For $k\neq0$ the last inequality yields \bea\label{cosmad2} \no
k\leq-\frac1{\im \zeta }\no h.  \eea If $k=0$, then \eqref{cosmad2} is
trivial. Hence, by Proposition \ref{codrih}, $-L^{*}$ is
dissipative. Maximality, i.\,e. the fact that
$\C_{-}\subset \rho(-L^{*})$, follows from
Proposition~\ref{prop:resolvent-adjoint}.

Observe that
\begin{align*}
-L^\perp&=-(\mF W\mF WL)^{\perp}\\
&=-(\mF WL)^*\\
&=-(-L^{-1})^*\,.
\end{align*}
Thus by what has been proven $-L^\perp$ is maximal dissipative. The
converse assertions follow from \eqref{eq:dissipative-aux},
 $-(-L^{*})^{*}=L$, and $-(-L^{\perp})^{\perp}=L$.
 \end{proof}
Let us turn to the question of whether the sum of maximal
dissipative relations is a maximal dissipative relation.
\begin{theorem}
  \label{thm:sum-of-dissipative-dissipative}
  Let $A$ and $V$ be maximal dissipative relations. If $\dom
  V=\H$, then $L=A +V$ is a maximal dissipative relation.
\end{theorem}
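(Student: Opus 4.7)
The plan is to establish the three required properties of $L=A+V$ in turn: closedness, dissipativeness, and maximality ($\eta_-(L)=0$), leveraging that $V$ is in fact a bounded everywhere-defined operator.

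First I would invoke Proposition~\ref{comdew}: since $V$ is closed dissipative with $\dom V=\H$, $V$ is a bounded maximal dissipative operator, in particular $V\in\cA B(\H)$. Closedness of $L$ then follows immediately from Remark~\ref{rem:birman-stability-closedness}, because $A$ is closed and $V$ is bounded. Dissipativeness is a direct computation: any $\vE{f}{g}\in L$ can be written with $g=g_A+Vf$ for some $\vE{f}{g_A}\in A$, so
\begin{equation*}
\im\ip{f}{g}=\im\ip{f}{g_A}+\im\ip{f}{Vf}\geq 0,
\end{equation*}
each term being nonnegative by the dissipativeness of $A$ and of $V$. Thus $L$ is closed dissipative and, by Theorem~\ref{codrih}, $\C_-\subset\hat\rho(L)$.

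For maximality, since $L$ is closed and $\eta_\zeta(L)$ is constant on the connected component $\C_-$ of $\hat\rho(L)$, it suffices to exhibit a single $\zeta\in\C_-$ with $\zeta\in\rho(L)$. I would pick $\zeta\in\C_-$ with $|\im\zeta|>\no{V}$. Because $A$ is maximal dissipative, $(A-\zeta I)^{-1}\in\cA B(\H)$ with $\no{(A-\zeta I)^{-1}}\leq 1/|\im\zeta|$ by Theorem~\ref{codrih}, and therefore
\begin{equation*}
\no{V(A-\zeta I)^{-1}}\leq\frac{\no{V}}{|\im\zeta|}<1,
\end{equation*}
so $I+V(A-\zeta I)^{-1}$ is boundedly invertible in $\cA B(\H)$ by the Neumann series.

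The key identification is to rewrite the range of $L-\zeta I$. Every element of $L-\zeta I$ has the form $g_A+Vf-\zeta f$ with $\vE{f}{g_A}\in A$; setting $h:=g_A-\zeta f$, the hypothesis $\zeta\in\rho(A)$ makes $f=(A-\zeta I)^{-1}h$ uniquely determined by $h\in\H$, hence
\begin{equation*}
\ran(L-\zeta I)=\bigl\{h+V(A-\zeta I)^{-1}h\ :\ h\in\H\bigr\}=\ran\bigl(I+V(A-\zeta I)^{-1}\bigr)=\H.
\end{equation*}
Combined with $\zeta\in\hat\rho(L)$, this gives $\zeta\in\rho(L)$, whence $\eta_-(L)=0$ and $L$ is maximal dissipative. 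The only delicate step is this range identification, where one must use that $(A-\zeta I)^{-1}$ is a genuine bounded operator (single-valued on all of $\H$) even though $A$ itself is only a relation; everything else is bookkeeping.
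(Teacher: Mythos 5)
Your proposal is correct, but it takes a genuinely different route from the paper. The paper reduces maximality to showing that $\Nk i{L^*}$ is trivial: it uses Proposition~\ref{linadj} to write $L^*=A^*+V^*$ (legitimate because $V^*\in\cA B(\H)$), and then runs a chain of inequalities exploiting the dissipativity of $A$, of $-V^*$ and of $-A^*$ (the latter two via Theorem~\ref{inadmco}) to force $f=0$ whenever $\vE f{if}\in L^*$. You instead avoid the adjoint calculus entirely: for $\zeta\in\C_-$ with $|\im\zeta|>\no V$ you factor the range as $\ran(L-\zeta I)=\ran\bigl(I+V(A-\zeta I)^{-1}\bigr)$, invoke the Neumann series to get surjectivity, and then use the constancy of $\eta_\zeta(L)$ on the connected component $\C_-$ of $\hat\rho(L)$ to propagate $\eta_\zeta(L)=0$ to all of $\C_-$. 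Your range identification is sound: since $(A-\zeta I)^{-1}\in\cA B(\H)$ is single-valued, the parametrization $h\mapsto g_A+Vf-\zeta f=(I+V(A-\zeta I)^{-1})h$ with $h=g_A-\zeta f$ is well defined and exhausts $\ran(L-\zeta I)$ as $h$ runs over $\H$, even though $A$ may be multivalued. What each approach buys: yours is the classical bounded-perturbation argument, more elementary and independent of Theorem~\ref{inadmco} and Proposition~\ref{linadj}, but it needs the resolvent bound of Theorem~\ref{codrih} and the stability of the deficiency index; the paper's argument works directly at $\zeta=-i$ without choosing a large $|\im\zeta|$, and exercises the adjoint machinery for relations that the rest of the paper develops and reuses.
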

\begin{proof}
  The fact that $L$ is dissipative follows directly from
  \eqref{eq:dissipative-definition}. Closedness is a consequence of
  Remark~\ref{rem:birman-stability-closedness}. It remains to be proven
  that $L$ is maximal, which in turn is reduced to showing that $\Nk i{L^*}$
  is trivial.  Observe that Proposition \ref{comdew} ensures
  $V\in\cA B(\H)$ and therefore $V^{*}\in\cA B(\H)$.  By Proposition
  \ref{linadj}, if $\vE f{if}\in L^{*}$, then there is
  $\vE fs\in A^{*}$ and $\vE ft\in V^{*}$ such that $if=t+s$. Thus
  \bea\label{eq:ifplusts}\vE ft\in V^{*}\,,\quad \vE f{if -t}\in A^{*}\,.\eea On the other
  hand, since $-i\in \rho(A)$, there exists $\vE tk\in(A+i I)^{-1}$,
  which implies that $\vE k{t-ik}\in A$. This inclusion and
  the second one in \eqref{eq:ifplusts} yield $\ip{if-t}k=\ip f{t-ik}$ and therefore
  $\im\ip kt=\im\ip ft$. Thus, one obtains from the dissipativity
  condition that
  \begin{equation}
    \label{eqref:posiimft}
    0\leq\im\ip k{t-ik}\le\im\ip kt=\im\ip ft\,.
  \end{equation}
   By Theorem
  \ref{inadmco}, $-V^{*}$ is dissipative. Using this fact
  and the
  first inclusion in \eqref{eq:ifplusts} one arrives at \beao \im\ip
  ft=-\im\ip f{-t}\leq 0\,, \eeao which, together with
  \eqref{eqref:posiimft}, yields $\im\ip ft=0$. To conclude the proof,
  use the dissipativity of  $-A^{*}$ (Theorem~\ref{inadmco}) and
  the second inclusion in \eqref{eq:ifplusts} to obtain
  \beao 0\leq \im\ip
  f{-if+t}=-\no f^{2}\,, \eeao which implies $f=0$.
 \end{proof} 

 Let us introduce the concept of relative boundedness for relations
 in a way analogous to the same concept for operators \cite[Chap. 4,
 Sec. 1]{MR0407617}.

 A relation $S$ is said to be $T$-bounded if $\dom T\subset \dom S$
 and there exists $c>0$ such that for all $\vE fh\in T$ and
 $\vE fg\in S$ the following holds \bea\label{boundC} \no g\leq
 c\noa{\vE fh}\,.  \eea Observe that if $S$ is $T$-bounded, then $S$
 is an operator.  Furthermore, $S$ is said to be strongly $T$-bounded
 when $c<1$ in \eqref{boundC}. Note that our definition of strong
 relative boundedness is formally stronger than the definition given
 in \cite[Chap. 4, Sec. 1]{MR0407617}, however it can be proven to be
 equivalent by following the argumentation of the proof of
 \cite[Thm.\,3 Sec.\,4 Chap\,3]{MR1192782}.
\begin{lemma}\label{cpreisum}
Let $S$ be strongly $T$-bounded. The relation $T$ is closed  if and only if $T+S$ is closed.
\end{lemma}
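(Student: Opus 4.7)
The plan is to prove both implications by parallel Cauchy-sequence arguments that use strong $T$-boundedness twice: first to transfer the Cauchy property between a sequence of pairs in $T$ and their $S$-images, and then to identify the limit. Throughout I would use that $S$ is an operator (as observed just after the definition of $T$-boundedness), so for $f\in\dom T\subset\dom S$ the vector $Sf$ is unambiguous.

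For the forward direction, I would assume $T$ is closed and take $\vE{f_n}{k_n}\in T+S$ with $\vE{f_n}{k_n}\to\vE fk$. Writing $k_n=h_n+g_n$ with $\vE{f_n}{h_n}\in T$ and $g_n=Sf_n$, strong $T$-boundedness applied to $\vE{f_n-f_m}{h_n-h_m}\in T$ and $\vE{f_n-f_m}{g_n-g_m}\in S$ yields $\no{g_n-g_m}\le c(\no{f_n-f_m}+\no{h_n-h_m})$. Combined with $h_n-h_m=(k_n-k_m)-(g_n-g_m)$ this gives $(1-c)\no{h_n-h_m}\le\no{k_n-k_m}+c\no{f_n-f_m}$, so $(h_n)$ is Cauchy; closedness of $T$ then provides $h\in\H$ with $\vE fh\in T$. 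Consequently $g_n\to k-h$, and applying the bound a second time to $\vE{f_n-f}{h_n-h}\in T$ and $\vE{f_n-f}{g_n-Sf}\in S$ forces $g_n\to Sf$, so $k-h=Sf$ and $\vE fk\in T+S$.

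For the converse, I would assume $T+S$ is closed and let $\vE{f_n}{h_n}\in T$ converge to $\vE fh$. Setting $g_n=Sf_n$, strong $T$-boundedness again makes $(g_n)$ Cauchy with some limit $g$, so $\vE{f_n}{h_n+g_n}\to\vE f{h+g}$ lies in $T+S$ by closedness. I can then choose a decomposition $h+g=h^{*}+g^{*}$ with $\vE f{h^{*}}\in T$ and $\vE f{g^{*}}\in S$; the reduction is to prove $g=g^{*}$, for then $h=h^{*}$ and $\vE fh\in T$. Applying strong $T$-boundedness to $\vE{f_n-f}{h_n-h^{*}}\in T$ and $\vE{f_n-f}{g_n-g^{*}}\in S$ and passing to the limit gives $\no{g-g^{*}}\le c\no{h-h^{*}}=c\no{g^{*}-g}$, from which $c<1$ yields $g=g^{*}$.

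The main delicate point I expect is precisely this final identification in the converse direction: the decomposition of $\vE f{h+g}$ as a sum in $T+S$ is not canonical, and the argument closes only because the strict inequality $c<1$ in the definition of strong $T$-boundedness produces the self-improving estimate $\no{g-g^{*}}\le c\no{g-g^{*}}$. In the forward direction $c<1$ plays the milder role of allowing division by $1-c$ to extract the Cauchy property of $(h_n)$; in both directions mere $T$-boundedness would not suffice.
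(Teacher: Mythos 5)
Your proof is correct and follows essentially the same route as the paper's: both directions rest on the two-sided estimate $(1-c)\noa{\vE fh}\le\noa{\vE f{h+g}}\le(1+c)\noa{\vE fh}$ furnished by strong $T$-boundedness, used first to transfer the Cauchy property and then to identify the limit, and you correctly supply the converse that the paper dismisses as ``analogous''. One small remark: since $S$ is necessarily an operator, the decomposition $h+g=h^{*}+g^{*}$ in your converse is in fact canonical ($g^{*}=Sf$ is forced), though your self-improving estimate $\no{g-g^{*}}\le c\no{g-g^{*}}$ closes the argument in any case.
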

\begin{proof}
  Since $S$ is strongly $T$-bounded, it follows from the triangle
  inequality that there exists $0<c<1$ such that for all $\vE fh\in T$
  and $\vE fg\in S,$ \bea\label{kcttofnoe0} (1-c)\noa{\vE fh}\leq\noa
  {\vE f{h+g}}\leq(1+c)\noa {\vE f{h}}.  \eea

  If $T$ is closed and $\vE fs \in \overline{T+S}$, then
  there are sequences $\llb\vE {f_{n}}{h_{n}}\rrb_{n\in \N}$ in $T$
  and $\llb\vE {f_{n}}{g_{n}}\rrb_{n\in \N}$ in $S$ such that \beao
  \vE {f_{n}}{h_{n}+g_{n}}\rightarrow\vE fs\,.  \eeao

  It follows from \eqref{kcttofnoe0} and the fact that
  $\llb \vE {f_{n}}{h_{n}+g_{n}}\rrb_{n\in \N}$ is a Cauchy sequence,
 that
  $\llb \vE {f_{n}}{h_{n}}\rrb_{n\in\N}$ converges to some
  $\vE fh\in T$. Thereupon, there exists $\vE fg\in S$ such that
  $\vE f{h+g}\in T+S$. Thus, again by \eqref{kcttofnoe0}, one obtains
\begin{align*}
\noa{\vE f{h+g}-\vE fs}&=\lim_{n\rightarrow \infty}\noa{\vE f{h+g}-\vE {f_{n}}{h_{n}+g_{n}}}\\
&=\lim_{n\rightarrow \infty}\noa{\vE {f-f_{n}}{(h-h_{n})+(g-g_{n})}}\\
&\leq\lim_{n\rightarrow \infty}(1+c)\noa{\vE{f-f_{n}}{h-h_{n}}}=0\,.
\end{align*}
Hence $\vE fs\in T+S$, which establishes that $T+S$ is closed. The
proof of the converse assertion is carried out analogously.
\end{proof}
The requirement of $S$ being strongly $T$-bounded in the last result
cannot be relaxed (see a counterexample in \cite[Sec.\,4 Chap\,3]{MR1192782}).
\begin{lemma}\label{dofadjisb}
Let $T$ be a closed linear relation. If $S$ and $S^*$ are strongly
$T$-bounded and strongly $T^*$-bounded, respectively, then
\bea\label{doaditb}
(T+S)^{*}=T^{*}+S^*\,.
\eea
\end{lemma}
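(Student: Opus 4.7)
The plan is to derive the equality by applying Proposition~\ref{linadj} to the pair $(T,S)$. The inclusion $T^*+S^*\subset(T+S)^*$ is already the general fact \eqref{eq:inclusion-sum-adjoints-adjoint-sum}, so it suffices to verify the two hypotheses of that proposition, namely $\dom T\subset\dom S$ and $\dom(T+S)^*\subset\dom S^*$. The first of these is built into the definition of $T$-boundedness, so it is automatic from the strong $T$-boundedness of $S$; all of the substance of the proof will therefore be in establishing the second inclusion.

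For the second inclusion I would assemble three ingredients. First, by Lemma~\ref{cpreisum} the strong $T$-boundedness of $S$ together with the closedness of $T$ implies that $T+S$ is closed, which makes Proposition~\ref{arens} available for $T+S$. Second, any $T$-bounded relation is an operator, hence $\mul S=\{0\}$; unwinding the definition \eqref{eq:sum-scalar-multiplication} of the sum of relations then yields $\mul(T+S)=\mul T+\mul S=\mul T$. Third, the strong $T^*$-boundedness of $S^*$ supplies in particular $\dom T^*\subset\dom S^*$; applying Proposition~\ref{arens} simultaneously to the closed relations $T$ and $\overline S$ (using that $(\overline S)^*=S^*$), this domain inclusion translates after taking orthogonal complements into the multivalued inclusion $\mul\overline S\subset\mul T$.

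Chaining these three facts yields $\mul\overline S\subset\mul T=\mul(T+S)$, and a final application of Proposition~\ref{arens}, in the direction $\mul(\cdot)\to(\dom(\cdot)^*)^\perp$, reverts this to the required inclusion $\dom(T+S)^*\subset\dom S^*$. With both hypotheses of Proposition~\ref{linadj} verified, that proposition delivers $(T+S)^*=T^*+S^*$, which is \eqref{doaditb}.

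The step that I expect to require the most care is the very last one: Proposition~\ref{arens} naturally passes between multivalued parts and the perpendiculars of domains of adjoints, so the straightforward chain gives only an inclusion of the \emph{closures} of the two domains. Arguing that the strong-boundedness hypotheses are genuinely enough to sharpen this to a set-theoretic inclusion of the domains themselves—by tracking back through $T^*+S^*\subset(T+S)^*$, which already forces $\dom T^*\subset\dom(T+S)^*$, and using the strong $T^*$-boundedness to pin down the fibres above each $k\in\dom(T+S)^*$—is the main technical subtlety of the argument.
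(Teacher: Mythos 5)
Your reduction to Proposition~\ref{linadj} is sound as far as it goes, and the first hypothesis ($\dom T\subset\dom S$) is indeed free. The problem is the second hypothesis, $\dom(T+S)^*\subset\dom S^*$, and here your argument has a genuine gap which your final paragraph acknowledges but does not close. The chain through Proposition~\ref{arens} produces $(\dom S^*)^\perp\subset(\dom(T+S)^*)^\perp$, and taking orthogonal complements again yields only $\overline{\dom(T+S)^*}\subset\overline{\dom S^*}$. This is not a technicality that can be polished away: in the classical densely defined case ($T$ a closed densely defined operator, $S$ a strongly $T$-bounded operator with $S^*$ strongly $T^*$-bounded), every multivalued part in your chain is $\{0\}$, every perp statement reads $\{0\}\subset\{0\}$, and the closure inclusion collapses to $\mathcal H\subset\mathcal H$, carrying no information — while the assertion to be proved is precisely the nontrivial Hess--Kato type theorem. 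Moreover, granting the first hypothesis, the inclusion $\dom(T+S)^*\subset\dom S^*$ is equivalent to the conclusion itself (since $(T+S)^*=T^*+S^*$ forces $\dom(T+S)^*=\dom T^*\cap\dom S^*$), so verifying it is not a stepping stone but the whole theorem. The suggestion to ``pin down the fibres above each $k\in\dom(T+S)^*$'' is not an argument, and nothing in your outline uses the quantitative hypothesis that the relative bounds are strictly less than $1$ — which must enter somewhere, since the statement is known to fail for relative bound equal to $1$.

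The paper closes exactly this gap by a different mechanism. Since $\mF W(T+S)$ and $T^*+S^*$ are closed (Lemma~\ref{cpreisum}) and $T^{*}+S^*\subset(T+S)^{*}=(\mF W(T+S))^{\perp}$, it suffices to prove $\oP{\mF W(T+S)}{(T^{*}+S^*)}=\HH$. Starting from the exact orthogonal decomposition $\oP{\mF WT}{T^{*}}=\HH$, the strong bounds (with constant $b<1$) are used to construct an auxiliary operator $\mF Q$ with $\no{\mF Q}<1$, so that $\mF Q+\mF I$ is surjective and every vector of $\HH$ is realized as a sum of an element of $\mF W(T+S)$ and one of $T^{*}+S^{*}$. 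This Neumann-series step is where $c<1$ does its work, and it is the ingredient your proposal is missing; without it, or an equivalent surjectivity argument, the proof does not go through.
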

\begin{proof}
  Due to \eqref{eq:inclusion-sum-adjoints-adjoint-sum},
  $T^{*} + S^*\subset \mF W( T+ S)^\perp$. It follows from Lemma
  \ref{cpreisum} that $\mF W (T+S)$ and $(T^{*}+S^*)$ are closed. Thus, for
  proving \eqref{doaditb}, it suffices to show that
  \bea\label{eqalformad} \oP{\mF W(T+ S)}{(T^{*}+ S^*)}=\HH\,.  \eea
  By hypothesis, there exist $0<b<1$ such that, for any
  $\vE f{h}\in T$, $\vE fg\in S$, $\vE lt\in T^{*}$ and
  $\vE l{s}\in S^{*}$, the inequalities
\begin{equation}\label{01eqtp}
\no {g}^{2}\leq b\noa{\vE fh}^{2}\quad\text{ and }\quad
\no{s}^{2}\leq b\noa{\vE lt}^{2}
\end{equation}
hold. On the other hand one obtains from \eqref{epoar}, using the fact
that $T$ is closed, that \bea\label{deotTa} \oP{\mF W T}{T^{*}}=\HH\,.
\eea Thus, for every $\vE rk\in \HH,$ there exist $\vE fh\in T$ and
$\vE lt\in T^{*}$ such that $\vE rk=\vE{-h+l}{f+t}$. Since $\dom
T\subset \dom S$ and $\dom T^*\subset \dom S^*$, one can find
$g,s\in\H$ such that $\vE fg\in S$ and $\vE ls\in S^{*}.$ Define the
linear relation $\mF Q$ in $\oP{(\HH)}{(\HH)}$ as follows \beao \mF
Q:=\llb\vE {\tilde r}{\tilde s}\ :\ \tilde r=\vE rk \mbox{ and }
\tilde s=\vE {-g}s \rrb\,.\eeao Due to the fact that the norm in $\HH$
is equivalent to \eqref{noaHH}, it follows from \eqref{01eqtp} and
\eqref{deotTa} that, for any $\vE {\tilde r}{\tilde s}\in \mF Q$,
\begin{align*}
\no {\tilde s}^{2}&=\no {g}^{2}+\no s^{2}\\
&\leq b\left(\noa{\vE{f}{h}}^{2}+\noa{\vE lt}^{2}\right)\\
&\leq b\left(\noa{\vE{-h}{f}}^{2}+\noa{\vE lt}^{2}\right)\\&=b\noa{\vE{-h+l}{f+t}}^{2}=b\no{\tilde r}^{2}\,.
\end{align*}
Then $\mF Q\in \cA B(\HH)$ with $\no{\mF Q}<1,$ which implies that
\beao
\ran(\mF Q+\mF I)=\HH.\eeao
Therefore, for any $\vE vw\in \HH,$ there exists
$\vE {\tilde r}{\tilde s}\in \mF Q$ such that
\begin{align*}
\vE vw&=\tilde s+\tilde r\\&=\vE {-g-h+l}{s+f+t}\\&=\mF W\vE f{h+g}+\vE l{t+s}\in\oP{\mF W(T+S)}{(T^{*}+S^*)}\, ,
\end{align*}
whence \eqref{eqalformad} follows.
\end{proof}

In order to state the following assertion, let us introduce a subclass
of the class of symmetric
relations. A relation $A$ is said to be positive (denoted by $A\geq0$) whenever
\beao
\ip fg\geq0\quad \text{ for all }\quad \vE fg\in A\,.
\eeao

\begin{theorem}\label{psewpos}
Let $A$ and $B$ be two selfadjoint relations such that $B$ is positive and strongly $A$-bounded. Then $A+iB$ is a maximal dissipative relation.
\end{theorem}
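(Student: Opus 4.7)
The plan is to verify that $L := A + iB$ satisfies the three properties that together give maximal dissipativity: closedness, dissipativity, and vanishing deficiency index $\eta_-(L) = 0$. Closedness will follow from Lemma \ref{cpreisum} applied to $T = A$ and $S = iB$: the relation $A$ is closed because it is selfadjoint, and $iB$ inherits strong $A$-boundedness from $B$ with the same constant, since $\no{ig} = \no g$. For dissipativity, I would just compute directly: for $\vE f{h+ig} \in L$ with $\vE fh \in A$ and $\vE fg \in B$,
\begin{equation*}
\im \ip{f}{h+ig} = \im\ip{f}{h} + \re\ip{f}{g} = 0 + \ip{f}{g} \geq 0,
\end{equation*}
where the first term vanishes by symmetry of $A$ and the second is nonnegative by positivity of $B$.

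For the maximality step, the natural route is to compute $L^{*}$ explicitly via Lemma \ref{dofadjisb}. Its hypotheses hold here: $A$ is closed; $iB$ is strongly $A$-bounded; and by \eqref{epoar} together with $B^{*} = B$ one has $(iB)^{*} = \cc{i}\,B^{*} = -iB$, which is strongly $A^{*} = A$-bounded with the same constant. Hence
\begin{equation*}
L^{*} = (A + iB)^{*} = A^{*} + (iB)^{*} = A - iB.
\end{equation*}

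It then remains to show $\Nk{i}{L^{*}} = \{\vE 00\}$, since $\eta_{-}(L) = \dim \Nk{i}{L^{*}}$. Suppose $\vE f{if} \in L^{*} = A + (-i)B$; then there exist $\vE fh \in A$ and $\vE fg \in B$ with $h - ig = if$, equivalently $h = i(f + g)$. The selfadjointness of $A$ forces $\ip{f}{h} \in \R$, but
\begin{equation*}
\ip{f}{h} = \ip{f}{i(f+g)} = -i\bigl(\no f^{2} + \ip{f}{g}\bigr),
\end{equation*}
and the quantity in parentheses is real because $\ip{f}{g} \in \R_{\geq 0}$ by positivity of $B$. Reality of $\ip{f}{h}$ therefore demands $\no f^{2} + \ip{f}{g} = 0$, and since both summands are nonnegative, $f = 0$. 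The only nontrivial step is verifying the hypotheses of Lemma \ref{dofadjisb} to identify $L^{*}$; once that identification is in hand, the deficiency computation is immediate.
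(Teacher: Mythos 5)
Your proof is correct and follows essentially the same route as the paper's: closedness via Lemma \ref{cpreisum}, dissipativity by direct verification of \eqref{eq:dissipative-definition}, and maximality by identifying $(A+iB)^{*}=A-iB$ through Lemma \ref{dofadjisb} and then using the selfadjointness of $A$ together with the positivity of $B$ to force $f=0$. The only blemish is a harmless sign slip --- with the paper's convention (antilinear in the first argument) one has $\ip{f}{i(f+g)}=+i\left(\no f^{2}+\ip fg\right)$ rather than $-i\left(\no f^{2}+\ip fg\right)$ --- which does not affect the conclusion.
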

\begin{proof}
  By a direct verification of \eqref{eq:dissipative-definition}, one
  establishes that $A+iB$ is dissipative. The closedness
  follows from Lemma \ref{cpreisum} after noting that $iB$ is
  also strongly $A$-bounded.

  It remains to prove that $\Nk i{(A+iB)^{*}}$ is a trivial
  relation. By Lemma \ref{dofadjisb}, one has
  $(A+iB)^{*}=A-iB$. For an arbitrary \beao \vE f{if}\in A-iB\,, \eeao there exist
  $\vE fh\in A$ and $\vE fg\in B$ such that $if=h-ig.$ Thus, $\vE
  f{i(f+g)}\in A$ and, due to the selfadjointness of $A$, one concludes
  \beao \no f^{2}+\ip fg=\im \ip f{i(f+g)}=0\,.  \eeao Since $B$ is
  positive, the last equality yields that $f=0.$ Therefore $A+iB$ is maximal dissipative.
 \end{proof}




\begin{remark}
  \label{rem:dissipative-properties-with-mul}
One can verify that the operator part of a closed
dissipative relation is a closed dissipative operator. Conversely, for
a closed relation $L$ such that $\dom L \subset (\mul L)^{\perp}$, if
$\pE L$ is dissipative, then $L$ is dissipative.
\end{remark}

\begin{proposition}\label{lsimresin}
  Let $L$ be a closed linear relation.  If $L$ is (maximal)
  dissipative, then $L_{L}$ is (maximal) dissipative operator in
  $\mm{\mul L}$ and \bea\label{indeq} \eta_-(L_{L})=\eta_-(L).  \eea
  Conversely, if $\mul L\subset (\dom L)^{\perp}$ and $L_{L}$ is
  (maximal) dissipative, then $L$ is (maximal) dissipative and, therefore,
  \eqref{indeq} holds.
\end{proposition}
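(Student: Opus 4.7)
My plan is to exploit the orthogonal decomposition $L = L_L \oplus L_\infty$ available under the hypothesis $\dom L \subset (\mul L)^\perp$, which is guaranteed in the forward direction by Proposition \ref{domdisp} and in the converse direction by the assumption $\mul L \subset (\dom L)^\perp$ (equivalent to $\dom L \subset (\mul L)^\perp$). Once this decomposition is available, the dissipativity transfers in both directions and the deficiency index equality reduces to a simple orthogonal-sum calculation of ranges.

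For the forward implication, suppose $L$ is dissipative. Since $L_L \subset L$, the inequality $\im\ip{f}{g}\geq 0$ for $\vE fg \in L_L$ holds automatically, and Proposition~\ref{opvsrc0} ensures $L_L$ is a closed operator in $(\mul L)^\perp$. For the converse implication, I would take any $\vE fs \in L$, write $s = g+h$ with $\vE fg \in L_L$ and $h \in \mul L$, then use $f\in \dom L \subset (\mul L)^\perp$ to obtain $\ip fh = 0$ and conclude $\im\ip fs = \im\ip fg \geq 0$ from the dissipativity of $L_L$.

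The key technical step is the deficiency index equality, which I would handle as follows. Using \eqref{depolr}, for any $\zeta\in\C_-$ one has
\begin{equation*}
\ran(L - \zeta I) = \ran(L_L - \zeta I) + \mul L,
\end{equation*}
and this is an \emph{orthogonal} sum in $\H$ because $\ran(L_L-\zeta I)\subset (\mul L)^\perp$ (since $L_L\subset\mm{\mul L}$). Taking orthogonal complements in $\H$ yields
\begin{equation*}
\ran(L-\zeta I)^\perp = (\mul L)^\perp \ominus \ran(L_L-\zeta I),
\end{equation*}
which is exactly the deficiency space of $L_L$ computed inside the ambient Hilbert space $(\mul L)^\perp$. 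Taking dimensions gives $\eta_\zeta(L)=\eta_\zeta(L_L)$, hence $\eta_-(L)=\eta_-(L_L)$.

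The maximality statements follow at once from this equality together with the characterization of maximal dissipativeness by $\eta_-=0$: $L$ is maximal dissipative iff $\eta_-(L)=0$ iff $\eta_-(L_L)=0$ iff $L_L$ is maximal dissipative. I do not anticipate a serious obstacle; the only point that needs a bit of care is verifying that the sum $\ran(L_L-\zeta I)+\mul L$ is orthogonal and that the orthogonal complement is taken consistently in $\H$ versus $(\mul L)^\perp$, but this is immediate from the definition $L_L = L\cap\mm{\mul L}$.
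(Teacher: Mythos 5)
Your proposal is correct and follows essentially the same route as the paper: the forward direction uses Proposition~\ref{domdisp} together with the decomposition \eqref{depolr} and the orthogonal-complement computation $\oM{\H}{\ran(L-\zeta I)}=\oM{(\mul L)^{\perp}}{\ran(L_{L}-\zeta I)}$ to obtain \eqref{indeq}, and the converse transfers dissipativity from $L_L$ to $L$ exactly as in Remark~\ref{rem:dissipative-properties-with-mul}. The only cosmetic difference is that you verify the dissipativity transfer by hand where the paper cites that remark; the substance is identical.
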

\begin{proof}
  Suppose that $L$ is closed dissipative. It follows from Proposition
  \ref{domdisp} and \eqref{opars} that $L_{L}$ is a closed,
  dissipative operator in $\oP{(\mul L)^{\perp}}{(\mul L)^{\perp}}$.
  Moreover, \eqref{depolr} implies that
 \begin{align*}\label{Lndirest}
  \oM{\H}{\ran(L-\zeta I)} &=\oM{\H}{[\oP{\ran(L_{L}-\zeta I_{L})}{\mul L}]}\\
  &=\oM{[\oM{\H}{\mul L}]}{\ran(L_{L}-\zeta I_{L})}\\&=\oM{(\mul L)^{\perp}}{\ran(L_{L}-\zeta I_{L})}\,.
 \end{align*}
Whence \eqref{indeq} follows.

For the converse assertion, one again uses \eqref{opars} to conclude
that $\pE L$ is dissipative. Thus, taking into account
Remark~\ref{rem:dissipative-properties-with-mul}, one has that $L$ is
dissipative. Finally, due to \eqref{indeq}, $L$ is maximal if and only
if $L_L$ is maximal.
\end{proof}

\section{Dissipative extensions of dissipative relations}
\label{sec:dissipative-extensions}

This section is devoted to the development of the theory of extensions
of dissipative relations. We consider only extensions without exit to
a larger space (cf. \cite[Appendix\,1]{MR1255973}). Our approach is
similar to the one used in the von Neumann theory. There are other
ways of dealing with extensions of operators (see for instance
\cite[Sec.\,14]{MR2953553}).

A relation $V$ is a contraction if it is bounded (and then it is
actually an operator)
with $\no V\leq 1$. It is known that if a relation $V$ satisfies $V^{-1}\subset
V^{*}$, then $V$ is a particular kind of contraction called isometric
operator for which $\no V=1$ holds. Moreover if $V^{-1}=V^{*}$ the
operator $V$ is said to be unitary.

Denote $\tE_e:=\{\zeta\in\C\ :\ |\zeta|>1\}$. For any contraction $V$,
one verifies that $\tE_{e}\subset\hat\rho(V)$. Therefore, if $V$ is a
closed contraction, then the deficiency index
$\eta_{\zeta}(V)$ (see \eqref{ranTp}) is constant when $\zeta$ runs
through $\tE_e$. In this case, define
\begin{equation*}
  \eta_e(V):=\eta_{\zeta}(V), \ \ \zeta\in \tE_{e}\,.
\end{equation*}
Following the argumentation of \cite[Thm.\,4.2.2]{MR1192782} (which
deals with isometric operators), it can
be proven that, for any closed contraction $V$,
\bea\label{diocc}
\eta_e(V)=\dim{(\oM{\H}{\dom {V}})}.
\eea
If $\hat V$ is also a closed contraction such that $V\subset\hat V$, then
\bea\label{idocceos}
\eta_{e}(V)=\eta_{e}(\hat V)+\eta_{0},
\eea
where $\eta_{0}=\dim (\oM{\dom \hat V}{\dom V})$.

A contraction $V$ is said to be maximal if it is closed and
$\eta_e{(V)}=0$. A maximal contraction does not admit
contractive extensions, that is, extensions that are also
contractions. This will be clear from the following statement, which
is related to \cite[Sec.\,4.4]{MR1192782} and
\cite[Thm.\,5.1]{MR0361889}.
\begin{theorem}\label{extisocon}
  Let $V$ be a closed contraction. The operator $\hat V$ is a closed
  contractive extension of $V$ if and only if there exists a unique
  closed contraction $W$ such that
 \begin{equation}\label{etocex}
 \hat V=\oP VW,
\end{equation}
and
 \begin{equation}\label{etocex00}
 2|\re(\ip fh-\ip gk)|\leq (\no f^{2}-\no g^{2}) +(\no h^{2}-\no k^{2}),
\end{equation}
for all $\vE fg\in V$ and $\vE hk\in W$. Note that the right-hand side
of \eqref{etocex} is an orthogonal sum of relations (see
(\ref{eq:sets-relations-operations})). Moreover, if $V$ is isometric,
then the condition \eqref{etocex00} turns into the condition that
either
\begin{equation}\label{eq:dom-range-perp}
  \dom V\perp\dom W\quad\text{ or }\quad\ran V \perp\ran W
\end{equation}
holds. In view of \eqref{etocex}, the conditions in
(\ref{eq:dom-range-perp}) hold simultaneously.
 \end{theorem}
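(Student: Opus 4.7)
The plan rests on the elementary identity
\begin{equation*}
\|f+h\|^2-\|g+k\|^2 = (\|f\|^2-\|g\|^2)+(\|h\|^2-\|k\|^2)+2\re(\ip fh-\ip gk),
\end{equation*}
which directly converts the contractive property of the element $\vE{f+h}{g+k}$ of $\hat V$ into a statement about the three summands appearing in \eqref{etocex00}.

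For the forward direction, I would assume that $\hat V$ is a closed contractive extension of $V$ and set $W:=\oM{\hat V}{V}=\hat V\cap V^\perp$. This is an intersection of closed subspaces of $\HH$, hence a closed relation, and by construction $\hat V=\oP{V}{W}$. Since every $\vE hk\in W$ also belongs to $\hat V$, one has $\|k\|\le\|h\|$, so $W$ is itself a closed contraction. To obtain \eqref{etocex00}, I would note that for any $\vE fg\in V$ and $\vE hk\in W$, both $\vE{f+h}{g+k}$ and $\vE{f-h}{g-k}$ lie in $\hat V$, so the contractive property yields $\|g\pm k\|\le\|f\pm h\|$; inserting each of these two inequalities into the identity above and combining the resulting bounds produces exactly the two-sided estimate on $\re(\ip fh-\ip gk)$ stated in \eqref{etocex00}.

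For the converse, given closed contractions $V$ and $W$ with $V\subset W^\perp$ (the orthogonality implicit in $\oP{V}{W}$) and satisfying \eqref{etocex00}, I would define $\hat V:=\oP VW$, which is a closed subspace of $\HH$. The orthogonality forces $V\cap W=\{\vE{0}{0}\}$, so every element of $\hat V$ decomposes uniquely as $\vE{f+h}{g+k}$ with $\vE fg\in V$ and $\vE hk\in W$. Feeding the lower bound from \eqref{etocex00},
\begin{equation*}
2\re(\ip fh-\ip gk)\ge-\bigl[(\|f\|^2-\|g\|^2)+(\|h\|^2-\|k\|^2)\bigr],
\end{equation*}
into the identity yields $\|g+k\|\le\|f+h\|$; this simultaneously shows that $\hat V$ is an operator (take $f+h=0$) and that it is a contraction. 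Uniqueness of $W$ is immediate because any contraction producing such a decomposition must equal $\hat V\cap V^\perp$.

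The main obstacle will be the isometric reduction. When $V$ is isometric, $\|f\|^2=\|g\|^2$ and \eqref{etocex00} collapses to $2|\re(\ip fh-\ip gk)|\le\|h\|^2-\|k\|^2$. Applying this with $\vE{\lambda f}{\lambda g}\in V$ for arbitrary $\lambda\in\C$ gives $2|\re(\cc\lambda(\ip fh-\ip gk))|\le\|h\|^2-\|k\|^2$; since the left-hand side is unbounded in $|\lambda|$ unless $\ip fh-\ip gk=0$, I would conclude $\ip fh=\ip gk$ for all admissible $f,g,h,k$. Combined with the $\HH$-orthogonality $\ip fh+\ip gk=0$ built into \eqref{etocex}, this forces $\ip fh=\ip gk=0$, establishing both $\dom V\perp\dom W$ and $\ran V\perp\ran W$ simultaneously. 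The subtle point is that, because of this $\HH$-orthogonality, either perp condition in isolation already implies the other, so stating \emph{either} one suffices as an equivalent reformulation of \eqref{etocex00}. The converse direction is immediate: if both inner products vanish, then \eqref{etocex00} reduces to $0\le\|h\|^2-\|k\|^2$, which holds because $W$ is a contraction.
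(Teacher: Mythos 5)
Your proposal is correct and follows essentially the same route as the paper: the same decomposition $W=\oM{\hat V}{V}$, the same expansion of $\no{\alpha g+k}^{2}\leq\no{\alpha f+h}^{2}$ with $\alpha=\pm1$ to obtain \eqref{etocex00}, and the same unboundedness argument (growing the scalar until the inequality is violated) to force $\ip fh=\ip gk$ and hence \eqref{eq:dom-range-perp} in the isometric case. The only cosmetic differences are that you rescale the element of $V$ rather than the coefficient in the linear combination, and that you write out the converse and uniqueness steps that the paper dismisses as ``straightforward'' and ``trivial''.
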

\begin{proof}
  Suppose that $\hat V$ is a closed contractive extension of $V$ and
  consider $W=\oM{\hat V}V$ then $W$ is a closed contraction and one
  verifies that $\hat V=\oP VW$.

  For every $\vE fg\in V$ and $\vE hk\in W$, one has that
  $\vE {\alpha f+h}{\alpha g+k}\in \hat V$ and
  $\no {\alpha g+k}\leq\no {\alpha f+h}$ with $\alpha\in \C.$ Then
\beao
|\alpha|^{2}\no g^{2}+\no k^{2}+2\re\overline\alpha\ip gk&=&\no{\alpha g+k}^{2}\\
&\leq&  \no{\alpha f+h}^{2}\\
&=&|\alpha|^{2}\no f^{2}+\no h^{2}+2\re\overline\alpha\ip fh,
\eeao
whence
\bea\label{etocex01}
-2\re\overline\alpha(\ip fh-\ip gk)\leq |\alpha|^{2}(\no f^{2}-\no g^{2}) +(\no h^{2}-\no k^{2}).
\eea
Thus, setting $\alpha:=\pm 1$, the inequality \eqref{etocex00} holds.

If $V$ is isometric then $\no f=\no g$. It turns out that in this case
\begin{equation}\label{eq:aux-cond-contraction}
\ip fh=\ip gk
\end{equation}
since
otherwise there would exist $\tau>0$ such that $$\tau |\ip fh-\ip
gk|>\no h^{2}-\no k^{2}.$$
This inequality contradicts \eqref{etocex01} when $\alpha=-\tau|\ip
fh-\ip gk|/( \ip hf-\ip kg)$. Therefore, since $V$ and $W$ are
orthogonal, it follows from \eqref{eq:aux-cond-contraction} that
\beao
0&=&\ipa{\vE fg}{\vE hk}\\&=&\ip fh+\ip gk=2\ip fh=2\ip gk.
\eeao
The uniqueness of the decomposition is trivial. The converse assertion
is straightforward.
\end{proof}

Note that under the assumption that $V$ is isometric in
\eqref{etocex}, the number $\eta_{0}$ in (\ref{idocceos}) is given by
$\eta_{0}=\dim\dom W$. Moreover, in this case, $\hat V$ is isometric
if and only if $W$ is isometric.

We now turn to the question of extending closed dissipative relations and, in
particular, closed symmetric relations. To this end, we introduce a
fractional linear transformation of a relation as follows.
\begin{definition}
  Following \cite{MR0361889}, for a relation $T$ and $\zeta \in \C$,
  define the Cayley transform of $T$ by \beao \CA \zeta T:=\llb \vE
  {g-\overline{\zeta}f}{g-\zeta f}\ :\ \vE fg \in T\rrb
  =I+(\cc{\zeta}-\zeta)(T-\cc{\zeta}I)^{-1}\eeao

Also, let us define the $Z$ transform of $T$ (cf. \cite{MR2093073})
\beao
\CZ \zeta T:= \cc{\zeta}\CA \zeta T
\eeao
This is a linear relation which satisfies
\begin{align}\label{CtoT}
 \dom \CZ \zeta T=\ran (T-\overline{\zeta}I)\,,& \quad
 \ran \CZ \zeta T=\ran (T-\zeta I)\,,\\ \mul \CZ \zeta T=\ker
 (T-\overline{\zeta} I)\,,& \quad \ker \CZ \zeta T=\ker (T-\zeta I)\,.
\end{align}
\end{definition}

The $Z$ transform has the following properties (see \cite[Lems.\,2.6,
2.7]{MR0361889} and \cite[Props.\,3.6, 3.7]{MR2093073}).
For any $\zeta \in\C$:
\begin{enumerate}[{(i)}]
 \item \label{cay01} $\CZ \zeta {\CZ \zeta T}=T$.
 \item $\CZ \zeta T\subset \CZ \zeta S\ \Leftrightarrow\ T\subset S$.\label{tres}
 \item $\CZ {-\zeta} T=-\CZ \zeta {-T}$.
 \item If $\abs{z}=1$, then $\CZ \zeta {T^{-1}}=\CZ {\overline {\zeta}} T=(\CZ \zeta {T})^{-1}$.
  \end{enumerate}
For any $\zeta \in \C\backslash \R$:
  \begin{enumerate}[{(i)}]
  \setcounter{enumi}{4}
 \item $\CZ { \zeta} {T\dotplus S}=\CZ { \zeta} T\dotplus \CZ { \zeta} S$.
 \item \label{cay07} If  $\zeta=\pm i$, then $\CZ\zeta{\oP TS}=\oP{\CZ \zeta T}{\CZ \zeta S}$.
 \item\label{cay05} $\CZ { \zeta} {T^*}=(\CZ {\overline{\zeta}} {T})^{*}$.
 \item\label{cay10} $\CZ \zeta T$ is closed $\Leftrightarrow$ $T$ is closed.
\end{enumerate}
\begin{proposition}\label{caydac}
  Under the assumption that $\zeta \in\C_{+}$ and $\abs{\zeta}=1$, the linear relation $L$
  is (closed, maximal) dissipative if and only if $V=\CZ\zeta L$ is a
  (closed, maximal) contraction.
\end{proposition}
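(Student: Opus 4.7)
The plan is to exploit the explicit form of $\CZ\zeta L$ available under $|\zeta|=1$, converting the dissipativity inequality for $L$ directly into the contractive inequality for $V$, and then to read off the closedness and maximality assertions from the $Z$-transform toolkit collected just above the statement.

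For the equivalence dissipative $\Leftrightarrow$ contraction, observe that with $|\zeta|=1$ the identity $\cc\zeta(g-\zeta f)=\cc\zeta g-f$ holds, so the generic element of $V=\CZ\zeta L$ is $\vE{g-\cc\zeta f}{\cc\zeta g-f}$ with $\vE fg\in L$. A direct expansion of both squared norms, in which the $|\zeta|^{2}$ diagonal terms cancel, yields the key identity
\begin{equation*}
\no{g-\cc\zeta f}^{2}-\no{\cc\zeta g-f}^{2}=4(\im\zeta)\,\im\ip fg.
\end{equation*}
Since $\zeta\in\C_{+}$ the prefactor $4\im\zeta$ is strictly positive, so this quantity is nonnegative for every $\vE fg\in L$ precisely when $\im\ip fg\geq0$ for all such pairs, i.\,e. when $L$ is dissipative. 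Equivalently, every $\vE uv\in V$ satisfies $\no v\leq\no u$, which is the defining inequality of a contraction (and automatically forces $V$ to be an operator, since the bound applied with $u=0$ gives $v=0$).

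Closedness is immediate from property (\ref{cay10}): $V=\CZ\zeta L$ is closed iff $L$ is closed. For maximality, suppose $L$ is closed dissipative so that $V$ is a closed contraction. From (\ref{CtoT}) one has $\dom V=\ran(L-\cc\zeta I)$ with $\cc\zeta\in\C_{-}$, and combining this with formula (\ref{diocc}) gives
\begin{equation*}
\eta_{e}(V)=\dim\bigl(\oM{\H}{\dom V}\bigr)=\dim\ran(L-\cc\zeta I)^{\perp}=\eta_{-}(L).
\end{equation*}
Hence $V$ is a maximal contraction iff $L$ is a maximal dissipative relation.

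The main obstacle is nothing more than getting the key norm identity right under the convention that the inner product is antilinear in the first slot; the hypothesis $|\zeta|=1$ is what makes the $|\zeta|^{2}$ diagonal terms disappear cleanly so that only the cross terms involving $\zeta-\cc\zeta=2i\im\zeta$ survive. Once that identity is in hand, the remaining pieces reduce to invoking (\ref{cay10}), (\ref{CtoT}), and (\ref{diocc}).
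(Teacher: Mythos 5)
Your proof is correct and follows essentially the same route as the paper's: the same norm identity $\no{g-\cc\zeta f}^{2}-\no{\cc\zeta g-f}^{2}=4(\im\zeta)\,\im\ip fg$ for the dissipative/contractive equivalence, closedness via property (\ref{cay10}), and maximality via $\eta_e(V)=\dim(\oM{\H}{\dom V})=\dim\ran(L-\cc\zeta I)^{\perp}=\eta_-(L)$ using (\ref{CtoT}) and (\ref{diocc}). The only cosmetic difference is that you read the converse implication off the same identity (legitimate, since every element of $V$ arises from an element of $L$), whereas the paper applies the involution $\CZ\zeta{\CZ\zeta L}=L$ and performs the symmetric computation.
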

\begin{proof}
Suppose that $L$ is a dissipative relation and let
$\vE{g-\overline\zeta f}{\cc{\zeta}g-f}\in\CZ\zeta L=V$, with $\vE fg\in L$. Then
\bea\label{dvsis}
\no {g-\overline\zeta f}^{2}-\no {\cc{\zeta}g-f}^{2}&=&2\re(-\zeta\ip fg)+
2(\re\overline\zeta\ip fg),\nonumber \\
&=&4(\im \zeta)\im \ip fg\geq 0.
\eea
Thus $V$ is a contraction.

Conversely, let $\vE{g-\cc{\zeta}f}{\overline\zeta g-f}\in\CZ\zeta V=L,$ with $\vE fg\in V.$ Then
\bea\label{dvsis0}
\im\ip{g-\cc{\zeta}f}{\overline\zeta g-f}&=&\im(\overline\zeta\no g^{2}
+\zeta\no f^{2}-2\re\ip gf)\nonumber \\
&=&\im \zeta (\no f^{2}-\no g^{2})\geq 0,
\eea
therefore $L$ is dissipative. Note that $L$ and $V$ are simultaneously
closed due to (\ref{cay10}). As regards the maximality,
\begin{align*}
\eta_{e}(V)&=\dim(\oM{\H}{\dom V})\\
&=\dim(\oM{\H}{\dom\CZ \zeta L})\quad(\text{due to \eqref{CtoT}}) \\
           &=\dim(\oM{\H}{\ran(L-\overline\zeta I)})=\eta_{-}(L)\,.
\end{align*}

\end{proof}
\begin{remark}\label{rctosti}
 It follows from \eqref{dvsis} and \eqref{dvsis0} that, for all
 $\abs{\zeta}=1$, a relation is symmetric if and only if its $Z$ transform is isometric.
 Moreover,  Proposition~\ref{caydac} shows that the $Z$
 transform gives a one-to-one correspondence between contractions and
 dissipative relations.
\end{remark}

\begin{proposition}
  Let $L$ be a closed dissipative relation. Then $\hat L$ is a closed
  dissipative extension of $L$ if and only if there exists a unique
  closed dissipative relation $S$ such that \beao \hat L=\oP{L}{S},
  \eeao and, for all $\vE fg\in L$ and $\vE hk\in S$,
  $$\im (\ip fg+\ip hk)\geq|\im(\ip fk-\ip gh)|.$$
\end{proposition}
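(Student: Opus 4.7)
The plan is to reduce the problem to the already-established decomposition result for contractive extensions (Theorem \ref{extisocon}) by passing through the $Z$-transform with parameter $\zeta=i$, for which Proposition \ref{caydac} gives a bijection between closed dissipative relations and closed contractions, and for which property (\ref{cay07}) states that $\CZ i$ respects orthogonal sums.

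More precisely, I would set $V:=\CZ i L$ and $\hat V:=\CZ i\hat L$. By Proposition \ref{caydac} together with property (\ref{tres}) of the $Z$-transform, $\hat L$ is a closed dissipative extension of $L$ if and only if $\hat V$ is a closed contractive extension of $V$. Theorem \ref{extisocon} then supplies a unique closed contraction $W$ with $\hat V=\oP VW$ subject to condition \eqref{etocex00}. Setting $S:=\CZ i W$ yields a closed dissipative relation by Proposition \ref{caydac}, and applying $\CZ i$ to $\hat V=\oP VW$ using the involution property (\ref{cay01}) and the orthogonal-sum property (\ref{cay07}) (valid because $\zeta=i$), one obtains
\begin{equation*}
\hat L=\CZ i\hat V=\CZ i(\oP VW)=\oP{\CZ i V}{\CZ i W}=\oP L S.
\end{equation*}
Uniqueness of $S$ follows from the uniqueness of $W$ and the injectivity of $\CZ i$.

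It remains to translate condition \eqref{etocex00} for $V$ and $W$ into the claimed inequality for $L$ and $S$. Given $\vE fg\in L$ and $\vE hk\in S$, the corresponding elements of $V$ and $W$ are $\vE{g+if}{-ig-f}$ and $\vE{k+ih}{-ik-h}$. A direct expansion of the norms gives
\begin{equation*}
\no{g+if}^{2}-\no{-ig-f}^{2}=4\im\ip fg,\qquad \no{k+ih}^{2}-\no{-ik-h}^{2}=4\im\ip hk,
\end{equation*}
while expanding the cross inner products yields
\begin{equation*}
\ip{g+if}{k+ih}-\ip{-ig-f}{-ik-h}=-2i\bigl(\ip fk-\ip gh\bigr),
\end{equation*}
so that $2\abs{\re(\ip{g+if}{k+ih}-\ip{-ig-f}{-ik-h})}=4\abs{\im(\ip fk-\ip gh)}$. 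Dividing by $4$, the condition \eqref{etocex00} becomes precisely $\im(\ip fg+\ip hk)\ge\abs{\im(\ip fk-\ip gh)}$, which is the stated inequality.

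The argument has no real obstacle; the whole content is in marshalling the already-proven correspondences. The one spot to be careful is choosing $\zeta=i$ rather than some other unimodular $\zeta\in\C_+$: property (\ref{cay07}), which turns orthogonality of $V$ and $W$ in $\HH$ into orthogonality of $L$ and $S$, is stated only for $\zeta=\pm i$, so this choice is forced. The remaining step, the inner-product identities above, is purely computational and the main place where a sign error could creep in.
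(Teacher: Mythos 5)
Your proposal is correct and follows essentially the same route as the paper, whose proof is exactly the one-line instruction to apply the $Z$-transform at $\zeta=i$ to \eqref{etocex} and \eqref{etocex00} using the orthogonal-sum property of $\CZ i\cdot$. Your explicit verification of the inner-product identities (which the paper omits) checks out, including the key computations $\no{g+if}^{2}-\no{-ig-f}^{2}=4\im\ip fg$ and $\ip{g+if}{k+ih}-\ip{-ig-f}{-ik-h}=-2i(\ip fk-\ip gh)$.
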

\begin{proof}
To prove the assertion, one applies the $Z$ transform at $i$ to (\ref{etocex}) and
(\ref{etocex00}) taking into account property (vi).
\end{proof}

Let us now turn to the question of generalizing the so-called second von
Neumann formula (cf. \cite[Thm.\,6.2]{MR0361889}). This generalization
cannot be achieved for dissipative extensions of an arbitrary dissipative
relation since, in general, the conditions (\ref{eq:dom-range-perp})
are not satisfied.

The next assertion can be found in \cite[Thm.\,6.1]{MR0361889} and
 corresponds to the first von Neumann formula. It characterizes the
 adjoint of a closed symmetric relation by means of its deficiency
 space \eqref{defspace}.  We omit the proof since it can be obtained
 by the same argumentation used in the proof of the first von Neumann
 formula (cf. \cite[Thm.\,4.4.1]{MR1192782}).
\begin{theorem}\label{Fovon00}
 For a closed symmetric relation $A$, one has
 \bea\label{von01}
 A^*=
 A\dotplus \Nk{\overline{\zeta}}{A^*}\dotplus \Nk \zeta {A^*}\,,\qquad\zeta\in\C\backslash\R.
 \eea
 For $\zeta\in \{i,\,-i\}$, the direct sum in \eqref{von01} is orthogonal.
\end{theorem}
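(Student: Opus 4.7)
The plan is to mirror the classical first von Neumann argument in graph language: the non-trivial inclusion will come from an orthogonal decomposition of $\H$ into the range of $A-\zeta I$ and the kernel of $A^{*}-\overline{\zeta}I$, while the orthogonality at $\zeta=\pm i$ will follow from a short computation in $\HH$.

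The easy inclusion $A\dotplus\Nk{\overline{\zeta}}{A^*}\dotplus\Nk{\zeta}{A^*}\subset A^*$ is immediate: $A\subset A^*$ since $A$ is symmetric, and $\Nk{\zeta}{A^*},\Nk{\overline{\zeta}}{A^*}\subset A^*$ by definition of the deficiency space. For the reverse inclusion, the first step is to invoke Remark~\ref{rem:quasiregular-symmetric} to obtain $\C\setminus\R\subset\hat\rho(A)$, and then Proposition~\ref{acot-cerr1} to conclude that $\ran(A-\zeta I)$ is closed for every $\zeta\in\C\setminus\R$. Combined with \eqref{poHs} applied to $A-\zeta I$ and the identity $(A-\zeta I)^{*}=A^{*}-\overline{\zeta}I$, this yields the orthogonal decomposition
\[
\H=\ran(A-\zeta I)\oplus\ker(A^{*}-\overline{\zeta}I).
\]

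Given $\vE fg\in A^*$, I would split $g-\zeta f=u+w$ according to this decomposition, choose the unique $\vE{f_0}{g_0}\in A$ with $g_0-\zeta f_0=u$ (uniqueness following from $\zeta\in\hat\rho(A)$, which makes $(A-\zeta I)^{-1}$ a bounded operator), set $f_1:=(\overline{\zeta}-\zeta)^{-1}w\in\ker(A^*-\overline{\zeta}I)$, and define $f_2:=f-f_0-f_1$. A direct subtraction gives $\vE{f_2}{\zeta f_2}=\vE fg-\vE{f_0}{g_0}-\vE{f_1}{\overline{\zeta}f_1}$, which lies in $A^*$ as the latter is a subspace; thus $f_2\in\ker(A^*-\zeta I)$, yielding the desired decomposition.

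For directness of the sum, if the three summands total $\vE 00$, then $g_0-\zeta f_0+(\overline{\zeta}-\zeta)f_1=0$ expresses zero as the sum of an element of $\ran(A-\zeta I)$ and an element of $\ker(A^{*}-\overline{\zeta}I)$, so orthogonality forces both to vanish and hence $f_1=0$. The symmetric argument with $\zeta$ and $\overline{\zeta}$ interchanged gives $f_2=0$, and $\vE{f_0}{g_0}=\vE 00$ then follows by linearity. Finally, for $\zeta=\pm i$, I would verify the three pairwise orthogonalities in $\HH$ by direct computation: for $\vE{f_0}{g_0}\in A$ and $\vE{f_1}{if_1}\in A^*$ the adjoint identity yields $\ip{g_0}{f_1}=i\ip{f_0}{f_1}$, which makes $\ipa{\vE{f_0}{g_0}}{\vE{f_1}{if_1}}=\ip{f_0}{f_1}+i\ip{g_0}{f_1}=0$ thanks to the antilinearity in the first slot; the orthogonality with $\Nk{-i}{A^*}$ and between the two deficiency spaces are handled by identical computations.

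The main obstacle is the careful bookkeeping of the antilinear-in-first-slot convention: the cancellation in the $\pm i$ orthogonality step hinges on this convention being applied correctly when passing from $\vE{f_1}{\pm if_1}\in A^*$ to the corresponding relation between $\ip{g_0}{f_1}$ and $\ip{f_0}{f_1}$. Beyond that, the argument is structurally identical to the operator case, with $\HH$ playing the role of the graph-equipped domain.
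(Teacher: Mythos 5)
Your argument is correct and is exactly the classical first von Neumann argument, transplanted to the graph setting, that the paper itself declines to write out (it omits the proof, pointing to the same argumentation as in the operator case). All the ingredients you invoke are available in the paper — $\C\setminus\R\subset\hat\rho(A)$ from Remark~\ref{rem:quasiregular-symmetric}, closedness of $\ran(A-\zeta I)$ from Proposition~\ref{acot-cerr1}, the decomposition \eqref{poHs} with $(A-\zeta I)^{*}=A^{*}-\overline{\zeta}I$ — and your bookkeeping of the antilinear-in-the-first-slot convention in the $\zeta=\pm i$ orthogonality check is right (the only cosmetic remark being that the orthogonality of the two deficiency spaces to each other is a purely algebraic cancellation that does not even need the adjoint identity).
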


The following assertion is a generalization of the second von Neumann formula.
\begin{theorem}\label{Fovon01d}
  Let $A$ be a closed symmetric relation. $\hat A$ is a closed
  dissipative (symmetric) extension of $A$ if and only if, for a fixed
  $\zeta\in\C_{+}$ ($\zeta\in \C\backslash\R$), \bea\label{von02d}
  \hat A=A\dotplus (\mF V-\mF I)D, \eea where
  $D\subset \Nk \zeta {A^*}$ is a closed bounded relation and
  $\mF V:D\rightarrow \Nk {\overline{\zeta}} {A^*}$ is a closed
  contraction (isometry) in $\oP{(\HH)}{(\HH)}$.  For $\zeta=i$, the
  direct sum in \eqref{von02d} is orthogonal.
 \end{theorem}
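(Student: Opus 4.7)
The plan is to reduce this generalization of the second von Neumann formula to the extension theorem for contractions (Theorem \ref{extisocon}) via the $Z$ transform, treating first the distinguished case $\zeta = i$ where property (vi) of the $Z$ transform produces the orthogonality asserted in \eqref{von02d}.

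By Proposition \ref{caydac} and Remark \ref{rctosti}, $\hat A$ is a closed dissipative (respectively symmetric) extension of the closed symmetric relation $A$ if and only if $\hat V := \CZ{i}{\hat A}$ is a closed contractive (respectively isometric) extension of the closed isometry $V := \CZ{i}{A}$. Theorem \ref{extisocon} then provides a unique closed contraction (isometry) $W$ with $\hat V = \oP{V}{W}$, subject to $\dom V \perp \dom W$ and $\ran V \perp \ran W$ in the isometric case, or \eqref{etocex00} in the general case. Applying $\CZ{i}{\cdot}$ back and invoking property (vi) yields
\begin{equation*}
\hat A = \CZ{i}{\hat V} = \oP{\CZ{i}{V}}{\CZ{i}{W}} = \oP{A}{\CZ{i}{W}}\,,
\end{equation*}
an orthogonal direct sum in $\HH$ as asserted.

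The remaining task is to rewrite $\CZ{i}{W}$ in the form $(\mF V - \mF I)D$. From $\dom V = \ran(A + iI)$ and \eqref{poHs}, the orthogonality $\dom W \perp \dom V$ forces $\dom W \subset \ker(A^* - iI)$, and similarly $\ran W \subset \ker(A^* + iI)$; thus for each $u \in \dom W$, the pair $\vE{u}{iu}$ lies in $\Nk{i}{A^*}$ and $\vE{Wu}{-iWu}$ lies in $\Nk{-i}{A^*}$. Setting $D := \{\vE{u}{iu} : u \in \dom W\}$ and $\mF V\vE{u}{iu} := \vE{W(iu)}{-iW(iu)}$ (the specific assignment chosen by matching $(\mF V - \mF I)\vE{u}{iu}$ against the explicit form $\vE{Wf+if}{-iWf-f}$ of elements of $\CZ{i}{W}$), a direct computation yields $(\mF V - \mF I)D = \CZ{i}{W}$. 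Contractivity (respectively isometry) of $\mF V$ in the $\HH$-norm is equivalent to that of $W$ in the $\H$-norm via $\noa{\vE{u}{iu}}^{2} = 2\no{u}^{2}$, and closedness of $D$ reduces to closedness of $\dom W$.

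For general $\zeta \in \C_{+}$ (dissipative) or $\zeta \in \C \setminus \R$ (symmetric), one can argue directly without the $Z$ transform: any closed dissipative extension satisfies $\hat A \subset A^*$ (by applying the dissipativity inequality to $\vE{a + \lambda h}{b + \lambda k}$ for $\vE{a}{b} \in A$, $\vE{h}{k} \in \hat A$ and letting $\lambda$ vary in $\C$, the linear-in-$\lambda$ terms force $A \subset \hat A^*$), so Theorem \ref{Fovon00} writes $\hat A = A \dotplus M$ with $M \subset \Nk{\overline\zeta}{A^*} \dotplus \Nk{\zeta}{A^*}$; evaluating the dissipativity inequality on elements of $M$ gives $\im\zeta(\no{h}^2 - \no{f}^2) \ge 0$, which forces $h$ to determine $f$ via a contractive map and yields $M = (\mF V - \mF I)D$ for a unique pair $(D, \mF V)$. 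The principal obstacle I anticipate is the bookkeeping of this bijective correspondence between $W$ (or $M$) and $(D, \mF V)$: one must verify that contractivity, isometry, and closedness are preserved under the correspondence, and that Theorem \ref{extisocon}'s orthogonality translates faithfully into the orthogonality in \eqref{von02d} at $\zeta = i$.
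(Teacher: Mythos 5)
Your treatment of the case $\zeta=i$ is essentially the paper's argument: pass to $Z$ transforms, invoke Theorem~\ref{extisocon} to split off a closed contraction $W$ with $\dom W\subset\ker(A^*-iI)$ and $\ran W\subset\ker(A^*+iI)$, and repackage $W$ as a pair $(D,\mF V)$ with $(\mF V-\mF I)D=\CZ iW$; your assignment $\mF V\vE u{iu}=\vE{W(iu)}{-iW(iu)}$ is a correct reparametrization of the one used in the paper, and the norm identity you use (whether in the sum norm or in \eqref{noaHH} --- both components scale identically, so contractivity transfers either way) is fine. The divergence, and the gap, lies in how you handle a general $\zeta\in\C_{+}$ (resp.\ $\C\backslash\R$). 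The paper does \emph{not} abandon the $Z$ transform there: since Proposition~\ref{caydac} requires $\abs\zeta=1$, it applies $\CZ{\zeta/\abs\zeta}{\cdot}$ to the \emph{rescaled} relation $\abs\zeta^{-1}A$ and $\abs\zeta^{-1}\hat A$, so that Theorem~\ref{extisocon} again delivers a closed contraction $W$ with $\dom W\subset\ker(A^*-\zeta I)$ and $\ran W\subset\ker(A^*-\cc\zeta I)$, and then sets $D=\llb\vE v{\zeta v}\,:\,v\in\dom W\rrb$ and $\mF V\vE v{\zeta v}=\tfrac{\zeta}{\abs\zeta}\vE w{\cc\zeta w}$, obtaining $(\mF V-\mF I)D=\abs\zeta\,\CZ{\zeta/\abs\zeta}W$. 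With this device every closedness assertion (of $D$, of $\mF V$, and of $\hat A$ in the converse direction) is inherited from Theorem~\ref{extisocon}.

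Your proposed direct argument for general $\zeta$ is sound as far as dissipativity goes: the inclusion $\hat A\subset A^*$ obtained by polarizing the dissipativity inequality is correct, as is the inequality $\im\zeta\,(\no u^2-\no v^2)\ge0$ on the complement $M$, which makes the map $\vE u{\zeta u}\mapsto\vE v{\cc\zeta v}$ well defined and contractive. But the argument has a real hole at exactly the point you flag as ``bookkeeping'': for $\zeta\neq\pm i$ the sum $\hat A=A\dotplus M$ is merely direct, not orthogonal, so the closedness of $M$ (hence of $D$ and of $\mF V$) does not follow from the closedness of $\hat A$ and $A$ without an additional step --- one must show that the oblique projection onto the deficiency part in the decomposition \eqref{von01} is bounded --- and, symmetrically, in the converse direction the closedness of $A\dotplus(\mF V-\mF I)D$ is left unestablished. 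To close the proof you should either supply that projection estimate or adopt the paper's normalization $\CZ{\zeta/\abs\zeta}{\abs\zeta^{-1}\,\cdot\,}$, which reduces every $\zeta$ to the unit-modulus case and lets the extension theorem for contractions absorb all the closedness issues.
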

\begin{proof}
  It follows from
  Proposition~\ref{caydac} that $\CZ{\zeta/\abs{\zeta}}{\abs{\zeta}^{-1}A}$ and $\CZ{\zeta/\abs{\zeta}}{\abs{\zeta}^{-1}\hat{A}}$
  are, respectively,  a closed isometric and a closed contraction (isometry) whenever
  $\zeta \in \C_{+}$ ($\zeta\in \C\backslash\R$). Moreover, since
  $A\subset \hat A$, one has  $\CZ{\zeta/\abs{\zeta}}{\abs{\zeta}^{-1}A}\subset\CZ{\zeta/\abs{\zeta}}{\abs{\zeta}^{-1}\hat{A}}$
  in view of property (\ref{tres}).

Theorem \ref{extisocon} implies the existence of a closed contraction (isometry) $W$ such that
 \bea\label{von001co}
  \CZ{\zeta/\abs{\zeta}}{\abs{\zeta}^{-1}\hat{A}}=\oP{\CZ{\zeta/\abs{\zeta}}{\abs{\zeta}^{-1}A}}{W},
 \eea
 where, due to (\ref{eq:dom-range-perp}), 
\bea\label{doanroW}
\begin{split}
\dom W&\subset \oM{\H}{\dom \CZ{\zeta/\abs{\zeta}}{\abs{\zeta}^{-1}A}}=\oM{\H}{\ran(A-\overline{\zeta}I)}=\ker (A^*-\zeta I),\\
 \ran W&\subset \oM{\H}{\ran \CZ{\zeta/\abs{\zeta}}{\abs{\zeta}^{-1}A}}=\oM{\H}{\ran(A-\zeta I)}=\ker (A^*-\overline{\zeta} I).
\end{split}
\eea By applying the $Z$ transform to \eqref{von001co}, using (\ref{cay01}), one
obtains \bea\label{von002co} \hat A=A\dotplus\abs{\zeta}\CZ {\zeta/\abs{\zeta}} W\,.  \eea
Observe that $\dom W$ is closed. Consider the linear relation
\bea\label{daux00} D=\llb \vE v{\zeta v}\ :\ v\in\dom W\rrb, \eea
whence, in view of \eqref{doanroW}, $D\subset \Nk \zeta {A^*}$. Thus $D$
is bounded and then closed.

For every $\vE vw\in W$, define the relation $\mF V$ in
$\oP{(\HH)}{(\HH)}$ with $\dom\mF V=D$ such that
$$\mF V \vE v{\zeta v}=\frac{\zeta}{\abs{\zeta}}\vE w{\overline{\zeta} w}\,.$$
It follows from  \eqref{doanroW} that
$\mF VD\subset\Nk {\overline\zeta} {A^*}$ and, since $W$ is a contraction (isometry):
\bea\label{eoiss00}
\noa{\frac{\zeta}{\abs{\zeta}}\vE w{\overline{\zeta} w}}&=&\no w +\no{\overline{\zeta} w}\nonumber\\
 &\leq&\no v+\no{\zeta v}=\noa {\vE v{\zeta v}}.
\eea
Thus $\mF V$ is a closed contraction (isometry because the equality
holds in \eqref{eoiss00} when $W$ is an isometry). Hence
\begin{equation}
  \label{eq:z-transform-of-w}
  \begin{split}
 \abs{\zeta}\CZ {\zeta/\abs{\zeta}} W&=\llb \vE{\frac{\zeta}{\abs{\zeta}}w-v}{\abs{\zeta}w-\zeta v}\ :\ \frac{\zeta}{\abs{\zeta}}\vE vw\in W \rrb\\
 &=\llb \mF V \vE{v}{\zeta v}  - \vE{v}{\zeta v} \ :\ \vE{v}{\zeta v}\in D\rrb=(\mF V-\mF I)D\,.
\end{split}
\end{equation}
Therefore \eqref{von002co} is transformed into
 $\hat A=A\dotplus (\mF V-\mF I)D$. For $\zeta=i$, the orthogonality of
 the direct sum in \eqref{von002co} follows from property (\ref{cay07}).

We now prove the converse assertion. Define
$$W=\llb \frac{\zeta}{\abs{\zeta}}\vE vw\ :\ \vE v{\zeta v}\in D \mbox{ and } \frac{\zeta}{\abs{\zeta}}\vE w{\cc{\zeta} w}\in \mF VD\rrb\,.$$
Since $\mF V$ is a contraction (isometry), one has
\bea\label{coWwlc}
\noa {\frac{\zeta}{\abs{\zeta}}v}-\noa {\frac{\zeta}{\abs{\zeta}}
  w}&=&\frac1{1+|\zeta|}\left[(1+|\zeta|)\no v-(1+|\zeta|)\no w\right]\nonumber \\
&=&\frac1{1+|\zeta|}\left(\noa {\vE v{\zeta v}}-\noa{\frac{\zeta}{\abs{\zeta}}\vE w{\overline{\zeta} w}}\right)\geq0.
\eea
From this, taking into account that $\dom W=\dom D$, one concludes
that $W$ is a closed contraction (isometry because the equality holds
in \eqref{coWwlc} when $V$ is an isometry).

Also, reading \eqref{eq:z-transform-of-w} backwards, one arrives at
\eqref{von002co}. Now, multiply \eqref{von002co} by $\abs{\zeta}^{-1}$
and apply $\CZ{\zeta/\abs{\zeta}}{\cdot}$ to both sides of the resulting
equality. This yields \eqref{von001co}, where the orthogonality
is a consequence of
\begin{align*}
 \dom W&\subset\ker (A^*-\zeta I)=\oM{\H}{\ran(A-\overline{\zeta}I)}= \oM{\H}{\dom \CZ{\zeta/\abs{\zeta}} {\abs{\zeta}^{-1}A}},\\
 \ran W&\subset\ker (A^*-\overline\zeta I)=\oM{\H}{\ran(A-{\zeta}I)}= \oM{\H}{\ran \CZ{\zeta/\abs{\zeta}} {\abs{\zeta}^{-1}A}}\,.
\end{align*}
The assertion then follows from \eqref{von001co} in view of
Theorem~\ref{extisocon} and Proposition \ref{caydac}.
\end{proof}

As a
consequence of \eqref{von02d}, any dissipative extension $S$ of
a symmetric relation $A$ satisfies $A\subset S\subset A^*$ .

\begin{corollary}\label{extresind0}
  If $A$ is a closed symmetric relation and $\hat A$ is a closed
  dissipative extension of $A$, then \bea\label{fodiidr}
  \eta_{-}(A)=\eta_{-}(\hat A)+\dim[\hat A/A].  \eea
\end{corollary}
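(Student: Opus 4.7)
The plan is to reduce the identity to the corresponding dimension formula \eqref{idocceos} for contractions via the $Z$-transform at $\zeta=i$. Set $V:=\CZ i A$ and $\hat V:=\CZ i {\hat A}$. By Remark~\ref{rctosti}, $V$ is a closed isometry, and by Proposition~\ref{caydac}, $\hat V$ is a closed contraction; by the monotonicity property~(\ref{tres}) of the $Z$-transform, $V\subset\hat V$. The computation at the end of the proof of Proposition~\ref{caydac} shows $\eta_e(V)=\eta_-(A)$, and the same computation applied to $\hat A$ gives $\eta_e(\hat V)=\eta_-(\hat A)$. Applying \eqref{idocceos} then yields
\begin{equation*}
\eta_-(A)=\eta_-(\hat A)+\eta_0,\qquad \eta_0:=\dim\bigl(\oM{\dom \hat V}{\dom V}\bigr).
\end{equation*}

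The remaining task is to identify $\eta_0$ with $\dim[\hat A/A]$. I would invoke Theorem~\ref{Fovon01d} at $\zeta=i$ to write
\begin{equation*}
\hat A=\oP{A}{(\mF V-\mF I)D},
\end{equation*}
with $D\subset \Nk i{A^*}$ a closed bounded relation and $\mF V\colon D\to\Nk{-i}{A^*}$ a closed contraction; the sum is orthogonal because $\zeta=i$. Hence $\dim[\hat A/A]=\dim[(\mF V-\mF I)D]$. To see that $\mF V-\mF I$ is injective on $D$, note that $\mF V v=v$ for some $v\in D$ would place $v$ in $\Nk i{A^*}\cap\Nk{-i}{A^*}$; but elements of these two spaces have the respective forms $\vE u{iu}$ and $\vE u{-iu}$, so their intersection is $\{\vE 00\}$ (equivalently, these are the orthogonal summands appearing in Theorem~\ref{Fovon00} for $\zeta=i$). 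Therefore $\dim[\hat A/A]=\dim D$.

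Finally, from the construction carried out inside the proof of Theorem~\ref{Fovon01d}, $D=\{\vE v{iv}:v\in\dom W\}$, where $W$ is the closed contraction appearing in the Theorem~\ref{extisocon} decomposition $\hat V=\oP V W$, so $\dim D=\dim\dom W$. Since $V$ is isometric, Theorem~\ref{extisocon} additionally gives $\dom V\perp\dom W$ in $\H$, hence $\oM{\dom\hat V}{\dom V}=\dom W$ and $\eta_0=\dim\dom W=\dim D=\dim[\hat A/A]$, which chains with the preceding display to give \eqref{fodiidr}. The main point to track carefully is the simultaneous orthogonality at three levels (in $\HH$ for $\hat A=\oP{A}{\,\cdot\,}$, in $\H$ for $\dom V\perp\dom W$, and under the $Z$-transform's behavior on $\oplus$), which is exactly why the choice $\zeta=i$ is essential via property~(\ref{cay07}).
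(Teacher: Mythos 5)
Your proposal is correct and follows essentially the same route as the paper's proof: pass to the $Z$-transform at $i$, apply the dimension formula \eqref{idocceos} to the contractive extension $\CZ i{\hat A}=\oP{\CZ i A}{W}$, and identify $\dim[\hat A/A]$ with $\dim(\dom W)$ via the injectivity of $\mF V-\mF I$ on $D$. Your explicit verification of that injectivity (through $\Nk i{A^*}\cap\Nk{-i}{A^*}=\llb\vE 00\rrb$) and of $\oM{\dom\hat V}{\dom V}=\dom W$ merely fills in details the paper leaves to the reader.
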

\begin{proof}
  In the proof of Theorem \ref{Fovon01d} one verifies that
  $(\mF V-\mF I)$ gives a one-to-one correspondence. Thus, by
  \eqref{von02d} and by \eqref{daux00},
\begin{align*}
 \dim[\hat A/A]&=\dim [(\mF V-\mF I)D]\\
 &=\dim(\dom W)\,.
\end{align*}
Hence, taking into account \eqref{CtoT}, it follows from
\eqref{von001co} and \eqref{idocceos} that
\begin{align*}
\eta_-(A)&=\eta_e(\CZ i A)\\
&=\eta_e(\CZ i {\hat A})+\dim(\dom W)\\
&=\eta_-(\hat A)+\dim[\hat A/A].
\end{align*}
\end{proof}
Since $\Nk{\zeta}{A^{*}}=\Nk{-\zeta}{-A^{*}}$, for a closed symmetric
relation $A$, the equality $\eta_{+}(A)=\eta_{-}(-A)$ holds. This,
together with Corollary \ref{extresind0}, yields that if $\hat A$ is a
closed symmetric extension of $A$, then \bea\label{fodiidrsy}
\eta_{\pm}(A)=\eta_{\pm}(\hat A)+\dim[\hat A/A].  \eea A closed
symmetric relation $A$ is selfadjoint if and only if it has indices
$(0,0)$ (see \eqref{eq:deficiency-indices-symmetric}). For this reason, the selfadjoint relations are maximal
dissipative.

There is another way to construct maximal dissipative extensions of
symmetric relations on the basis of formula \eqref{von01}.
\begin{proposition}\label{mdrofsrn}
  Let $A$ be a closed symmetric relation with finite deficiency index
  $\eta_{-}(A)=n$. Then, for every $\zeta\in\C_{+}$ fixed, the relation
  \bea\label{densa0} \hat A:=A\dotplus\Nk\zeta{A^{*}} \eea is the unique
  maximal dissipative extension of $A$ such that
  $\dim\Nk\zeta{\hat A}=n$.
\end{proposition}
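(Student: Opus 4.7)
The plan is to split the proof into existence and uniqueness, in both cases leaning on Theorem~\ref{Fovon01d}.

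\textbf{Existence.} I would apply Theorem~\ref{Fovon01d} with the choice $D:=\Nk{\zeta}{A^{*}}$ (the full deficiency space, a closed bounded subspace of dimension $n$) and $\mF V$ the zero relation from $D$ into $\Nk{\bar\zeta}{A^{*}}$, which is trivially a closed contraction. Then $(\mF V-\mF I)D=-D=D$ as linear subspaces, and Theorem~\ref{Fovon01d} returns $\hat A=A\dotplus \Nk{\zeta}{A^{*}}$ as a closed dissipative extension of $A$. The direct sum is well defined: if $\vE{v}{\zeta v}\in A\cap \Nk{\zeta}{A^{*}}$, symmetry of $A$ together with $\zeta\in\C_{+}$ force $(\im \zeta)\no{v}^{2}=\im\ip{v}{\zeta v}=0$, hence $v=0$. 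Maximality follows from Corollary~\ref{extresind0}: the identity $\dim[\hat A/A]=\dim\Nk{\zeta}{A^{*}}=n=\eta_{-}(A)$ gives $\eta_{-}(\hat A)=0$. Finally, the inclusion $\Nk{\zeta}{A^{*}}\subset \Nk{\zeta}{\hat A}$ is clear, and for the reverse inclusion one decomposes $\vE{f}{\zeta f}\in \hat A$ as $\vE{a}{b}+\vE{v}{\zeta v}$ with $\vE{a}{b}\in A$ and $v\in\ker(A^{*}-\zeta I)$; matching second coordinates forces $b=\zeta a$, so $\vE{a}{\zeta a}\in A$ and symmetry yields $a=0$, giving $\dim \Nk{\zeta}{\hat A}=n$.

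\textbf{Uniqueness.} Let $\tilde A$ be any maximal dissipative extension of $A$ with $\dim \Nk{\zeta}{\tilde A}=n$. Theorem~\ref{Fovon01d} provides a closed subspace $D\subset \Nk{\zeta}{A^{*}}$ and a closed contraction $\mF V\colon D\to \Nk{\bar\zeta}{A^{*}}$ with $\tilde A=A\dotplus (\mF V-\mF I)D$. Maximality combined with Corollary~\ref{extresind0} forces $\dim D=n$, so $D=\Nk{\zeta}{A^{*}}$. The remaining task is to show $\mF V=0$; for this I would pass through the $Z$ transform. Setting $\mu:=\zeta/\abs{\zeta}$, Proposition~\ref{caydac} shows that $V_{\tilde A}:=\CZ{\mu}{\abs{\zeta}^{-1}\tilde A}$ and $V_{A}:=\CZ{\mu}{\abs{\zeta}^{-1}A}$ are closed contractions, with $V_{A}$ isometric, and Theorem~\ref{extisocon} produces a closed contraction $W$ with $\dom W\subset\ker(A^{*}-\zeta I)$, $\ran W\subset\ker(A^{*}-\bar\zeta I)$, and both domain and range orthogonal to those of $V_A$, such that $V_{\tilde A}=\oP{V_{A}}{W}$. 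Maximality of $\tilde A$ forces $\dom W=\ker(A^{*}-\zeta I)$, of dimension $n$; the orthogonality of the sum and injectivity of the isometry $V_A$ yield $\ker V_{\tilde A}=\ker W$, and by \eqref{CtoT} this kernel coincides with $\ker(\tilde A-\zeta I)$, which has dimension $n$ by hypothesis. Hence $\ker W=\dom W$, i.e.\ $W=0$, and tracing the correspondence in the proof of Theorem~\ref{Fovon01d} backwards gives $\mF V=0$. Therefore $(\mF V-\mF I)D=-D=\Nk{\zeta}{A^{*}}$ and $\tilde A=A\dotplus \Nk{\zeta}{A^{*}}=\hat A$.

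The main technical subtlety will be the bookkeeping identifying $W$ from Theorem~\ref{extisocon} with $\mF V$ from Theorem~\ref{Fovon01d} under the $Z$ transform and the rescaling by $\abs\zeta$; fortunately this identification is precisely the one already carried out inside the proof of Theorem~\ref{Fovon01d}, so no genuinely new computation is required.
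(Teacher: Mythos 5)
Your proof is correct, and its existence half is essentially the paper's: both check directness of the sum, identify $\Nk{\zeta}{\hat A}$ with $\Nk{\zeta}{A^{*}}$, and obtain maximality from Corollary~\ref{extresind0}. (Your device of feeding $D=\Nk{\zeta}{A^{*}}$ and $\mF V=0$ into the ``if'' direction of Theorem~\ref{Fovon01d} to get dissipativity and closedness in one stroke is a legitimate alternative to the paper's direct verification of \eqref{eq:dissipative-definition} together with the observation that $\hat A$ is closed because $\dim\Nk{\zeta}{A^{*}}$ is finite.) Where you genuinely diverge is uniqueness. The paper's argument is a three-line containment: any dissipative extension $L$ of $A$ satisfies $L\subset A^{*}$, hence $\Nk{\zeta}{L}\subset\Nk{\zeta}{A^{*}}$, and the hypothesis $\dim\Nk{\zeta}{L}=n$ forces equality by counting dimensions; therefore $\hat A=A\dotplus\Nk{\zeta}{A^{*}}\subset L$, and since $\hat A$ has already been shown to be maximal dissipative, it admits no proper dissipative extension, so $L=\hat A$. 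Your route --- writing $\tilde A=A\dotplus(\mF V-\mF I)D$, forcing $D=\Nk{\zeta}{A^{*}}$ via Corollary~\ref{extresind0}, and then proving $\mF V=0$ by passing to the $Z$ transform and comparing $\dim\ker W$ with $\dim\dom W$ --- is sound: the identification $\ker W=\ker V_{\tilde A}=\ker(\tilde A-\zeta I)$ does follow from the simultaneous orthogonality in \eqref{eq:dom-range-perp}, the injectivity of the isometric summand, and \eqref{CtoT}. But it buys nothing over the containment argument and incurs exactly the bookkeeping you flag as the main subtlety; note in particular that the paper's uniqueness step never needs the maximality of $L$ itself, only that of $\hat A$, and never re-enters the parametrization of Theorem~\ref{Fovon01d}.
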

\begin{proof}
  Fix $\zeta \in \C_{+}$. Note that \eqref{von01} implies
  $$A\cap\Nk\zeta{A^{*}}=\left\{\vE 00\right\}\,.$$
Also, \eqref{von01} and \eqref{densa0} yield $\Nk
\zeta {A^*}=\Nk\zeta{\hat A}$.

  Appealing to \eqref{eq:dissipative-definition}, one verifies that $\hat A$ is dissipative. Since $A$ is closed
  and $\dim \Nk{\zeta}{A^{*}}=\eta_{-}(A)$ is finite, $\hat A$ is
  closed. Besides, from Corollary \ref{extresind0}, one has
\begin{align*}
\eta_{-}(\hat A)&=\eta_{-} (A)-\dim [\hat A/A]\\&=\eta_{-} (A)-\dim \Nk\zeta{A^{*}}=0\,.
\end{align*}
Thus, $\hat A$ is a maximal dissipative extension of $A$.

To prove uniqueness, let $L$ be a maximal dissipative extension of $A$
such that $\Nk\zeta{L}=n$. Since $L\subset A^*$,  $\Nk
\zeta {L}\subset\Nk \zeta {A^*}$ holds, so taking into account the
dimension of the spaces, one concludes that $\Nk \zeta {L}=\Nk \zeta
{A^*}$.  Therefore
$\hat A=A\dotplus\Nk\zeta{A^{*}}\subset L$. To complete the proof, it
only remains to recall that $\hat A$ is maximal.
\end{proof}
Note that $\hat A$ is a maximal dissipative, nonselfadjoint
relation. The next assertion complements the previous one.
\begin{proposition}\label{mdrofsrn0}
  Let $A$ be a closed symmetric relation with finite deficiency
  indices $(n,n)$. If $\alpha\in\hat\rho(A)\cap\R$, then
  \bea\label{saeosrws} L:=A\dotplus\Nk\alpha{A^{*}} \eea is the unique
  maximal dissipative extension of $A$ such that
  $\dim\Nk{\alpha}{L}=n$. Moreover $L$ is selfadjoint.
\end{proposition}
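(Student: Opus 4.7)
The strategy is to exhibit $L$ as a selfadjoint extension of $A$, from which maximal dissipativity is automatic (by the remark preceding the proposition that selfadjoint relations are maximal dissipative), and then to derive uniqueness from the inclusion $A\subset L'\subset A^*$ that is valid for any dissipative extension $L'$ of $A$. First I would check that $\dim\Nk\alpha{A^*}=n$: since $\hat\rho(A)$ is open in $\C$ and $\alpha\in\R$, any open disk about $\alpha$ lying in $\hat\rho(A)$ meets both $\C_+$ and $\C_-$, so the connected component of $\alpha$ in $\hat\rho(A)$ contains points from both half-planes; constancy of the deficiency index on connected components together with the assumed indices $(n,n)$ gives $\eta_\alpha(A)=n$. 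Moreover, $\alpha\in\hat\rho(A)$ forces $\ker(A-\alpha I)=\{0\}$, so $A\cap\Nk\alpha{A^*}=\{\vE00\}$ and \eqref{saeosrws} is a genuine direct sum; closedness of $L$ is immediate because $\Nk\alpha{A^*}$ is finite-dimensional and $A$ is closed.

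The crucial step will be to show that $L$ is in fact symmetric (not merely dissipative). Given $\vE fg\in L$, I decompose $\vE fg=\vE{f_1+f_2}{g_1+\alpha f_2}$ with $\vE{f_1}{g_1}\in A$ and $f_2\in\ker(A^*-\alpha I)$, and expand $\ip fg$. The symmetry of $A$ eliminates $\im\ip{f_1}{g_1}$; the reality of $\alpha$ eliminates $\im(\alpha\no{f_2}^2)$; and the adjoint pairing $\ip{\alpha f_2}{f_1}=\ip{f_2}{g_1}$ (using $\alpha\in\R$) yields $\im\ip{f_2}{g_1}=-\alpha\im\ip{f_1}{f_2}$, which cancels the remaining cross-term. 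Thus $\im\ip fg=0$, so $L$ is symmetric. Applying the symmetric-extension deficiency formula \eqref{fodiidrsy} to $L$ then gives $\eta_\pm(L)=\eta_\pm(A)-\dim[L/A]=n-n=0$, so $L$ is selfadjoint, and in particular maximal dissipative.

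For uniqueness, let $L'$ be any maximal dissipative extension of $A$ with $\dim\Nk\alpha{L'}=n$. By the consequence of Theorem~\ref{Fovon01d} noted in the text, $A\subset L'\subset A^*$, so $\Nk\alpha{L'}\subset\Nk\alpha{A^*}$; equality of finite dimensions forces $\Nk\alpha{L'}=\Nk\alpha{A^*}$, hence $L=A\dotplus\Nk\alpha{A^*}\subset L'$, and the maximality of $L$ forces $L=L'$. I anticipate the only nontrivial step to be the symmetry computation in the second paragraph, where the reality of $\alpha$ enters essentially through the adjoint identity; the remainder is index bookkeeping and standard inclusion arguments.
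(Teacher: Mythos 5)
Your proof is correct and follows essentially the same route as the paper: directness of the sum via $\ker(A-\alpha I)=\{0\}$, symmetry and closedness of $L$ (which the paper leaves as "one verifies directly" and you carry out explicitly, correctly exploiting $\alpha\in\R$ in the adjoint pairing), selfadjointness from the index formula \eqref{fodiidrsy}, and uniqueness via $A\subset L'\subset A^*$ together with maximality, exactly as in Proposition~\ref{mdrofsrn}. The only detail worth adding is the one-line observation that $\Nk{\alpha}{L}=\Nk{\alpha}{A^*}$, so that $L$ itself satisfies the defining condition $\dim\Nk{\alpha}{L}=n$.
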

\begin{proof}
  Since $A$ is symmetric, it follows that
  $(\C\backslash\R)\cup\{\alpha\}$ is in a connected component of
  $\hat\rho(A)$. Thus $\dim \Nk\alpha{A^{*}}=n$.

If one assumes that $\vE f{\alpha f}\in A$, then
$\vE 0f\in (A-\alpha I)^{-1}$. It follows from the fact that
$(A-\alpha I)^{-1}$ is an operator that $f=0$. Hence $A$ and $\Nk\alpha{A^{*}}$ are linearly
independent.

Taking into account that $\alpha\in \R$, one verifies that $L$ is
symmetric and closed directly from \eqref{saeosrws}. Hence,
$L\subset A^*$. Using again \eqref{saeosrws}, one concludes that
$\Nk\alpha{L}=\Nk\alpha{A^{*}}$.

As in the proof of
Proposition~\ref{mdrofsrn}, one obtains, on the basis of
\eqref{fodiidrsy}, that $\eta_{\pm}(L)=0$. Uniqueness can also be
proved along the lines of the proof of Proposition~\ref{mdrofsrn}.
\end{proof}
Similar to the operator case, one can characterize the spectrum of a
selfadjoint extension of the symmetric relation $A$ in the intervals
intersecting $\hat\rho(A)$.
\begin{proposition}\label{lainesp}
  Let $A$ be a closed symmetric relation with finite deficiency
  indices $(n,n)$ (see \eqref{eq:deficiency-indices-symmetric}) and
  $L$ be a selfadjoint extension of $A$. If a real interval
  $\Delta$ is in $\hat\rho(A)$, then the spectrum of $L$ in $\Delta$
  only has isolated eigenvalues of multiplicity at most $n$.
\end{proposition}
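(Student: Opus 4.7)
My plan splits the assertion into two independent parts: the multiplicity bound and the isolation of eigenvalues of $L$ in $\Delta$. Both rely on the inclusion $L \subset A^*$ and a careful analysis of the deficiency spaces $\dom \Nk{\zeta}{A^*}$.

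For the multiplicity bound, from $A \subset L = L^*$ one gets $L \subset A^*$, so $\ker(L - \zeta I) \subset \ker(A^* - \zeta I) = \dom \Nk{\zeta}{A^*}$ for every $\zeta \in \Delta$. By Remark~\ref{rem:quasiregular-symmetric}, $\C \setminus \R \subset \hat\rho(A)$, and since $\hat\rho(A)$ is open, the real interval $\Delta$ lies in a connected component of $\hat\rho(A)$ that meets $\C \setminus \R$. The constancy of the deficiency index on the connected components of $\hat\rho(A)$ then gives $\dim \dom \Nk{\zeta}{A^*} = \eta_\pm(A) = n$ throughout $\Delta$, which yields the bound.

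For the isolation part, my approach is to reduce to an analytic Fredholm argument. First, Proposition~\ref{domdisp} (valid since $L$ is maximal dissipative) together with Theorem~\ref{PopvsRel} let one work with the selfadjoint operator $L_L$ in $(\mul L)^\perp$, whose spectrum, point spectrum and continuous spectrum coincide with those of $L$. Next, applying Theorem~\ref{Fovon01d} at $\zeta = i$ together with Corollary~\ref{extresind0} (so that $D$ is all of $\Nk{i}{A^*}$), one writes $L = A \dotplus (\mF V - \mF I) \Nk{i}{A^*}$, where $\mF V$ induces an isometry $W: \dom \Nk{i}{A^*} \to \dom \Nk{-i}{A^*}$ between $n$-dimensional spaces. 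Decomposing $\vE{f}{\zeta f} \in L$ along this splitting and using $\ran(A - \zeta I)^\perp = \dom \Nk{\zeta}{A^*}$ converts the eigenvalue equation into $M(\zeta) \phi = 0$, where $M(\zeta)$ is the $n \times n$ matrix obtained, in holomorphically chosen bases of the deficiency spaces on a complex neighbourhood of $\Delta$, from the orthogonal projection onto $\dom \Nk{\zeta}{A^*}$ of $(\zeta + i) W\phi - (\zeta - i)\phi$. Because $\C_- \subset \rho(L)$, the determinant $\det M(\zeta)$ does not vanish identically on this neighbourhood, so its zeros, which comprise $\sigma_p(L) \cap \Delta$, form a discrete subset of $\Delta$. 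Away from these zeros $L - \zeta I$ admits a bounded inverse built from the restricted inverse of $A - \zeta I$ together with the inverse of $M(\zeta)$, so $\sigma_c(L) \cap \Delta = \emptyset$ and the proposition follows.

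The main technical difficulty is arranging the holomorphic parameterisation of $\dom \Nk{\zeta}{A^*}$ on a complex neighbourhood of $\Delta$ and justifying the analytic Fredholm step that isolates the zeros of $\det M(\zeta)$. Both are classical in the operator setting and transfer cleanly to the present framework once one has reduced to the operator part of $L$ via Theorem~\ref{PopvsRel}; some care is still needed, however, to keep track of the multivalued components and of the (possibly non-dense) domain of $A$ through the decomposition of $L$ provided by Theorem~\ref{Fovon01d}.
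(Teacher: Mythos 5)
Your proof of the multiplicity bound is the paper's own: $\ker(L-\zeta I)\subset\ker(A^*-\zeta I)$ together with the constancy of the deficiency index on the connected component of $\hat\rho(A)$ containing $\Delta\cup(\C\setminus\R)$. For the isolation of eigenvalues, however, you take a genuinely different and considerably heavier route. The paper argues directly: for $\zeta\in\Delta\subset\hat\rho(A)$, Proposition~\ref{acot-cerr1} makes $\ran(A-\zeta I)$ closed, and since $A\subset L\subset A^*$ one has $\ran(L-\zeta I)=\oP{\ran(A-\zeta I)}{K}$ with $K\subset\ran(A-\zeta I)^{\perp}=\ker(A^*-\zeta I)$ of dimension at most $n$; a closed subspace plus a finite-dimensional orthogonal summand is closed, so $\zeta\notin\sigma_c(L)$, and then $L_L-\zeta I$ is Fredholm (closed range, kernel of dimension at most $n$), so any $\zeta\in\sigma(L)\cap\Delta$ lies outside the essential spectrum of the selfadjoint operator $L_L$ and is an isolated eigenvalue. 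This needs nothing beyond Propositions~\ref{acot-cerr1} and \ref{lsimresin} and Theorem~\ref{PopvsRel}. Your analytic-Fredholm argument via the $n\times n$ matrix $M(\zeta)$ is sound in outline --- the correspondence between $\ker M(\zeta)$ and $\ker(L-\zeta I)$ does work, because $\ran(A-\zeta I)$ is closed on $\hat\rho(A)$ and the uniqueness of the decomposition in Theorem~\ref{Fovon00} excludes spurious solutions with $f=0$ --- and it buys more than is asked: it exhibits $\sigma(L)\cap\Delta$ as the zero set of a holomorphic function, which is essentially Remark~\ref{rem:spectrum-discrete}. But it rests on two steps you only assert: a holomorphic parameterisation of $\zeta\mapsto\ker(A^*-\zeta I)$ across the \emph{real} points of $\hat\rho(A)$ (the standard formula using the resolvent of a selfadjoint extension is not immediately available there, since $\Delta$ may meet the spectrum of every such extension), and the verification that the entries $\ip{u_j(\cc{\zeta})}{(\zeta+i)Wv_k-(\zeta-i)v_k}$ are holomorphic --- note the subspace entering the projection is $\ran(A-\zeta I)^{\perp}=\ker(A^*-\cc{\zeta}I)$, and holomorphy survives only because the inner product is antilinear in its first argument. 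These points are classical and can be supplied, but they are exactly the machinery the paper's short closed-range argument avoids; if you keep your route you must write them out, as nothing in the paper provides them.
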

\begin{proof}
  Fix $\zeta\in\sigma(L)\cap\Delta$. Since $\zeta\in\hat\rho(A)$,
  $\dim \Nk\zeta{A^{*}}=n$ and, in view of
  Proposition~\ref{acot-cerr1}, $\ran (A-\zeta I)$ is closed. Now, due
  to the fact that $L\subset A^*$, one can define \bea\label{safese}
  K:=\oM{\ran(L-\zeta I)}{\ran (A-\zeta I)}\,.  \eea Thus
  $K\subset\ker(A^{*}-\zeta I)$ and \bea\label{safese00}
\begin{split}
\dim K&\leq\dim [\ker(A^{*}-\zeta I)]\\&=\dim \Nk\zeta{A^{*}}=n.
\end{split}
\eea
Then by \eqref{safese} and \eqref{safese00}
$\ran(L-\zeta I)$ is closed. This implies that
$\zeta\notin\sigma_{c}(L)$. Furthermore, by
Proposition~\ref{lsimresin}, $L_L$ is a selfadjoint operator and then, recurring
to Theorem~\ref{PopvsRel}, one obtains that $\zeta$ is
an isolated eigenvalue.

Let us compute the multiplicity of the eigenvalue $\zeta$. To this
end, observe that $\Nk\zeta L\subset \Nk\zeta{A^{*}}$. Also,
\begin{equation*}
\begin{split}
\dim [\ker(L-\zeta I)]&=\dim \Nk\zeta L\\&\leq\dim \Nk\zeta{A^{*}}=n\,.
\end{split}
\end{equation*}
\end{proof}
\begin{definition}
  \label{def:regular}
A relation $T$ is said to be regular if its quasi-regular
set is the whole complex plane, that is $\hat\rho(T)=\C$.
\end{definition}
\begin{corollary}\label{lainesp0}
  Let $A$ be a closed, regular, symmetric relation with
  $\eta_-(A)=n$. Assume that $L$ is a maximal dissipative
  extension of $A$.
  \begin{enumerate}[(i)]
  \item If $L$ is selfadjoint, then its spectrum consists of
    isolated eigenvalues of multiplicity at most $n$.
  \item If $L$ is not selfadjoint, then its spectral core consists of
    eigenvalues of multiplicity at most $n$.
  \item If $n=1$, then every number in $\C_{+}\cup\R$ is an
    eigenvalue of one, and only one, realization of $L$.
  \end{enumerate}
\end{corollary}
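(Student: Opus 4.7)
The plan is to reduce all three parts to results already in the paper, with the main preparation being the observation that regularity plus symmetry forces $\eta_+(A)=\eta_-(A)=n$. Indeed, $A$ regular means $\hat\rho(A)=\C$, which is connected, so the deficiency index is a single number on $\hat\rho(A)$. Consequently $\dim\Nk{\zeta}{A^*}=n$ for every $\zeta\in\C$, and by Proposition~\ref{acot-cerr1} the range $\ran(A-\zeta I)$ is closed for every $\zeta\in\C$. I will also use the fact, noted after Theorem~\ref{Fovon01d}, that every dissipative extension $L$ of $A$ satisfies $A\subset L\subset A^*$.

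For (i), since $L$ is selfadjoint one has $\sigma(L)\subset\R$, and $\R\subset\hat\rho(A)$ is a real interval, so Proposition~\ref{lainesp} applied with $\Delta=\R$ gives immediately that every point of $\sigma(L)$ is an isolated eigenvalue of multiplicity at most $n$.

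For (ii), maximal dissipativeness yields $\C_-\subset\rho(L)$, hence $\hat\sigma(L)\subset\C_+\cup\R$. Fix $\zeta\in\hat\sigma(L)$. The inclusion $L\subset A^*$ gives $\ker(L-\zeta I)\subset\ker(A^*-\zeta I)$, whose dimension is $n$, which takes care of the multiplicity bound. To exclude $\sigma_c(L)$ I will mimic the argument in Proposition~\ref{lainesp}, but now for complex $\zeta$: set $K:=\oM{\ran(L-\zeta I)}{\ran(A-\zeta I)}$ and verify from the definition of $A^*$ that $K\subset\ker(A^*-\overline\zeta I)$; this is the step I expect to be the main obstacle, since for non-real $\zeta$ the conjugate shift appears and one must be careful with the antilinearity convention. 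Once this is established, $\dim K\leq n$ from the preliminary observation, and since $\ran(A-\zeta I)$ is closed the decomposition $\ran(L-\zeta I)=\oP{\ran(A-\zeta I)}{K}$ shows $\ran(L-\zeta I)$ is closed. Therefore $\zeta\notin\sigma_c(L)$, and the only way $\zeta$ can be in $\hat\sigma(L)$ is via $\sigma_p(L)$.

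For (iii) I will combine Propositions~\ref{mdrofsrn} and \ref{mdrofsrn0} with $n=1$. For $\zeta\in\C_+$, Proposition~\ref{mdrofsrn} produces a (unique) maximal dissipative extension $L_\zeta=A\dotplus\Nk{\zeta}{A^{*}}$ having $\dim\Nk{\zeta}{L_\zeta}=1$, so $\zeta$ is an eigenvalue of $L_\zeta$. Conversely, if $L'$ is any maximal dissipative extension of $A$ with $\zeta$ as an eigenvalue, then $\Nk{\zeta}{L'}\subset\Nk{\zeta}{A^{*}}$ together with $\dim\Nk{\zeta}{A^{*}}=1$ forces $\dim\Nk{\zeta}{L'}=1$, and the uniqueness clause of Proposition~\ref{mdrofsrn} gives $L'=L_\zeta$. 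For real $\zeta$, the same argument applies verbatim with Proposition~\ref{mdrofsrn0} in place of \ref{mdrofsrn} (using $\R\subset\hat\rho(A)$), the resulting unique extension being selfadjoint. This exhausts $\C_+\cup\R$ and yields the claimed bijection between points in $\C_+\cup\R$ and the realizations in which they appear as eigenvalues.
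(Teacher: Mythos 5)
Your proposal is correct and follows essentially the same route as the paper: part (i) via Proposition~\ref{lainesp} with $\Delta=\R$, part (ii) by repeating that proposition's argument at a point of $\hat\sigma(L)\subset\C_{+}\cup\R$ (where the step you flag as the main obstacle is settled at once by \eqref{poHs}, which gives $K\subset(\ran(A-\zeta I))^{\perp}=\ker(A^{*}-\overline{\zeta}I)$, of dimension $n$ by regularity), and part (iii) from Propositions~\ref{mdrofsrn} and \ref{mdrofsrn0}. You merely supply the details the paper's terse proof leaves implicit.
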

\begin{proof}
  First note that, due to the regularity of $A$, its deficiency
  indices are equal. Now, (i) is a consequence of
  Proposition~\ref{lainesp} since $\R\subset\hat\rho(A)$. For proving
  (ii), consider $\zeta\not\in\hat\rho(L)$ and repeat the
  argumentation of the proof of Proposition~\ref{lainesp} to show that
  $\ran(L-\zeta I)$ is closed so $\zeta\not\in\sigma_c(L)$. The
  multiplicity of any eigenvalue is computed as in the proof of
  Proposition~\ref{lainesp}. (iii) follows from
  Propositions~\ref{mdrofsrn} and \ref{mdrofsrn0}.
\end{proof}
\begin{remark}
  \label{rem:spectrum-discrete}
  Corollary~\ref{lainesp0} admits a refinement: the spectrum of the
  maximal dissipative extension $L$ is discrete. A way to prove it, in
  the case of deficiency indices $(1,1)$, is based on the fact that
  any closed regular symmetric operator with indices $(1,1)$ is
  unitarily equivalent to the multiplication operator in a dB space
  (see Subsection~\ref{sec:db-spaces} and, particularly, Theorem~\ref{spomvi}).
\end{remark}

\section{Applications to nondensely defined operators}
\label{sec:applications}

\subsection{Jacobi operators}
Consider the Hilbert space $l_2(\N)$, i.\,e. the space of
square-summable sequences. Fix two real sequences
$\{b_k\}_{k=1}^\infty$ and $\{q_k\}_{k=1}^\infty$ such that $b_k>0$
for $k\in\nats$ and let $J$ be the operator whose matrix
representation with respect to the canonical basis
$\{\delta_k\}_{k=1}^\infty$ of $l_2(\N)$ is
\begin{equation}
\label{jacob}
  \begin{pmatrix}
    q_1&b_1&0&0&\cdots\\
    b_1&q_2&b_2&0&\cdots\\
    0&b_2&q_3&b_3&\cdots\\
    0&0&b_3&q_4&\cdots\\
    \vdots&\vdots&\vdots&\vdots&\ddots
  \end{pmatrix}\,;
\end{equation}
see \cite[Sec.\,47]{MR1255973} for the definition of a matrix
representation for an unbounded closed symmetric operator.

Consider the difference equation
\begin{equation}
\label{kerA}
 b_{k-1}\phi_{k-1}+q_k\phi_k+b_k\phi_{k+1}=\zeta \phi_k,\,\,\, \zeta\in\C\,,
\end{equation}
for $k\in\N$ with $b_0=0$.
Setting $\phi_1=1$, one solves recurrently \eqref{kerA} and  $\phi_k$
is a polynomial of degree $k-1$ in $\zeta$, denoted here by $\pi_k(z)$,
and known as the $k-1$-th
polynomial of the first kind associated to \eqref{jacob}. Similarly,
$\phi_k$ is a polynomial of degree $k-2$ if one sets $\phi_1=0$ and
$\phi_2=1/b_1,$ in \eqref{kerA}. In this case $\phi_k$ is the $k-1$-th
polynomial of the second kind associated to \eqref{jacob} and
denoted by $\theta_k(z)$. It holds true that
\begin{equation}
  \label{eq:relation-polynomial-first-second}
  \theta_{k+1}(z)=b_1^{-1}\widetilde{\pi}_k(z)\,,
\end{equation}
where
$\widetilde{\pi}_k(z)$ is the polynomial obtained from $\pi_k(z)$
substituting $b_j,q_j$ by $b_{j+1},q_{j+1}$ ($j=1,\dots,k-1$).
\begin{remark}
  \label{rem:0-0-1-1}
The symmetric operator $J$ has deficiency indices $(0,0)$ or
$(1,1)$. The first case is characterized by the divergence of the
series $\sum_k|\pi_k(\zeta)|^2$ for all $\zeta\in\C\backslash\R,$
while the second case by the convergence of it (see
\cite[Chap.\,IV]{MR0184042} and
\cite[Chap.\,VII]{MR0222718}).
\end{remark}
Suppose that $J$ is selfadjoint and consider the linear operator
\beao B=J\rE{\oM{\dom J}{\Span \{\delta_1\}}}\,.  \eeao The operator
$B$ is closed, non-densely defined, and symmetric. By
\eqref{fodiidrsy}, $B$ has indices $(1,1)$.

\begin{proposition}
\label{prop:dissipative-ext-matrices}
The maximal dissipative extensions of $B$ are in one-to-one correspondence
with $\tau\in\C_{+}\cup\R\cup\{\infty\}$ and they are perturbations of
$J$ given by \bea\label{exoBop} J(\tau)=\llb \vE f{g+\tau\ip
  {\delta_{1}} f\delta_{1}} \ :\ \vE fg\in J\rrb,\,\,\,
\tau\neq\infty, \eea and \bea\label{exoBnop} J(\infty)=B\dotplus
\Span\llb\vE{0}{\delta_1}\rrb\,,  \eea
  where, for
  $\tau\in \R\cup\{\infty\}$, $J({\tau})$ is
  selfadjoint.
Furthermore,
\bea\label{taotB} B^*=J\dotplus
\Span\llb\vE{0}{\delta_1}\rrb.
\eea
\end{proposition}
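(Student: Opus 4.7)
The plan is to first establish the description \eqref{taotB} of $B^*$, and then use it, together with the general extension theory developed in the excerpt, to list and identify all maximal dissipative extensions of $B$.

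I begin by computing the deficiency indices of $B$. Since $B=J\rE{\oM{\dom J}{\Span\{\delta_1\}}}$ is a one-dimensional domain restriction of the selfadjoint operator $J$, an application of \eqref{fodiidrsy} to the pair $B\subset J$ yields $\eta_\pm(B)=\eta_\pm(J)+\dim[J/B]=0+1=1$. By the first von Neumann formula (Theorem~\ref{Fovon00}), this implies $\dim[B^*/B]=2$. For the explicit formula \eqref{taotB}, the inclusions $J\subset B^*$ (from $B\subset J=J^*$) and $\vE 0{\delta_1}\in B^*$ (from $\ip f{\delta_1}=0$ for $f\in\dom B$) are immediate. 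The sum $J\dotplus\Span\{\vE 0{\delta_1}\}$ is direct because $J$ is an operator and so its multi-valued part is trivial. Its codimension over $B$ equals $\dim[J/B]+1=2=\dim[B^*/B]$, giving equality.

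Next, let $L$ be any maximal dissipative extension of $B$. Then $B\subset L\subset B^*$ and Corollary~\ref{extresind0} gives $\dim[L/B]=\eta_-(B)-\eta_-(L)=1$. Writing $L=B\dotplus\Span\{\vE{f_0}{g_0}\}$ and using \eqref{taotB}, one has $g_0=Jf_0+c\delta_1$ for some $c\in\C$. I decompose $f_0=f_0^\perp+\alpha\delta_1$ with $f_0^\perp\in\dom B$ and $\alpha=\ip{\delta_1}{f_0}$; subtracting $\vE{f_0^\perp}{Jf_0^\perp}\in B$ lets us replace the generator by $\alpha\vE{\delta_1}{J\delta_1}+c\vE 0{\delta_1}$. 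If $\alpha\neq 0$, setting $\tau=c/\alpha$ and using the decomposition $f=f^\perp+\ip{\delta_1}{f}\delta_1$ available to every $f\in\dom J$ shows that $L$ coincides with $J(\tau)$ as in \eqref{exoBop}. If $\alpha=0$, necessarily $c\neq 0$ and the generator $\vE 0{\delta_1}$ gives $L=J(\infty)$ as in \eqref{exoBnop}.

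It remains to show that each listed candidate is indeed a maximal dissipative extension of $B$. For $\tau\in\C_+\cup\R$, $J(\tau)$ is a bounded rank-one perturbation of $J$, so it is closed by Remark~\ref{rem:birman-stability-closedness} and extends $B$ by one dimension; dissipativity reduces to the identity
\[
\im\ip f{Jf+\tau\ip{\delta_1}f\delta_1}=(\im\tau)\abs{\ip{\delta_1}f}^{2}\geq 0,
\]
valid since $J$ is symmetric, and when $\tau\in\R$ the perturbation is selfadjoint, so $J(\tau)$ is selfadjoint. Since $\dim[J(\tau)/B]=1$, Corollary~\ref{extresind0} yields $\eta_-(J(\tau))=0$ and hence maximality. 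For $\tau=\infty$, $J(\infty)=B\dotplus\Span\{\vE 0{\delta_1}\}$ is closed, and symmetric because every element $\vE f{Jf+c\delta_1}$ with $f\in\dom B$ satisfies $\ip f{Jf+c\delta_1}=\ip f{Jf}\in\R$; its codimension over $B$ is $1$, so \eqref{fodiidrsy} gives selfadjointness. The only genuine obstacle is keeping the case split $\alpha\neq 0$ versus $\alpha=0$ organized, which is exactly what produces the parameter $\tau\in\C_+\cup\R$ versus the value $\tau=\infty$ in the parametrization.
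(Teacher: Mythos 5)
Your proof is correct, but it follows a genuinely different route from the paper's. The paper first constructs the deficiency subspaces of $B$ explicitly via the Weyl function: it introduces $\psi(\zeta)=\theta(\zeta)+m(\zeta)\pi(\zeta)$, shows $\Nk{\zeta}{B^*}=\Span\{\vE{\psi(\zeta)}{\zeta\psi(\zeta)}\}$, applies the generalized second von Neumann formula (Theorem~\ref{Fovon01d}) to parametrize all maximal dissipative extensions by contractions $\mF V_\beta$ with $\abs{\beta}\le 1$, and only then recovers the parameter $\tau$ through the M\"obius transformation $\beta_\tau=(1+\tau m(\zeta))/(1+\tau m(\cc\zeta))$; the identity \eqref{taotB} is likewise obtained by unwinding \eqref{eq:bstbps}. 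You instead bypass the $m$-function entirely: you get \eqref{taotB} by exhibiting the single element $\vE{0}{\delta_1}\in B^*$, noting $J=J^*\subset B^*$, and matching the codimension $\dim[B^*/B]=2$ coming from Theorem~\ref{Fovon00}; the classification of extensions then reduces to elementary linear algebra on the one-dimensional quotient $L/B$, with the generator $\alpha\vE{\delta_1}{J\delta_1}+c\vE{0}{\delta_1}$ yielding $\tau=c/\alpha$, and maximality read off from Corollary~\ref{extresind0}. Your dissipativity identity $\im\ip{f}{Jf+\tau\ip{\delta_1}{f}\delta_1}=(\im\tau)\abs{\ip{\delta_1}{f}}^2$ also makes transparent why only $\im\tau\ge 0$ (or $\tau=\infty$) occurs, which in the paper is hidden inside the statement that $\beta_\tau$ maps $\C_+\cup\R\cup\{\infty\}$ onto the closed unit disk. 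The trade-off: your argument is shorter and self-contained, while the paper's produces the explicit deficiency vectors $\psi(\zeta)$, which tie the parametrization to the moment-problem machinery and are reused in the subsequent discussion. Two small points you leave implicit but which are immediate: injectivity of $\tau\mapsto J(\tau)$ (distinct finite $\tau$ give distinct images of $\delta_1$, and $J(\infty)$ is the only non-operator in the family), and the fact that $\delta_1\in\dom J$ so that the decomposition $f_0=f_0^{\perp}+\alpha\delta_1$ stays inside $\dom J$.
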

\begin{proof}
  Fix $\zeta\in \C_{+}$, then $\pi(\zeta)$ and $\theta(\zeta)$ do not
  belong to $l_2(\N)$ in view of Remark~\ref{rem:0-0-1-1} and
  \eqref{eq:relation-polynomial-first-second}.
  According to \cite[Chap.1
  Sec.\,3]{MR0184042} (see also\cite[Chap. VII]{MR0222718}),
  there exists a unique function $m(\cdot):\C\setminus\R\rightarrow\C$
  satisfying $m(\overline\zeta)=\overline{m(\zeta)}$ and
  $(\im \zeta)(\im m(\zeta))>0$ such
  that $$\psi(\zeta)=\theta(\zeta)+m(\zeta)\pi(\zeta)\in\dom J\,.$$
  By a straightforward computation, one obtains \bea\label{psiz}
  J\psi(\zeta)=\delta_{1}+\zeta\psi(\zeta)\eea so that, for
  every $f\in \dom B$, one has
 \begin{align*}
 \ip{ f}{\zeta\psi(\zeta)}&=\ip{ f}{\delta_{1}+\zeta\psi(\zeta)}= \ip{f}{J\psi(\zeta)}\\
 &= \ip{J f}{\psi(\zeta)}= \ip{Bf}{\psi(\zeta)}\,,
 \end{align*}
 which means that $\vE{\psi(\zeta)}{\zeta\psi(\zeta)}\in B^{*}$. Therefore
 \bea\label{dsoaB}
 \Nk\zeta{B^{*}}=\Span\llb\vE{\psi(\zeta)}{\zeta\psi(\zeta)}
 \rrb\,\,\mbox{ and }\,\, \Nk{\overline
   \zeta}{B^{*}}=\Span\llb\vE{\psi(\overline{\zeta})}{\overline{\zeta}\psi(\overline{\zeta})}
 \rrb.  \eea
since $\eta_{\pm}(B)=1$.
 By Theorem
 \ref{Fovon00}, it holds that \bea\label{eq:bstbps} B^*=B\dotplus
 \Span\llb\vE{\psi(\overline{\zeta})}{\overline{\zeta}\psi(\overline{\zeta})}
 \rrb \dotplus \Span\llb\vE{\psi(\zeta)}{\zeta\psi(\zeta)} \rrb. \eea

 Now, since $\dom B$ as well as $\psi(\zeta) $ and
 $\psi(\overline\zeta)$ belong to $\dom J$, it follows that
 $\dom B^{*}\subset \dom J$. Hence $J$ and $B^{*}$ have the same
 domain. Observe that $J$ and $\oP{\{0\}}{\Span\{\delta_{1}\}}$ are
 linearly independent so that $J\dotplus Z\subset B^{*}$. On the other
 hand, \eqref{eq:bstbps} implies that
there exist $f\in \dom B$ and $a,\,b\in \C$ such that
$$\vE hk=\vE{ f+a\psi(\zeta)+b\psi(\overline\zeta)}{B f+a\zeta\psi(\zeta)+b\overline\zeta\psi(\overline\zeta)}$$
for every $\vE hk\in B^{*}$.
Then
by \eqref{psiz}
\begin{align*}
\vE hk&=\vE{f+a\psi(\zeta)+b\psi(\overline\zeta)}{J f+a(\delta_{1}+\zeta\psi(\zeta))+b(\delta_{1}+\overline\zeta\psi(\overline\zeta))}+\vE{0}{-(a+b)\delta_{1}}\\
&=\vE{f+a\psi(\zeta)+b\psi(\overline\zeta)}{J(f+a\psi(\zeta)+b\psi(\overline\zeta))}+\vE{0}{-(a+b)\delta_{1}}\in J\dotplus Z.
\end{align*}
We have proven \ref{taotB}. Now we turn to the proof of \eqref{exoBop}
and \eqref{exoBnop}. Note that these equations yield
closed dissipative extensions of $B$, which are therefore maximal.
Theorem \ref{Fovon01d} asserts that every maximal dissipative
extension $J(\beta)$ of $B$ is given by \bea\label{FvonB}
J(\beta)=B\dotplus (\mF V_\beta-\mF I)\Nk \zeta {B^*}, \eea with
$\mF V_\beta:\Nk \zeta {B^*}\rightarrow \Nk {\overline{\zeta}} {B^*}$
being a closed contraction. On the basis of \eqref{dsoaB}, one
concludes that all the
contraction mappings are in one-to-one correspondence with
$\beta\in\tE\cup\tE_{i}$ (i.\,e. $\abs{\beta}\le 1$) given by
$$\mF V_\beta\left(\vE {\psi(\zeta)}{\zeta\psi(\zeta)}\right)=
\beta\vE{\psi(\overline{\zeta})}{\overline{\zeta}\psi(\overline{\zeta})}\,,$$
whence, by means of \eqref{FvonB}, one arrives at
\bea\label{Atvon02}
 J(\beta)=B\dotplus \Span \llb \vE {\beta\psi(\overline{\zeta})-\psi(\zeta)}
 {\overline{\zeta}\beta\psi(\overline{\zeta})-\zeta\psi(\zeta)}\rrb.
 \eea
The last equality implies that $\dom J(\beta)\subset \dom J$. Take the M\"obius transformation
$$\beta_{\tau}=\frac{1+\tau m(\zeta)}{1+\tau m(\overline \zeta)}.$$
Since $m(\zeta)\in\C_{+}$, one has that $\beta_{\tau}$ maps
$\C_{+}\cup\R\cup\{\infty\}$ onto $\tE\cup\tE_{i},$ with
$\beta_{\infty}=m(\zeta)/m(\overline \zeta).$ Then for
$\tau\neq\infty$ it follows from \eqref{Atvon02} that, for any
$\vE hk\in J(\beta_{\tau}),$ there exists $f\in \dom B$ such that
\bea\label{cosaeB}
\vE hk&=&\vE{ f+\alpha(\beta_{\tau}\psi(\overline\zeta)-\psi(\zeta))}{Bf+\alpha(\beta_{\tau}\overline\zeta\psi(\overline\zeta)-\zeta\psi(\zeta))}\nonumber \\
&=&\vE{ f+\alpha(\beta_{\tau}\psi(\overline\zeta)-\psi(\zeta))}{J f+\alpha[\beta_{\tau}(\delta_{1}+\overline\zeta\psi(\overline\zeta))-(\delta_{1}+\zeta\psi(\zeta))]+\alpha(1-\beta_{\tau})\delta_{1}}\nonumber\\
&=&\vE{f+\alpha(\beta_{\tau}\psi(\overline\zeta)-\psi(\zeta))}{J
  [f+\alpha(\beta_{\tau}\psi(\overline\zeta)-\psi(\zeta))]+\alpha(1-\beta_{\tau})\delta_{1}}.
\eea
Note that $\psi_{1}(\zeta)=m(\zeta)$, so
\begin{align*}
\tau\ip{\delta_{1}}h&=\tau\ip{\delta_{1}}{ f+\alpha(\beta_{\tau}\psi(\overline\zeta)-\psi(\zeta))}\\
&=\alpha\tau(\beta_{\tau}m(\overline\zeta)-m(\zeta))\\
&=\alpha\tau\left(\frac{1+\tau m(\zeta)}{1+\tau m(\overline \zeta)}m(\overline\zeta)-m(\zeta)\right)\\
&=\alpha\left(1-\frac{1+\tau m(\zeta)}{1+\tau m(\overline \zeta)}\right)=\alpha(1-\beta_{\tau}).
\end{align*}
Thus \eqref{cosaeB} yields
$\vE hk=\vE h{Jh+\tau\ip{\delta_{1}}h\delta_{1}}\subset J(\tau)$. Due
to maximality it follows that $J(\beta_{\tau})=J(\tau)$.

For $\beta_{\infty}=m(\zeta)/m(\overline \zeta).$ the expression
$\beta_{\infty}\psi_{1}(\overline\zeta)-\psi_{1}(\zeta)$ vanishes. Thus,
by \eqref{Atvon02},  it follows that  $\dom
J(\beta_{\infty})\subset\dom B.$ Thence, according to \eqref{cosaeB}, for every  $\vE hk\in J(\beta_{\infty}),$
$$\vE hk=\vE h{Bh}+\vE0{\alpha(1-\beta_{\tau})\delta_{1}}\in B\dotplus
Z=J(\infty).$$ Therefore $J(\beta_{\infty})=J(\infty).$
\end{proof}
From what has been said, all the maximal dissipative extensions
\eqref{exoBop} of $B$ have the representation
  \beao
J(\tau)=\begin{pmatrix}
 q_1+\tau&b_1&0&0&\cdots\\
  b_1&q_2&b_2&0&\cdots\\
  0&b_2&q_3&b_3&\cdots\\
  0&0&b_3&q_4&\cdots\\
  \vdots&\vdots&\vdots&\vdots&\ddots
 \end{pmatrix}.
 \eeao

\subsection{Operator of multiplication in dB spaces}
\label{sec:db-spaces}
There are two essentially different ways of defining a de Branges space
(dB space) \cite[Chap.\,2]{MR0229011}.
The following one has an axiomatic structure:

A nontrivial Hilbert space of entire functions $\cA B$ is said to be a dB space when for every function $f(z)$ in the space, the following holds:
\begin{enumerate}[({A}1)]
\item\label{axidb1} For every $w\in \C\backslash\R,$ the linear functional $f(\cdot)\ \mapsto f(w)$ is continuous;
\item\label{axidb2} for every non-real zero $w$ of $f(z),$ the function $f(z)(z-\overline{w})(z-w)^{-1}$ belongs to $\cA B$ and has the same norm as $f(z)$;
\item\label{axidb3}  the function $\gA fz=\overline{f(\overline{z})}$ also belongs to $\cA B$ and has the same norm as $f(z).$\end{enumerate}

Due to the polarization identity, (A\ref{axidb3}) implies
\begin{align}\label{eqipidb}
\ip {f(\zeta)}{g(\zeta)}=\ip {\gA g \zeta}{\gA f \zeta}
\end{align}
for every $f(z),\ g(z)\in \cA B$.

By the Riesz lemma, (A\ref{axidb1}) is equivalent to the existence of
a unique reproducing kernel $k(z,w)$ that belongs to $\cA B$ for every
$w\in \C\backslash\R$ and satisfies \bea\label{porkw} \ip
{k(\zeta,w)}{f(\zeta)}=f(w),\eea for every $f(z)\in \cA B$. Besides,
$k(w,w)=\ip{k(\zeta,w)}{k(\zeta,w)}>0$ as a consequence of
(A\ref{axidb2}) (see the proof of \cite[Thm.\,23]{MR0229011}). Note
also that
$\overline{k(z,w)}=k(w,z)$.
Finally, in view of \eqref{eqipidb}, for every $f(z)\in \cA B$,
\begin{equation*}
\ip{\gA k{\zeta,w}}{f(\zeta)}=\overline{\ip {k(\zeta,w)}{\gA f\zeta}}
=\overline{\gA fw}=\ip{k(\zeta,\overline w)}{f(\zeta)}\,,
\end{equation*}
whence $\overline{k(\overline z,w)}=k(z,\overline w)$.

The operator of multiplication by the independent variable in $\cA B$ is defined by the relation
\begin{equation}\label{omxvi}
S=\llb\vE {f(z)}{zf(z)}\ :\ f(z),\ zf(z)\in \cA B\rrb\,.
\end{equation}
Clearly it is an operator and \cite[Prop.\, 4.2, Cors.\,4.3 and
4.7]{MR1664343} show that $S$ is closed, regular, symmetric, with
deficiency indices $(1,1)$, and not necessarily densely defined.

Fix $w\in\C_{+}$. It follows from \eqref{porkw} that for every $\vE {f(z)}{(z-w)f(z)}\in S-wI$
$$\ip {k(\zeta,w)}{(\zeta-w)f(\zeta)}=(w-w)f(w)=0,$$
which implies that
$k(z,w)\in \ker(S^*-\overline wI)=\dom \Nk{\overline{w}}{S^*}$. Since
$\eta_{\pm}(S)=1$, one has
\bea\label{deoemvi} \Nk {\overline{w}}{S^*}=\Span\llb\vE
{k(z,w)}{\overline wk(z,w)}\rrb; \:\: \Nk w{S^*}=\Span\llb\vE
{k(z,\overline w)}{ wk(z,\overline w)}\rrb.  \eea

Equation \eqref{von01} now reads \bea\label{bstbps} S^*=S\dotplus
\Span\llb\vE {k(z,w)}{\overline wk(z,w)}\rrb \dotplus \Span\llb\vE
{k(z,\overline w)}{ wk(z,\overline w)}\rrb. \eea Furthermore by
Theorem \ref{Fovon01d}, every maximal dissipative extension $S_\tau$
of $S$ is given by
 \begin{align}\label{eomxvi1}
  S_\tau={S}\dotplus {(\mF V_\tau-I)\Nk w{S^*}}
 \end{align}
with $\mF V_\tau:\Nk w {S^*}\rightarrow \Nk {\overline{w}} {S^*}$
being a closed contraction given by
$$\mF V_\tau\left(\vE {k(z,\overline w)}{ wk(z,\overline w)}\right)=
\tau\vE {k(z,w)}{\overline wk(z,w)}\,,$$ where $\abs{\tau}\le 1$. Note
that the form of $\mF V_\tau$ has been deduced from
\eqref{deoemvi}. Whence, by \eqref{eomxvi1}, one has
\bea\label{eomxvi2} S_\tau=S\dotplus \Span \llb \vE {\tau
  k(z,w)-k(z,\overline w)} {\tau\overline{w}k(z,w)-wk(z,\overline
  w)}\rrb.  \eea Notice that for any $\tau\in\C$ such that
$\abs{\tau}\le 1$, $S_{\tau}$ has the spectral properties given in
Corollary~\ref{lainesp0}. Moreover, for $\abs{\tau}=1$,
$\mF V_{\tau}$ is isometric and, as a consequence of Theorem
\ref{Fovon01d}, $S_\tau$ is a selfadjoint extension of $S$.

The other definition of dB space requires the Hardy space
\begin{align*}
\H_{2}(\C_{+}):=\llb f(z) \mbox{ holomorphic in } \C_{+}\ :\ \sup_{y>0}\int_{\R}|f(x+iy)|^{2}<\infty\rrb
\end{align*}
as well as an Hermite-Biehler function, which is an entire function $e(z)$ satisfying
\beao
|e(z)|>|\gA e z|,\,\,\,  z\in \C_{+},\eeao
whence it follows that $e(z)$ is a function without zeros in the half-plane $\C_{+}$.

The dB space associated with an Hermite-Biehler function $e(z)$
\cite[Sec.\,2]{MR1943095} is the linear manifold
\begin{align*}
\cA B(e):=\llb f(z) \mbox{ entire } :\, \frac{f(z)}{e(z)},\ \frac{\gA f z}{e(z)}\in \H_{2}(\C_{+})\rrb,
\end{align*}
equipped with the inner product
\begin{align*}
\ip {f(t)}{g(t)}_{e}:=\int_{\R}\frac{\overline{f(t)}g(t)}{|e(t)|^{2}}dt.
\end{align*}
Without loss of generality, let us assume that $e(z)$ not
only has no zeros in $\C_{+}$, but also in $\R$.

In \cite[Sec.\,5]{MR1664343} (see also \cite[Sec.\,19]{MR0229011}), it
is shown that, for any $w\in \C$, the expression
\begin{align}\label{kroBe}
k(z,w)=\frac{\gA eze(\overline w)-e(z)\gA e{\overline w}}{2\pi i(z-\overline w)}
\end{align}
is the reproducing kernel of $\cA B(e)$. Moreover, since $e(z)$ does
not have zeros on $\C_{+}\cup\R$, $k(z,w)$ has no zeros in
$\C_{+}\cup\R$ for every $w\in \C_{+}$.

For a given dB space $\mathcal{B}$ with reproducing kernel $k(z,w)$,
if one defines \bea\label{tdfwak} e_{ w_0}(z):=\frac{\pi(z-\overline
  w_0)}{(\im w_0)k( w_0, w_0)}k(z, w_0)\,,\quad w_0\in\mathbb{C}_+, \eea then
$e_{ w_0}(z)$ is an Hermite-Biehler function \cite[Sec.\,2]{MR3002855}
and $\cA B=\cA B(e_{ w_0})$ isometrically
\cite[Thm.\,7]{zbMATH06526214}. The reproducing kernel of
$\cA B(e_{ w_0})$ can be computed using \eqref{kroBe}:
\begin{align*}
k_{ w_0}(z, w_0)&=\frac{\gA {e_{ w_0}}ze_{ w_0}(\overline w_0)-e_{ w_0}(z)\gA {e_{ w_0}}{\overline w_0}}{2\pi i(z-\overline w_0 )}
\\&=\frac{-e_{ w_0}(z)}{2\pi i(z-\overline w_0)}\left(\frac{\pi(\overline w_0- w_0)}{(\im  w_0) k( w_0, w_0)}k( w_0, w_0)\right)
=\frac{e_{ w_0}(z)}{z-\overline w_0}\,.
\end{align*}
Therefore
\begin{align}\label{HbfidB}
e_{ w_0}(z)=(z-\overline w_0)k_{ w_0}(z, w_0).
\end{align}

The set of associated functions $\Assoc\cA B$ of a dB space  $\cA B$
is given by
\begin{equation*}
  \Assoc\cA B=\cA B+z\cA B\,.
\end{equation*}
For a $\tau\in\C$, define \begin{align}\label{HbfidB0}
\varphi_{\tau}(z):=\tau e(z)-\gA ez\,.
\end{align}
These entire functions belong to $\Assoc\cA B$ and determine the
maximal dissipative extensions of the multiplication operator.

\begin{theorem}\label{spomvi}
  Fix $ w \in\C_{+}$ and consider the dB space $\cA B(e_{ w })$ with
  $e_{ w }(z)$ given in \eqref{tdfwak}. All the maximal dissipative
  extension of the operator of multiplication $S$ (see \eqref{omxvi})
  are in one-to-one correspondence with the set of entire functions
  $\varphi_{\tau}(z)$, $\abs{\tau}\le 1$. These maximal
  dissipative extensions are given by
  {\small\begin{align}\label{spomvi0}
      S_\tau=\llb\vE{h_{\alpha}(z)}{zh_{\alpha}(z)-\alpha\varphi_{\tau}(z)}\
      :\
  \begin{array}{c}
   h_{\alpha}(z)=f(z)+\alpha(\tau k_ w (z, w )-k_ w (z,\overline w )),\\
    f(z)\in \dom S
  \end{array}\rrb.
 \end{align}}
Moreover, $\sigma(S_{\tau})=\{\lambda\in\C_{+}\cup\R\ :\
\varphi_{\tau}(\lambda)=0\}$. The eigenfunction
corresponding to $\lambda\in\sigma(S_{\tau})$ is
$h^{(\tau)}(z)=\frac{\varphi_{\tau}(z)}{z-\lambda}$.
\end{theorem}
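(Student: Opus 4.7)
The plan is to derive the explicit form \eqref{spomvi0} from the general parametrization \eqref{eomxvi2} by means of a single algebraic identity, and then read off the spectrum directly. From \eqref{eomxvi2}, every maximal dissipative extension is $S_\tau=S\dotplus\Span\{\vE{u(z)}{v(z)}\}$ with
\[u(z)=\tau k(z,w)-k(z,\overline w),\qquad v(z)=\tau\overline w k(z,w)-w k(z,\overline w),\]
and $\abs{\tau}\le 1$. The key identity I would establish is $zu(z)-v(z)=\varphi_\tau(z)$. Granted this, any $\vE hk\in S_\tau$ admits a unique decomposition $\vE hk=\vE f{zf}+\alpha\vE uv$ with $f\in\dom S$, and setting $h_\alpha=f+\alpha u$ gives $k=zf+\alpha v=zh_\alpha-\alpha(zu-v)=zh_\alpha-\alpha\varphi_\tau$, which is exactly \eqref{spomvi0}. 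The correspondence $\tau\mapsto\varphi_\tau$ is injective on $\{\abs\tau\le 1\}$ because $\varphi_\tau-\varphi_{\tau'}=(\tau-\tau')e_w$.

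The identity $zu-v=\varphi_\tau$ unfolds as
\[\tau(z-\overline w)k(z,w)-(z-w)k(z,\overline w)=\tau e_w(z)-\gA{e_w}z.\]
The first summand is immediate from \eqref{HbfidB}. For the second, I would apply \eqref{kroBe} to $e=e_w$: since \eqref{HbfidB} yields $e_w(\overline w)=0$, one obtains $\gA{e_w}w=\overline{e_w(\overline w)}=0$, while evaluating \eqref{tdfwak} at $z=w$ gives $e_w(w)=2\pi i$. Plugging these two values into \eqref{kroBe} collapses it to $(z-w)k(z,\overline w)=\gA{e_w}z$, as required. This normalization-sensitive computation is the main obstacle; the rest of the argument is bookkeeping.

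Once \eqref{spomvi0} is in hand, the spectral analysis is straightforward. A scalar $\lambda$ is an eigenvalue of $S_\tau$ iff there exists $\vE h{\lambda h}\in S_\tau$ with $h\ne 0$; writing this out via \eqref{spomvi0} gives the equation $(z-\lambda)h(z)=\alpha\varphi_\tau(z)$ for some $\alpha\in\C$. The choice $\alpha=0$ forces $h\equiv 0$, so one needs $\alpha\ne 0$, which in turn demands $\varphi_\tau(\lambda)=0$ so that the quotient remains entire. In that case $h^{(\tau)}(z)=\varphi_\tau(z)/(z-\lambda)$ lies in $\cA B(e_w)$ because $\varphi_\tau\in\Assoc\cA B(e_w)$ and a function in $\Assoc\cA B(e_w)$ vanishing at $\lambda$ has its quotient by $z-\lambda$ back in $\cA B(e_w)$ (a standard de Branges-space property). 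Reinserting $h^{(\tau)}$ into the decomposition confirms $\vE{h^{(\tau)}}{\lambda h^{(\tau)}}\in S_\tau$. Finally, maximal dissipativity gives $\C_-\subset\rho(S_\tau)$, and Corollary~\ref{lainesp0} ensures $\sigma(S_\tau)$ consists of eigenvalues of multiplicity at most one, yielding $\sigma(S_\tau)=\{\lambda\in\C_+\cup\R:\varphi_\tau(\lambda)=0\}$ together with the stated eigenfunctions.
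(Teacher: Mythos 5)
Your derivation of \eqref{spomvi0} is correct and is essentially the paper's own argument: both reduce to the algebraic identity $\tau(z-\overline w)k_w(z,w)-(z-w)k_w(z,\overline w)=\varphi_\tau(z)$ applied to the generic element of \eqref{eomxvi2}. For the second term the paper gets $(z-w)k_w(z,\overline w)=\gA{e_w}{z}$ in one line from \eqref{HbfidB} together with the kernel symmetry $\overline{k(\overline z,w)}=k(z,\overline w)$ established after \eqref{porkw}; your route via evaluating \eqref{kroBe} at the special values $e_w(w)=2\pi i$ and $\gA{e_w}{w}=0$ is a correct, slightly longer, substitute. The injectivity of $\tau\mapsto\varphi_\tau$ via $\varphi_\tau-\varphi_{\tau'}=(\tau-\tau')e_w$ is also fine.

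The spectral part is where you genuinely diverge, and it contains one real gap. To show that every zero $\lambda\in\C_+\cup\R$ of $\varphi_\tau$ is an eigenvalue you invoke the division property of $\Assoc\cA B(e_w)$; this is true, but it is an external fact not available in the paper, and it requires that not every element of $\cA B(e_w)$ vanish at $\lambda$ (guaranteed here because $k_w(\lambda,w)\neq0$) --- you should at least flag both points. The paper avoids the division property altogether by arguing contrapositively: if $\lambda\in\rho(S_\tau)$, applying the resolvent to $k_w(z,w)$ and using \eqref{spomvi0} gives $k_w(\lambda,w)=-\alpha\varphi_\tau(\lambda)$, whence $\varphi_\tau(\lambda)\neq0$. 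The actual gap is your final sentence: Corollary~\ref{lainesp0}(ii) controls only the spectral core $\hat\sigma(S_\tau)$ when $S_\tau$ is not selfadjoint, so at that stage you have only
\begin{equation*}
\hat\sigma(S_\tau)\subset\{\lambda\in\C_+\cup\R:\varphi_\tau(\lambda)=0\}\subset\sigma(S_\tau)\,,
\end{equation*}
and identifying the two ends is not automatic. You still need the paper's closing step: since $\varphi_\tau\not\equiv0$, its zero set is discrete, so $\hat\rho(S_\tau)$ is connected; it contains $\C_-$, where $\eta_\zeta(S_\tau)=0$ by maximality, hence the deficiency index vanishes on all of $\hat\rho(S_\tau)$ and $\rho(S_\tau)=\hat\rho(S_\tau)$, i.e.\ $\sigma(S_\tau)=\hat\sigma(S_\tau)$. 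Add that argument (or an equivalent one) and your proof is complete.
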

 \begin{proof}
   All dissipative extensions $S_{\tau}$ are given by
   \eqref{eomxvi2}. If $\vE{h(z)}{g(z)}\in S_\tau$, then there exist
   $\vE {f(z)}{zf(z)}\in S$ and $\alpha\in\C$ such that
 $$\vE{h(z)}{g(z)}=\vE{f(z)+\alpha(\tau k_ w (z, w )-k_ w (z,\overline w ))}{zf(z)+\alpha(\tau\overline w k_ w (z, w )-w k_ w (z,\overline w ))}\,.$$
It follows from \eqref{HbfidB} and \eqref{HbfidB0} that
\begin{align*}
g(z)
 &=zh(z)-\alpha[\tau(z-\overline w)k_ w (z, w )-(z-w)k_ w (z,\overline w )]\\
  &=zh(z)-\alpha[\tau e_{w}(z)-\gA {e_{w}}z]\\
 &=zh(z)-\alpha\varphi_{\tau}(z),
 \end{align*}
whence \eqref{spomvi0} follows.

By Corollary~\ref{lainesp0}, for every $\lambda\in\hat{\sigma}(S_{\tau})$
(which is a subset of $\C_{+}\cup\R$ by Theorem~\ref{codrih}),
$\Nk{\lambda}{S_{\tau}}$ is one-dimensional. Thus, in view of
\eqref{spomvi0},  $\vE {h_{\alpha}(z)}{\lambda
  h_{{\alpha}}(z)}\in\Nk{\lambda}{S_{\tau}}$ with
 $\alpha\neq0$ and
 \begin{align}\label{HbfidB1}
\lambda h_{\alpha}(z)=zh_{\alpha}(z)-\alpha\varphi_{\tau}(z),
 \end{align}
so $ \varphi_{\tau}(\lambda)=0$.

On the other hand, if $\lambda\notin\sigma(S_{\tau})$, then
$(S_{\tau}- \lambda I)^{-1}\in \mathcal{B}(\H)$. Thus $\vE {k_ w (z, w
  )}{g(z)}\in (S_{\tau}- \lambda I)^{-1}$ which implies
$\vE {g(z)}{k_ w (z, w )+ \lambda g(z)}\in S_{\tau}$. Using
again \eqref{spomvi0}, one arrives at
 \begin{align*}
 k_ w (z, w )+ \lambda g(z)=zg(z)-\alpha\varphi_{\tau}(z)\,.
 \end{align*}
Therefore, for $z=\lambda $, $ k_{w}(\lambda,w)=-\alpha\varphi_{\tau}(\lambda)$.
Since $k_ w (z, w )$ has no zeros in $\C_{+}\cup\R$, one concludes
that  $\lambda\notin  \{\lambda\in\C_{+}\cup\R\ :\ \varphi_{\tau}(\lambda)=0\}$.

We have proven that
\begin{equation*}
  \hat\sigma(S_{\tau})\subset\{\lambda\in\C_{+}\cup\R\ :\
\varphi_{\tau}(\lambda)=0\}\subset\sigma(S_{\tau})\,.
\end{equation*}
Since the zeros of a nonzero entire function is a discrete set, the
spectral core is a discrete set. This implies that
$\rho(S_\tau)=\hat\rho(S_\tau)$ because of the maximality of $S_\tau$.

The fact that $h^{(\tau)}(z)$ is an eigenfunction of $S_\tau$
corresponding to the eigenvalue $\lambda$ is a consequence of  \eqref{HbfidB1}.
  \end{proof}
\subsection*{Acknowledgements}
The authors thank the anonymous referee
whose pertinent comments led to an improved presentation of this work.
\small
\def\lfhook#1{\setbox0=\hbox{#1}{\ooalign{\hidewidth
  \lower1.5ex\hbox{'}\hidewidth\crcr\unhbox0}}} \def\cprime{$'$}

\normalsize
\end{document}